\keywords{weighted timed games, algorithmic game theory, timed automata, value computation}
\tikzset{every loop/.style={looseness=7}, >=latex}
\tikzset{every picture/.style={>=latex}}
\tikzstyle{PlayerMin}=[draw,circle,minimum size=7mm,inner sep=1.5pt]
\tikzstyle{PlayerMax}=[draw,rectangle,minimum size=7mm,inner sep=1.5pt]
\tikzstyle{target}=[circle, minimum size=1mm,inner sep=-2pt]
\tikzstyle{PlayerMinmin}=[draw,circle, minimum size=1.5mm,inner sep=0pt]
\tikzstyle{PlayerMaxmin}=[draw,rectangle,minimum size=1.5mm,inner sep=0pt]
\tikzstyle{leaf}=[draw,diamond,minimum size=7mm,inner sep=1.5pt]
\tikzstyle{strat} =[minimum width=0.1cm,line width=0.01mm,draw=none]
\tikzstyle{vecArrow} = [decoration={markings,mark=at position
\algnewcommand{\IfThenElse}[3]{
	\State \algorithmicif\ #1\ \algorithmicthen\ #2\ \algorithmicelse\ #3}
\definecolor{DarkOrchid}{HTML}{A4538A}
\definecolor{ForestGreen}{HTML}{009B55}
\definecolor{OrangeRed}{HTML}{ED135A}
\definecolor{DarkBlue}{HTML}{0071BC}
\definecolor{Orange}{HTML}{F58137}
\DeclareMathOperator*{\argmax}{\arg\!\max}
\DeclareMathOperator*{\arginf}{\arg\!\inf}
\DeclareMathOperator*{\argsup}{\arg\!\sup}
\newcommand{\arginfepsilon}[1][]{\ifthenelse{\equal{#1}{}}{{\arginf}^{\varepsilon}}{{\arginf}^{#1}}}
\newcommand{\argsupepsilon}[1][]{\ifthenelse{\equal{#1}{}}{{\argsup}^{\varepsilon}}{{\argsup}^{#1}}}
\newcommand{\R}{\mathbb{R}}
\newcommand{\Rpos}{\R_{\geq 0}}
\newcommand{\Rbar}{\overline\R}
\newcommand{\Z}{\mathbb{Z}}
\newcommand{\N}{\mathbb{N}}
\newcommand{\F}{\mathcal{F}}
\newcommand{\X}{\mathcal{X}}
\newcommand{\game}{\mathcal{G}}
\newcommand{\WTG}{WTG\xspace}
\newcommand{\Cl}{\ensuremath{\mathcal{C}}\xspace}
\newcommand{\clockbound}{\ensuremath{M}\xspace}
\newcommand{\val}{\ensuremath{\nu}\xspace}
\newcommand{\guard}{\ensuremath{I_g}\xspace}
\newcommand{\reset}{\ensuremath{R}\xspace}
\newcommand{\loc}{\ensuremath{\ell}\xspace}
\newcommand{\locinit}{\ensuremath{\loc_{\mathsf{i}}}\xspace}
\newcommand{\Locs}{\ensuremath{L}\xspace}
\newcommand{\LocsMin}{\ensuremath{\Locs_{\MinPl}}\xspace}
\newcommand{\LocsMax}{\ensuremath{\Locs_{\MaxPl}}\xspace}
\newcommand{\LocsT}{\ensuremath{\Locs_t}\xspace}
\newcommand{\LocsUrg}{\ensuremath{\Locs_u}\xspace}
\newcommand{\Trans}{\ensuremath{\Delta}\xspace}
\newcommand{\Transitions}{\Trans}
\newcommand{\trans}{\ensuremath{\delta}\xspace}
\newcommand{\transition}{\trans}
\newcommand{\Transr}{\Trans_{R}}
\newcommand{\QLocs}{Q}
\newcommand{\QLocsT}{\QLocs_t}
\newcommand{\QLocsMin}{\QLocs_{\MinPl}}
\newcommand{\QLocsMax}{\QLocs_{\MaxPl}}
\newcommand{\QLocsUrg}{\QLocs_u}
\newcommand{\qloc}{q}
\newcommand{\qlocinit}{\ensuremath{\qloc_{\mathsf{i}}}\xspace}
\newcommand{\uloc}{\rpath}
\newcommand{\uLocs}{\Locs'}
\newcommand{\uLocsMin}{\LocsMin'}
\newcommand{\uLocsMax}{\LocsMax'}
\newcommand{\uLocsT}{\LocsT'}
\newcommand{\uTrans}{\Trans'}
\newcommand{\utrans}{\trans'}
\newcommand{\uLocsUrg}{\LocsUrg'}
\newcommand{\upath}{\rpath_{\ugame}}
\newcommand{\weight}{\ensuremath{\mathsf{wt}}\xspace}
\newcommand{\weightfin}{\ensuremath{\weight_{t}}\xspace}
\newcommand{\weightT}{\weightfin}
\newcommand{\weightC}{\ensuremath{\weight_\Sigma}\xspace}
\newcommand{\weightP}{\ensuremath{\mathsf{P}}\xspace}
\newcommand{\CTransitions}{\overline\Trans}
\newcommand{\CTrans}{\CTransitions}
\newcommand{\Cweight}{\overline{\weight}\xspace}
\newcommand{\CweightT}{\overline{\weightT}}
\newcommand{\sem}[1]{\ensuremath{\llbracket #1 \rrbracket}\xspace}
\newcommand{\rweight}{\Cweight}
\newcommand{\uweight}{\weight'}
\newcommand{\uweightT}{\weightfin'}
\newcommand{\rpos}{\mathsf{t}_{\geq 0}}
\newcommand{\rneg}{\mathsf{t}_{< 0}}
\newcommand{\revil}{\mathsf{t}_{+\infty}}
\newcommand{\last}{\ensuremath{\mathsf{last}}\xspace}
\newcommand{\id}{\ensuremath{\mathsf{Id}}\xspace}
\newcommand{\proj}{\ensuremath{\mathsf{proj}}\xspace}
\newcommand{\Dproj}{\Trans\proj}
\newcommand{\Pproj}{\Pi\proj}
\newcommand{\play}{\ensuremath{\rho}\xspace}
\newcommand{\rplay}{\play_{\rgame}}
\newcommand{\uplay}{\play_{\ugame}}
\newcommand{\Play}{\ensuremath{\mathsf{Play}}\xspace}
\newcommand{\FPlays}{\ensuremath{\mathsf{FPlays}}\xspace}
\newcommand{\rFPlays}{\FPlays_{\rgame}}
\newcommand{\uFPlays}{\FPlays_{\ugame}}
\newcommand{\uFPlaysEtoile}{\uFPlays^*}
\newcommand{\FPlaysMin}{\FPlays_\MinPl}
\newcommand{\FPlaysMax}{\FPlays_\MaxPl}
\newcommand{\reg}{\ensuremath{I\xspace}}
\newcommand{\rgame}{\ensuremath{{\overline{\game}}}\xspace}
\newcommand{\ugame}{\ensuremath{\mathcal{U}}\xspace}
\newcommand{\rpath}{\ensuremath{\pi}\xspace} 
\newcommand{\ppath}{\ensuremath{\pi}\xspace} 
\newcommand{\PPaths}{\ensuremath{\mathsf{FPaths}}\xspace} 
\newcommand{\PPathsMin}{\PPaths_\MinPl}
\newcommand{\PPathsMax}{\PPaths_\MaxPl}
\newcommand{\MinPl}{\ensuremath{\mathsf{Min}}\xspace}
\newcommand{\MaxPl}{\ensuremath{\mathsf{Max}}\xspace}
\newcommand{\Value}{\mathsf{Val}}
\newcommand{\ValueG}{\Value_{\game}}
\newcommand{\ValueU}{\Value_{\ugame}}
\newcommand{\ValueRG}{\Value_{\rgame}}
\newcommand{\Strat}{\ensuremath{\mathsf{Strat}}\xspace}
\newcommand{\StratMin}[1][]{\ifthenelse{\equal{#1}{}}{\ensuremath{\Strat_{\MinPl}}}
	{\ensuremath{\Strat_{\MinPl,#1}}}\xspace}
\newcommand{\StratMax}[1][]{\ifthenelse{\equal{#1}{}}{\ensuremath{\Strat_{\MaxPl}}}
	{\ensuremath{\Strat_{\MaxPl,#1}}}\xspace}
\newcommand{\minstrategy}{\sigma} 
\newcommand{\maxstrategy}{\tau}
\newcommand{\rminstrategy}{\minstrategy_{\rgame}} 
\newcommand{\rmaxstrategy}{\maxstrategy_{\rgame}}
\newcommand{\uminstrategy}{\minstrategy_{\ugame}}
\newcommand{\umaxstrategy}{\maxstrategy_{\ugame}}
\newcommand{\maxstrategyopt}{\maxstrategy^*}
\newcommand{\clockx}{x}
\newcommand{\clocku}{u}
\newcommand{\urgent}{\mathsf u}
\newcommand{\delay}{t}
\newcommand{\PSPACE}{\ensuremath{\mathsf{PSPACE}}\xspace}
\newcommand{\EXPTIME}{\ensuremath{\mathsf{EXPTIME}}\xspace}
\newcommand{\reggame}{\ensuremath{\mathsf{Reg}_\game}}
\newcommand{\bornePseudoPoly}{\kappa}
\newcommand{\maxWeight}{W}
\newcommand{\maxWeightLoc}{W_{\mathsf{loc}}}
\newcommand{\maxWeightTrans}{W_{\mathsf{tr}}}
\newcommand{\maxWeightFinal}{W_{\mathsf{fin}}}
\newcommand{\maxPriceLoc}{\maxWeightLoc}
\newcommand{\moveto}[1]{\ensuremath{\xrightarrow{#1}}\xspace}
\newcommand{\movetoPath}[1]{\ensuremath{\xrightarrow{#1}}\xspace}
\newcommand{\minStratWin}{\Strat}
\begin{document}
	
	\title[One-Clock Weighted Timed Games with Arbitrary Weights]
	{Decidability of One-Clock Weighted Timed Games with Arbitrary Weights}

	\thanks{We thank the reviewers of the various versions of this article that helped us greatly improve the quality of the results and the writing. This work has been partly funded by the QuaSy project (ANR-23-CE48-0008), and the NCN grant 2019/35/B/ST6/02322.}	

	\author[B.~Monmege]{Benjamin Monmege\lmcsorcid{0000-0002-4717-9955}}[a]
\author[J.~Parreaux]{Julie Parreaux\lmcsorcid{0009-0009-2744-780X}}[b]
\author[P.-A.~Reynier]{Pierre-Alain Reynier\lmcsorcid{0009-0008-4345-704X}}[a]

\address{Aix Marseille Univ, CNRS, LIS, Marseille, France}	
\email{\{benjamin.monmege,pierre-alain.reynier\}@univ-amu.fr}  

\address{University of Warsaw, Poland}
\email{j.parreaux@uw.edu.pl}
	
	
	
	\begin{abstract}
		  Weighted Timed Games (\WTG for short) are the most widely used model
		  to describe controller synthesis problems involving real-time
		  issues. Unfortunately, they are notoriously difficult, and
		  undecidable in general. As a consequence, one-clock \WTG{s} have
		  attracted a lot of attention, especially because they are known to
		  be decidable when only non-negative weights are allowed. However,
		  when arbitrary weights are considered, despite several recent works,
		  their decidability status was still unknown. In this paper, we solve
		  this problem positively and show that the value function can be
		  computed in exponential time (if weights are encoded in unary).
	\end{abstract}

\maketitle

\section{Introduction}

The task of designing programs is becoming more and more involved.
Developing formal methods to ensure their correctness is thus an
important challenge. Programs sensitive to real-time allow one to
measure time elapsing in order to take decisions. The design of such
programs is a notoriously difficult problem because timing issues may
be intricate, and a posteriori debugging such issues is hard. The
model of timed automata~\cite{AlurDill-94} has been widely adopted as
a natural and convenient setting to describe real-time systems. This
model extends finite-state automata with finitely many real-valued
variables, called clocks, and transitions can check clocks against
lower/upper bounds and reset some clocks.

Model-checking aims at verifying whether a real-time system modelled
as a timed automaton satisfies some desirable property.
Instead of verifying a system, one can try to synthesise
one automatically. A successful approach, widely studied during the
last decade, is one of the two-player games. In this context, a player
represents the \emph{controller}, and an antagonistic player
represents the \emph{environment}. Being able to identify a winning
strategy of the controller, i.e.~a recipe on how to react to
uncontrollable actions of the environment, consists in the synthesis
of a system that is guaranteed to be correct by construction.

In the realm of real-time systems, timed automata have been extended
to timed games~\cite{AsarinMaler-99} by partitioning locations between 
the two players. In a turn-based fashion, the player that must play 
proposes a delay and a transition. The controller aims at satisfying 
some $\omega$-regular objective however the environment player behaves. 
Deciding the winner in such turn-based timed games has been shown to be 
\EXPTIME-complete~\cite{JurdzinskiTrivedi-07}, and a symbolic algorithm 
allowing tool development has been proposed~\cite{BehrmannCDFLL07}.

In numerous application domains, in addition to real-time, other
quantitative aspects have to be taken into account. For instance, one
could aim at minimising the energy used by the system. To address this
quantitative generalisation, weighted (aka priced) timed games (\WTG
for short) have been introduced \cite{BouyerCassezFleuryLarsen-04,BehrmannFehnkerHuneLarsenPetterssonRomijnVaandrager-01}. Locations and transitions are
equipped with integer weights, allowing one to define the accumulated
weight associated with a play.  In this context, one focuses on a
simple, yet natural, reachability objective: given some target
location, the controller, that we now call \MinPl, aims at ensuring
that it will be reached while minimising the accumulated weight. The
environment, that we now call \MaxPl, has the opposite objective:
avoid the target location or, if not possible, maximise the
accumulated weight. This allows one to define the value of the game as
the minimal weight \MinPl can guarantee.  The associated decision
problem asks whether this value is less than or equal to some given
threshold.

In the earliest studies of this problem,
some
semi-decision procedures have been proposed to approximate this value for \WTG{s} with
non-negative weights \cite{AlurBernadskyMadhusudan-04,BouyerCassezFleuryLarsen-04}. In addition, a subclass of strictly non-Zeno cost \WTG{s} for which their
algorithm terminates has been identified in \cite{BouyerCassezFleuryLarsen-04}. This approximation is motivated by the
undecidability of the problem, first shown
in~\cite{BrihayeBruyereRaskin-05}.  This restriction has recently been
lifted to \WTG{s} with arbitrary weights
in~\cite{BusattoGastonMonmegeReynier-17}.

An orthogonal research direction to recover decidability is to reduce
the number of clocks and more precisely to focus on \emph{one-clock
  \WTG{s}}.  Though restricted, a single clock is often sufficient for
modelling purposes.  When only non-negative weights are considered,
decidability has been proven in~\cite{BouyerLarsenMarkeyRasmussen-06}
and later improved 
in~\cite{Rutkowski-11,DueholmIbsenMilterson_13} to obtain exponential 
time algorithms.  Despite several
recent works, the decidability status of one-clock \WTG{s} with arbitrary
weights is still open. In the present paper, we show the decidability
of the value problem for this class. More precisely, we prove that the
value function can be computed in exponential time (if weights are
encoded in unary and not in binary).

Before exposing our approach, let us briefly recap the existing results.
Positive results obtained for one-clock \WTG{s} with non-negative weights are
based on a reduction to so-called \emph{simple \WTG}, where the underlying
timed automata contain no guard, no reset, and the clock value along with the
execution exactly spans the $[0,1]$ interval. In simple \WTG, it is possible to
compute with various techniques inspired by the paradigm of value iteration,
adapted by a computation of the whole value function starting at time $1$ and
going back in time until $0$
\cite{BouyerLarsenMarkeyRasmussen-06,Rutkowski-11,DueholmIbsenMilterson_13},
leading to an exponential-time algorithm. A \PSPACE lower-bound is also known
for related decision problems \cite{FearnleyIbsenJensenSavani-20}.


Recent works
extend the positive results of simple \WTG{s} to arbitrary
weights~\cite{BrihayeGeeraertHaddadLefaucheuxMonmege-15,brihaye2021oneclock},
yielding decidability of \emph{reset-acyclic} one-clock \WTG{s} with
arbitrary weights, with a
\emph{pseudo-polynomial time} complexity (that is polynomial if weights are 
encoded in unary). It is also explained how to extend the result to all \WTG{s} 
where no cyclic play containing a reset may have a negative weight
arbitrarily close to 0. Moreover, it is shown that $\MinPl$ needs memory to play (almost-)optimally, in a very structured way: $\MinPl$ uses \emph{switching strategies}, that are composed of two memoryless strategies, the second one being triggered after a given (pseudo-polynomial) number $\kappa$ of steps.

The crucial ingredient to obtain decidability for non-negative weights or reset-acyclic weighted timed games is to limit the number of reset transitions taken along a play. This is no longer possible in presence of cycles of negative weights containing a reset. There, \MinPl may need to iterate cycles for a number $\kappa$ of times that depends on the desired
precision $\varepsilon$ on the value (to play $\varepsilon$-optimally, $\MinPl$ needs to cycle $O(1/\varepsilon)$ times, see
Example~\ref{ex:value-example}). To rule out these annoying behaviours, we rely on three main ingredients:
\begin{itemize}
\item As there is a single clock, a cyclic path ending with a reset
corresponds to a cycle of configurations. We define the \emph{value} of such a cycle,
that allows us to identify which player may benefit from iterating it.
\item Using the classical region graph construction, we prove 
stronger properties on the value function (it is continuous on the closure
of region intervals). This allows us to prove that \MaxPl has an optimal memoryless 
strategy that avoids cycles whose value is negative (Section~\ref{sec:negative-cycles}).
\item We introduce in Section~\ref{sec:unfolding} a partial unfolding of 
the game, so as to obtain an acyclic \WTG, for which decidability
is known. To do so, we rely on the existence of (almost-)optimal switching strategies for $\MinPl$, allowing us to limit the depth of exploration. Also we
keep track of cycles encountered and handle them according to their value.
We transport the previous result on the existence of a "smart" optimal strategy
for \MaxPl in the context of this unfolding in Section~\ref{sec:negative-cycles-unfolding}. This allows us to show that the unfolding has the same value as the original \WTG in Section~\ref{sec:equal-values}.
\end{itemize}

We finally wrap up the proof in Section~\ref{sec:end-proof}. Along the way, we crucially need that the value function is obtained as a fixed point (indeed the greatest one) of an operator that was already used in many contributions before \cite{AlurBernadskyMadhusudan-04,BouyerCassezFleuryLarsen-04,
BusattoGastonMonmegeReynier-17}. We formally show this statement in Section~\ref{sec:Value-fixpoint}.  

This article is an extended version of the conference article \cite{MonPar22}, with respect to which we have incorporated the full proofs of the result (in particular Section~\ref{sec:Value-fixpoint} is entirely new), in a clarified way.

\section{Weighted timed games}
\label{sec:WTG}

\subsection{Definitions}
We only consider weighted timed games with a single clock, denoted by~$\clockx$. The valuation $\val$ of this clock is a non-negative real number, 
i.e.~$\val \in \Rpos$. On such a clock, transitions of the timed games will be 
able to check some interval constraints, called \emph{guards}, on the clock, 
i.e.~intervals $I$ of real values with closed or open bounds that are 
natural numbers (or $+\infty$). For every interval $I$ having finite bounds $a$ and $b$, we denote 
its closure by $\bar{I} = [a, b]$.

\begin{defi}
  A \emph{weighted timed game} (\WTG for short) is a tuple
  $\game= \langle\QLocsMin, \QLocsMax, \QLocsT, \QLocsUrg, \allowbreak \Transitions,
  \weight,\weightT\rangle$~with
  \begin{itemize}
  \item $\QLocs = \QLocsMin \uplus \QLocsMax \uplus \QLocsT$ a finite set of
    locations split between players $\MinPl$ and $\MaxPl$, and a set of 
    target locations;
  \item $\QLocsUrg \subseteq \QLocsMin \uplus \QLocsMax$ a set of
    \emph{urgent} locations where time cannot be delayed;
  \item $\Transitions$ a finite set of transitions each of the form
    $(\qloc, \reg, \reset, w, \qloc')$, with $\qloc$ and
    $\qloc'$ two locations (with $\qloc \notin \QLocsT$), $\reg$ an 
    interval, $w \in \Z$ the weight of the transition, and $\reset$
    being either~$\{\clockx\}$ when the clock must be reset, or $\emptyset$ 
    when it does not;
  \item $\weight \colon \QLocs \to \Z$ a weight function associating an
    integer weight with each location: for uniformisation of the
    notations, we extend this weight function to also associate with
    each transition the weight it contains,
    i.e.~$\weight\big((\qloc, \reg, \reset, w, \qloc')\big) = w$;
  \item and $\weightT\colon \QLocsT\times\Rpos\to \Rbar$ a function
    mapping each target configuration to a final weight, where
    $\Rbar = \R\cup\{-\infty, +\infty\}$.
  \end{itemize}
\end{defi}

We note that our definition is not usual. Indeed, the addition of 
final weights in \WTG{s} is not standard, 
but we use it in the process of solving those games: in any case, 
it is possible to simply map a given target location to the 
weight~$0$, allowing us to recover the standard definitions of the 
literature. The presence of urgent locations is also unusual: in a 
timed automaton with several clocks, urgency can be modelled with 
an additional clock $\clocku$ that is reset just before entering 
the urgent location and with constraints $\clocku \in [0, 0]$ on 
outgoing transitions. However, when limiting the number
of clocks to one, we regain modelling capabilities by allowing for
such urgent locations. The weight of an urgent location is never used
and will thus not be given in drawings: instead, urgent locations will
be displayed with $\urgent$ inside.

Given a \WTG $\game$, its semantics, denoted by $\sem{\game}$, is 
defined in terms of a game on an infinite transition system whose 
vertices are \emph{configurations} of $\game$, i.e.~the set of 
pairs~$(\qloc, \val) \in \QLocs \times \Rpos$. Configurations are
split into players according to the location $\qloc$, and a
configuration~$(\qloc, \val)$ is a target if $\qloc \in \QLocsT$. 
To encode the delay spent in the current 
location before firing a certain transition, edges linking vertices 
will be labelled by elements of $\Rpos \times \Transitions$. Formally, 
for every delay $\delay \in \Rpos$, transition
$\transition = (\qloc, \reg, \reset, w, \qloc') \in \Transitions$ and
valuation~$\val$, we add a labelled edge
$(\qloc,\val) \moveto{\delay, \transition} (\qloc',\val')$~if
\begin{itemize}
\item $\val+\delay \in \reg$;
\item $\val' = 0$ if $\reset = \{\clockx\}$, and 
  $\val' = \val + \delay$ otherwise;
\item and $\delay = 0$ if $\qloc \in \QLocsUrg$.
\end{itemize}
This edge is given a weight 
$\delay\times \weight(\qloc)+ \weight(\transition)$ 
taking into account discrete and continuous~weights.
Without loss of generality by applying classical 
techniques~\cite[Lemma~5]{BerPet98}, we  
suppose the absence of deadlocks except on target
locations, i.e.~for each location $\qloc\in \QLocs \backslash \QLocsT$
and valuation $\val$, there exist $\delay\in\Rpos$
and $\transition = (\qloc, \reg, \reset, w, \qloc') \in \Transitions$
such that $(\qloc, \val) \moveto{\delay, \transition} 
(\qloc', \val')$ and no transitions start from~$\QLocsT$.

\paragraph{Paths and plays}
We call \emph{path} a finite or infinite sequence of consecutive 
transitions $\qloc_0 \movetoPath{\transition_0} \qloc_1
\movetoPath{\transition_1} \cdots$ where $\trans_0, \trans_1, \ldots\in \Transitions$ and $\qloc_0, \qloc_1, \ldots\in\QLocs$. We 
sometimes denote $\rpath_1 \cdot \rpath_2$ the concatenation of a finite path
$\rpath_1$ ending in location $q$ and another path $\rpath_2$ starting in location $q$. 
We call \emph{play} a finite or infinite sequence of edges in the 
semantics of the game $(\qloc_0,\val_0) 
\moveto{\delay_0,\transition_0} (\qloc_1,\val_1)
\moveto{\delay_1, \transition_1} (\qloc_2,\val_2) \cdots$. 
A play is said to \emph{follow} a path if both use the same sequence of 
transitions. We let $\PPaths$ (resp.~$\FPlays$) be the set of all 
finite paths (resp.~plays).

Given a finite path $\ppath$ or a finite play $\play$, we let $|\ppath|$ 
or $|\play|$ its \emph{length} which is its number of transitions (or edges), 
and $|\ppath|_\trans$ or $|\play|_\trans$ the number of occurrences of 
a given transition $\trans$ in $\ppath$ (or $\play$). More generally, for 
a play~$\play$ and a set $A$ of transitions, we let $|\play|_A$ be 
the number of occurrences of all transitions from $A$ in $\play$, i.e.~$|\play|_A = \sum_{\trans \in A} |\play|_\trans$.  We also let 
$\last(\ppath)$ and $\last(\play)$ be the last location or configuration. 
Finally, $\PPathsMax$ (resp.~$\PPathsMin$) and $\FPlaysMax$
(resp.~$\FPlaysMin$) denote the subset of finite paths or plays 
whose last element belong to player \MaxPl (resp.~\MinPl).

A finite play $\play= (\qloc_0, \val_0) \moveto{\delay_0, \trans_0} (\qloc_1, \val_1) 
\cdots (\qloc_k, \val_k)$ can be associated with the cumulated weight of the edges it traverses: 
\begin{displaymath}
\weightC(\play) = \sum_{i= 0}^{k-1} \big(\weight(\loc_i) \times t_i +
\weight(\trans_i)\big) \,.
\end{displaymath} 
A \emph{maximal play} $\play$ (either infinite or trapped in a 
deadlock that is necessarily a target configuration) is associated 
with a \emph{payoff} $\weightP(\play)$ as follows: the payoff of an 
infinite play (meaning that it never visits a target location) is $+\infty$, 
while the payoff of a finite play, thus ending in a target 
configuration~$(\qloc, \val)$, is $\weightC(\play) + \weightT(\qloc,\val)$. 
By~\cite{BehrmannFehnkerHuneLarsenPetterssonRomijnVaandrager-01}, the set of weights of plays following a given 
path is known to be an interval of values. Moreover, when all the guards 
along the path are closed intervals, this interval has closed bounds.

A \emph{cyclic path} is a finite path that starts and ends in the same location. 
A \emph{cyclic play} is a finite play that starts and ends in the same 
configuration: it necessarily follows a cyclic path, but the reverse might not be 
true since some non-cyclic plays can follow a cyclic path (if they do not end 
in the same valuation as the one in which they start).

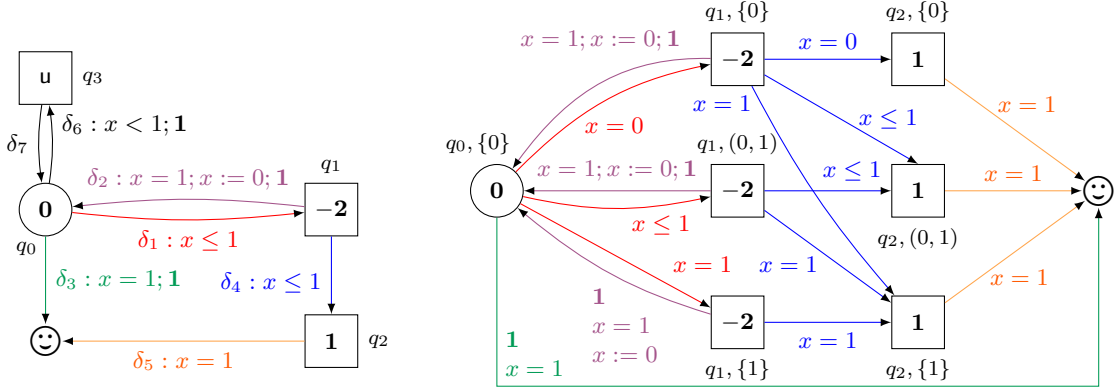
\begin{figure}
	\centering
	\begin{tikzpicture}[xscale=.8,every node/.style={font=\footnotesize}, 
	every label/.style={font=\scriptsize}]
    \node[PlayerMax, label={right:$\qloc_3$}] at (0, 1.75)  (s6) {$\urgent$};
	\node[PlayerMin, label={[xshift=-.25cm, yshift=-1.1cm]$\qloc_0$}] at (0, 0)  (s0) {$\mathbf{0}$};
	\node[PlayerMax, label={above:$\qloc_1$}] at (4.75, 0) (s2) {$\mathbf{-2}$};
	\node[PlayerMax, label={right:$\qloc_2$}] at (4.75, -1.75) (s3) {$\mathbf{1}$};
	\node[target] at (0, -1.75) (s4) {$\LARGE \smiley$};
	
	\draw[->]
    (s0) edge[bend right=10] node[right, near end] {$\trans_6: x< 1; \mathbf{1}$} (s6)
    (s6) edge[bend right=10] node[left] {$\trans_7$} (s0)
	(s0) edge[bend right=05, draw=red] node[below] {
		\textcolor{red}{$\trans_1 : x \leq 1$}} (s2)
	(s2) edge[bend right=05, draw=DarkOrchid] node[above] {
		\textcolor{DarkOrchid}{$\trans_2 : x = 1; x:=0; \mathbf{1}$}} (s0)
	(s0) edge[draw=ForestGreen] node[right] {
		\textcolor{ForestGreen}{$\trans_3 : x=1; \mathbf{1}$}} (s4)
	(s2) edge[draw=blue] node[left,yshift=-1mm] {\textcolor{blue}{$\trans_4 : x \leq 1$}} (s3)
	(s3) edge[draw=orange!70] node[below] {\textcolor{orange!70!red}{$\trans_5 : x = 1$}} (s4)
	;
	
	\begin{scope}[xshift=7.5cm,yshift=.25cm]
	\node[PlayerMin, label={[xshift=-.25cm]$\qloc_0,\{0\}$}] at (0, 0)  (s0) {$\mathbf{0}$};
	\node[PlayerMax, label={above:$\qloc_1,\{0\}$}] at (4, 1.75) (s20) {$\mathbf{-2}$};
	\node[PlayerMax, label={above:$\qloc_1,(0,1)$}] at (4, 0) (s21) {$\mathbf{-2}$};
	\node[PlayerMax, label={below:$\qloc_1,\{1\}$}] at (4, -1.75) (s22) {$\mathbf{-2}$};
	\node[PlayerMax, label={above:$\qloc_2,\{0\}$}] at (7, 1.75) (s30) {$\mathbf{1}$};
	\node[PlayerMax, label={below:$\qloc_2,(0,1)$}] at (7, 0) (s31) {$\mathbf{1}$};
	\node[PlayerMax, label={below:$\qloc_2,\{1\}$}] at (7, -1.75) (s32) {$\mathbf{1}$};
	\node[target] at (10, 0) (s4) {$\LARGE \smiley$};
	
	\node[coordinate] at (5, -2.6) (tmp1) {};
	\draw[->, draw=ForestGreen] 
	(s0) |- node[right,yshift=.4cm,xshift=-.2cm] {\textcolor{ForestGreen}{$\begin{array}{l}
			\mathbf{1} \\
			x=1 
			\end{array}$}} (tmp1) -| (s4) ;
	
	\draw[->]
	(s0) edge[bend left=15, draw=red] 
		node[below,yshift=-.1cm,xshift=.1cm] {\textcolor{red}{$x = 0$}} (s20)
	(s0) edge[bend right=10, draw=red] node[below,near end] {\textcolor{red}{$x\leq 1$}} (s21)
	(s0) edge[draw=red] 
		node[above right, near end] {\textcolor{red}{$x = 1$}} (s22)
		
	(s20) edge[bend right=25, draw=DarkOrchid] node[above,yshift=3mm] {
		\textcolor{DarkOrchid}{$x = 1; x:=0; \mathbf{1}$}} (s0)
	(s21) edge[draw=DarkOrchid] node[above] {\textcolor{DarkOrchid}{$x = 1; x:=0; \mathbf{1}$}} (s0)
	(s22) edge[bend left=10,draw=DarkOrchid] 
		node[below,xshift=.2cm] {\textcolor{DarkOrchid}{$\begin{array}{l}
			\mathbf{1}\\ x = 1 \\ x:=0
			\end{array}$}} (s0)
	
	(s20) edge[draw=blue] node[above] {\textcolor{blue}{$x = 0$}} (s30)
	(s20) edge[draw=blue] node[right,xshift=.1cm] {\textcolor{blue}{$x \leq 1$}} (s31.north)
	(s20) edge[bend right=7, draw=blue] 
		node[very near start,left,yshift=.2cm,xshift=-.1cm] {\textcolor{blue}{$x = 1$}} (s32)
	(s21) edge[draw=blue] node[above, near end,xshift=-1mm] {\textcolor{blue}{$x \leq 1$}} (s31)
	(s21) edge[draw=blue] node[left,yshift=-.1cm] {\textcolor{blue}{$x = 1$}} (s32)
	(s22) edge[draw=blue] node[below] {\textcolor{blue}{$x = 1$}} (s32)
	
	(s30) edge[draw=orange!70] 
		node[right, near start, xshift=.1cm] {\textcolor{orange!70!red}{$x = 1$}} (s4)
	(s31) edge[draw=orange!70] node[above] {\textcolor{orange!70!red}{$x = 1$}} (s4)
	(s32) edge[draw=orange!70] node[right, near start] {\textcolor{orange!70!red}{$x = 1$}} (s4)
	;
	\end{scope}
	\end{tikzpicture}
	\caption{On the left, a \WTG with a cyclic path of weight $[-1, 1]$ 
		containing a reset. Its weights are depicted in bold font, and 
		the missing ones are $0$. Locations belonging to \MinPl (resp. \MaxPl) 
		are depicted by circles (resp. squares). Transitions that contain the 
		reset of $\clockx$ are labelled with $\clockx := 0$. The intervals of guards are described, as classically done in timed automata, via equality or inequality constraints on the unique clock $x$.  The target location 
		is ${\Large\smiley}$, whose final weight function is zero. Location 
		$\qloc_3$ is urgent. On the right, the restriction of its closure to 
		locations $\qloc_0, \qloc_1, \qloc_2$ and~${\Large\smiley}$.}
	\label{fig:Ex_transition_2valeurs}
\end{figure}

\begin{exa}
  \label{ex:weight-cycle}
  Plays that follow the cyclic path $\rpath = \qloc_0 \movetoPath{\trans_1} \qloc_1 
  \movetoPath{\trans_2} \qloc_0$ of the WTG
  depicted on the left in \figurename{~\ref{fig:Ex_transition_2valeurs}} have 
  weight between $-1$ (with the play $(\qloc_0, 0) \moveto{0, \trans_1} (\qloc_1, 0) 
  \moveto{1, \trans_2} (\qloc_0, 0)$) and $1$ (with the play $(\qloc_0, 0) 
  \moveto{1, \trans_1} (\qloc_1, 1) \moveto{0, \trans_2} (\qloc_0, 0)$), so 
  $\weightC(\rpath) = [-1, 1]$. Another cyclic path is $\rpath' = \qloc_0 
  \movetoPath{\trans_6} \qloc_3 \movetoPath{\trans_7} \qloc_0$ which goes via an 
  urgent location. In particular, all plays that follow this one are of the form
  $(\qloc_0,\val) \moveto{\delay,\trans_6} (\qloc_3,\val+\delay) 
  \moveto{0,\trans_7} (\qloc_0,\val+\delay)$ with $\val$ and $\val+\delay$ 
  less than $1$: they all have weight $1$.
\end{exa}

\paragraph{Strategies and value}
A \emph{strategy} gives a set of choices to one of the players.
A strategy of $\MinPl$ is a function
$\minstrategy \colon \FPlaysMin\to \Rpos \times \Transitions$ mapping
each finite play $\play$ whose last configuration belongs to $\MinPl$
to a pair $(\delay, \transition)$ of delay and transition, such that
the play $\play$ can be extended by an edge labelled with
$(\delay, \transition)$. A play $\play$ is said to be 
\emph{conforming} to a strategy $\minstrategy$ if the choice made in
$\play$ at each location of $\MinPl$ is the one prescribed by
 $\minstrategy$. Moreover, a finite path $\rpath$ is said to be 
 conforming to a strategy $\minstrategy$ if there exists a finite play
following~$\rpath$ that is conforming to $\minstrategy$.
Similar definitions hold for strategies
$\maxstrategy$ of $\MaxPl$. We let $\StratMin[\game]$ (resp.,
$\StratMax[\game]$) be the set of strategies of \MinPl (resp.,
\MaxPl) in the game $\game$, or simply $\StratMin$ and $\StratMax$ if
the game is clear from the context: we will always use letters
$\minstrategy$ and $\maxstrategy$ to differentiate from strategies of
$\MinPl$ and $\MaxPl$. 

A strategy is said to be \emph{memoryless} if it only depends on the
last configuration of the plays. More formally, \MaxPl's strategy
$\maxstrategy$ is memoryless if for all plays $\play$ and $\play'$
such that $\last(\play)=\last(\play')$, we have
$\maxstrategy(\play)=\maxstrategy(\play')$.

After both players have chosen their strategies $\minstrategy$ and
$\maxstrategy$, each initial configuration $(\qloc,\val)$ gives rise
to a unique maximal play that we denote by
$\Play((\qloc,\val),\minstrategy,\maxstrategy)$. The \emph{value} of
the configuration $(\qloc,\val)$ is then obtained by letting players
choose their strategies as they want, first $\MinPl$ and then
$\MaxPl$, or vice versa since \WTG{s} are known to be
determined~\cite{brihaye2021oneclock}:
\begin{displaymath}
  \Value_\game(\qloc,\val) 
  = \sup_\maxstrategy \inf_\minstrategy
          \weightP(\Play((\qloc,\val),\minstrategy,\maxstrategy))
  = \inf_\minstrategy \sup_\maxstrategy
          \weightP(\Play((\qloc,\val),\minstrategy,\maxstrategy)) \,.
\end{displaymath}

The value of a strategy $\minstrategy$ of $\MinPl$ (symmetric
definitions can be given for strategies $\maxstrategy$ of $\MaxPl$) is
defined as:
\begin{displaymath}
\Value_{\game}^{\minstrategy}(\qloc,\val) = \sup_\maxstrategy
  \weightP(\Play((\qloc,\val),\minstrategy,\maxstrategy)) \,. 
\end{displaymath}
Then, a strategy $\minstrategy^*$ of $\MinPl$ is \emph{optimal} if, 
for all initial configurations $(\qloc,\val)$,
\begin{displaymath}
\Value_{\game}^{\minstrategy^*}(\qloc,\val)\leq \Value_{\game}(\qloc,\val) \,. 
\end{displaymath}
Because of the infinite nature of the timed games, optimal strategies may not
exist: for example, a player may want to let time elapse as much as
possible, but with a delay $\delay<1$ because of a strict guard,
preventing them to obtain the optimal value. We will see in
Example~\ref{example:Max-not-optimal} that this situation can even
happen when all guards contain only \emph{closed} comparisons. We
naturally extend the definition to \emph{almost-optimal strategies},
taking into account small possible errors: we say that a
strategy~$\minstrategy^*$ of $\MinPl$ is\emph{ $\varepsilon$-optimal} 
if, for all initial configurations~$(\qloc,\val)$,
\begin{displaymath}
\Value_{\game}^{\minstrategy^*}(\qloc,\val)\leq
\Value_{\game}(\qloc,\val)+\varepsilon \,.
\end{displaymath}

\begin{exa}\label{ex:value-example}
  We have seen, in Example~\ref{ex:weight-cycle}, that in $q_0$ 
  (on the left in \figurename~\ref{fig:Ex_transition_2valeurs}), 
  $\MinPl$ has no interest in following the cycle
  $\qloc_0 \movetoPath{\trans_6} \qloc_3 \movetoPath{\trans_7} \qloc_0$ 
  since all plays following it have weight $1$. Jumping directly to the target location
  via~$\trans_3$ leads to a weight of $1$. But $\MinPl$ can do better:
  from valuation~$0$, by jumping to $q_1$ after a delay of $t\leq 1$,
  it leaves a choice to $\MaxPl$ to either jump to $q_2$ and the
  target leading to a total weight of $1-t$, or to loop back in $q_0$
  thus closing a cyclic play of weight $-2(1-t)+1=2t-1$. If $t$ is
  chosen too close to $1$, the value of the cycle is greater than $1$,
  and $\MaxPl$ will benefit from it by increasing the total weight. If
  $t$ is chosen smaller than~$1/2$, the weight of the cycle is
  negative, and $\MaxPl$ will prefer to go to the target to obtain a
  weight $1-t$ close to $1$, not very beneficial to $\MinPl$.  Thus,
  $\MinPl$ prefers to play just above~$1/2$, for example at
  $1/2+\varepsilon$. In this case, $\MaxPl$ will choose to go
  to the target with a total weight of $1/2+\varepsilon$. The value of
  the game, in configuration $(q_0,0)$, is thus $\Value_{\game}(q_0,0) =
  1/2$. Not only $\MinPl$ does not have an optimal strategy (but only
  $\varepsilon$-optimal ones, for every $\varepsilon > 0$), but needs
  memory to play $\varepsilon$-optimally, since $\MinPl$ cannot play
  \emph{ad libitum} transition $\trans_2$ with a delay
  $1/2-\varepsilon$: in this case, $\MaxPl$ would prefer staying in
  the cycle, thus avoiding the target. Thus, $\MinPl$ will play the
  transition $\trans_1$ at least $1/4\varepsilon$ times  so that the
  cumulated weight of all the cycles is below $-1/2$, in which case
  $\MinPl$ can safely use transition $\trans_1$ still earning $1/2$ 
  in~total.
\end{exa}

\paragraph{Clock bounding}
Seminal works in \WTG{s} 
\cite{AlurBernadskyMadhusudan-04,BouyerCassezFleuryLarsen-04} have 
assumed that clocks are bounded. This is known to be without loss 
of generality for (weighted) timed 
automata~\cite[Theorem~2]{BehrmannFehnkerHuneLarsenPetterssonRomijnVaandrager-01}: 
it suffices to replace transitions with unbounded delays with 
self-loop transitions periodically resetting the clock. We do not 
know if it is the case for the \WTG{s} defined above since this 
technique cannot be directly applied. This would give too much 
power to player \MaxPl that would then be allowed to loop in a 
location (and thus avoid the target) where an unbounded delay could 
originally be taken before going to the target. 
In~\cite{BouyerCassezFleuryLarsen-04}, since the \WTG{s} are concurrent, 
this new power of \MaxPl is compensated by always giving \MinPl a chance 
to move outside of such a situation. Trying to detect and avoid such 
situations in our turn-based case seems difficult in the presence of 
negative weights since the opportunities of \MaxPl crucially depend 
on the configurations of value $-\infty$ that \MinPl could control 
afterwards: the problem of detecting such configurations (for all 
classes of \WTG{s}) is undecidable~\cite[Prop.~9.2]{Bus19}, which is 
additional evidence to motivate the decision to focus only on bounded 
\WTG{s}. We thus suppose from now on that the clock is bounded by 
a constant $M \in \N$, i.e.~every transition of the \WTG is equipped 
with the interval $[0, M]$.

\paragraph{Regions}
In the following, we rely on the crucial notion of regions introduced in 
the seminal work on timed automata~\cite{AlurDill-94} to obtain a partition of the 
set of valuations $[0, M]$. To reduce the number of regions concerning the more 
usual one of~\cite{AlurDill-94} in the case of a single clock, we define regions by 
a construction inspired by Laroussinie, Markey, and Schnoebelen~\cite{LarMar04}. 
Formally, we call regions of $\game$ the set 
\begin{displaymath}
\reggame = \{(M_i,M_{i+1})\mid 0\leq i\leq k-1\} \cup \{ \{M_i\}\mid 0\leq i\leq k\}
\end{displaymath} 
where $M_0 = 0 < M_1 < \cdots < M_k$ are all the endpoints of the intervals appearing 
in the guards of $\game$ (to which we add $0$ if needed). As usual, if $\reg$ is a 
region, then the time successor of valuations in $\reg$ forms a finite union of regions, 
and the reset $\reg[\clockx := 0] = \{0\}$ is also a region. A region $\reg'$ is 
said to be a \emph{time successor} of the region $\reg$ if there exists 
$\val \in \reg$, $\val' \in \reg'$, and $\delay > 0$ such that $\val' = \val + \delay$. 


\paragraph{Final weights}
We also assume that the final weight functions satisfy a sufficient property
ensuring that they can be encoded in finite space: we require final weight functions 
to be piecewise affine with a finite number of pieces and continuous on each
region. More precisely, we assume that cutpoints (the value of the clock in-between two affine pieces) and coefficients are 
rational and given in binary. 

We let $\maxWeightLoc$, $\maxWeightTrans$ and $\maxWeightFinal$ be the
maximum absolute value of weights of locations, transitions and final
functions, i.e.
\begin{displaymath}
\maxWeightLoc = \max_{\qloc\in\QLocs}|\weight(\qloc)| 
\qquad 
\maxWeightTrans = \max_{\trans\in\Transitions}|\weight(\trans)| 
\qquad
\maxWeightFinal = \sup_{\substack{\qloc\in\QLocsT \\
	\weightT(\qloc,\cdot) \notin \{+\infty,-\infty\}}}
\sup_{\val} |\weightT(\qloc,\val)|
\end{displaymath}
We also let $\maxWeight$ be the maximum of $\maxWeightLoc$, 
$\maxWeightTrans$, and $\maxWeightFinal$. 

\subsection{Fixpoint characterisation of the value}

The value function $\Value_\game\colon \QLocs\times \Rpos \to \Rbar$ of \WTG{s} can sometimes be characterised as a fixpoint (and even the greatest fixpoint) of some operator $\F$ defined as follows: for
  all configurations~$(\qloc, \val)$ and all mappings
  $X \colon \QLocs \times \Rpos \to \Rbar$, we let: 
  \begin{displaymath}
  \F(X)(\qloc, \val) =
  \begin{cases}
	\weightT(\qloc, \val) & \text{if } \qloc \in \QLocsT \\
    \inf_{(\qloc,\val) \moveto{\delay, \trans} (\qloc',\val')} 
	\big(\weight(\trans) + \delay\, \weight(\qloc) + X(\qloc', \val')\big) 
                          & \text{if } \qloc \in \QLocsMin \\
    \sup_{(\qloc,\val) \moveto{\delay, \trans} (\qloc',\val')} 
	\big(\weight(\trans) + \delay\, \weight(\qloc) + X(\qloc', \val')\big) 
                          & \text{if } \qloc \in \QLocsMax 
  \end{cases}
  \end{displaymath}

This operator is the basis of the decidability result for (many-clocks) \WTG{s} with non-negative weights with some divergence conditions on the weight of cycles \cite{BouyerCassezFleuryLarsen-04}, since the value iteration algorithm that iterates the operator over an initial well-chosen function is supposed to converge (in finite time) towards the desired value. 
However the proof given in \cite{BouyerCassezFleuryLarsen-04,Bou16} of the claim that $\Value_\game$ is indeed the greatest fixpoint of $\F$ contains flaws since they suppose that the limit of the iterates of $\F$ is a continuous function of $\Rpos$ to prove 
that this limit is the value function. Since the limit of a sequence 
of continuous functions may not be continuous, this fact needs to 
be proven. 

Fortunately, the necessary claim can be recovered in the case of such \emph{divergent} \WTG{s} (at least in the turn-based case that we consider in this article, and not necessarily in the concurrent case studied in \cite{BouyerCassezFleuryLarsen-04}) even in presence of both negative and non-negative weights, as can be recovered from \cite{BusattoGastonMonmegeReynier-18}.
  	
Moreover, in the non-divergent case, with negative weights in \WTG{s}, the 
continuity of the value function is indeed not 
guaranteed~\cite[Remark~3.3]{brihaye2021oneclock}. In particular, this implies that
the proof (even if we somehow obtain the continuity of the limit) can not a priori
be adapted to all \WTG{s} with negative weights. 
  	
In our specific one-clock case, we are able to correct the proof of \cite{BouyerCassezFleuryLarsen-04,Bou16}. 
As the proof is long and technical, and
orthogonal to the rest of the paper, we defer it to Section~\ref{sec:Value-fixpoint}.
We obtain there the following result: 

\begin{thmC}\label{thm:Value-fixpoint}\label{theo:fixedpoint}
The value function of all (one-clock) \WTG{s} is the greatest fixpoint of the operator $\F$. 
\end{thmC}

\subsection{Closure}
A game $\game$ can be populated with the region information without loss of 
generality, building what is called the \emph{region game}
in~\cite{BusattoGastonMonmegeReynier-17}, the addition of the classical region automaton with information on the owner of locations inherited from $\game$. 
To solve one-clock \WTG{s} without reset transitions in~\cite{brihaye2021oneclock}, 
authors do not use the usual region game. Indeed, their method is based 
on a construction that consists in not only enhancing the locations with regions 
(as the region game) but also closing all guards while preserving the value 
of the original game. 

\begin{defi}
  The \emph{closure} of a \WTG~$\game$ is the \WTG
  $\rgame = \langle \LocsMin, \LocsMax, \LocsT, \LocsUrg, \CTransitions, 
  \Cweight, \CweightT \rangle$ where:
  \begin{itemize}
  \item
    $\Locs = \LocsMin \uplus \LocsMax \uplus \LocsT$ with 
    $\LocsMin = \QLocsMin \times \reggame$, 
    $\LocsMax = \QLocsMax \times \reggame$, 
    $\LocsT = \QLocsT \times \reggame$, and 
    $\LocsUrg = \QLocsUrg \times \reggame$;
    
  \item for all $(\qloc, \reg) \in \Locs$, 
    $\big((\qloc,I), \overline{\guard \cap \reg''}, \reset, w, (\qloc',\reg')\big) 
    \in \CTransitions$ if and only if there exist a transition 
    $(\qloc, \guard, \reset, w, \qloc') \in \Transitions$, and a 
    region $\reg''$ such that $\guard \cap \reg'' \neq \emptyset$, the lower 
    bound of $\reg''$ is a time successor of $\reg$, and $\reg'$ is equal to 
    $\reg''$ if $\reset=\emptyset$ and to $\{0\}$ otherwise: 
    $\overline{\guard \cap \reg''}$ stands for the topological 
    closure of the non-empty interval $\guard \cap \reg''$;
    
  \item for all $(\qloc,I)$, we have
    $\Cweight(\qloc,I) = \weight(\qloc)$;
    
  \item for all $(\qloc,I)\in \LocsT$, for $\val\in I$,
    $\CweightT((\qloc,I),\val)=\weightT(\qloc,\val)$ and extend
    $\val\mapsto\CweightT((\qloc,I),\val)$ by continuity on $\bar{I}$, 
    the closure of the interval $I$. We may also let
    $\CweightT((\qloc,I),\val)=+\infty$ for all
    $\val\notin\overline I$, even though we will never use this in the
    following.
  \end{itemize}
\end{defi}

An example of closure is given in \figurename~\ref{fig:Ex_transition_2valeurs}, which depicts the 
closure (right) of the \WTG (left) restricted to locations 
$\qloc_0, \qloc_1, \qloc_2$, and ${\Large\smiley}$ (we have seen that 
$\qloc_3$ is anyway useless). 

The semantic of the closure is 
obtained by concentrating on the following set of configurations which is an invariant of the 
closure (i.e.~starting from such configuration fulfilling the invariant, we 
can only reach configurations fulfilling the~invariant):
\begin{itemize}
\item configurations $((\qloc,\{M_k\}),M_k)$; 
\item and configurations $((\qloc,(M_k,M_{k+1})),\val)$ with
  $\val\in [M_k, \allowbreak M_{k+1}]$ (and not only in $(M_k,M_{k+1})$ as one
  might expect in the region game).
\end{itemize}

The closure of the guards allows players to mimic a move in $\game$
``arbitrarily close'' to $M_k$ (or $M_{k+1}$) in $(M_k,M_{k+1})$ to be simulated
by jumping on $M_k$ (or $M_{k+1}$) still staying in the region $(M_k,M_{k+1})$. In 
particular, it is shown in \cite{brihaye2021oneclock}
that we can transform an $\varepsilon$-optimal strategy of $\rgame$
into an $\varepsilon'$-optimal strategy of $\game$ with
$\varepsilon' < 2 \varepsilon$ and vice-versa. Thus, the closure of a
\WTG preserves its value.

\begin{lemC}[\cite{brihaye2021oneclock}]
  \label{lem:region_ptg}
  For all \WTG{s} $\game$, $(\qloc,I)\in \QLocs\times\reggame$ and
  $\val\in I$,
  \begin{displaymath}
  \Value_\game(\qloc,\val)=\Value_{\rgame}((\qloc,I),\val) \,.
  \end{displaymath}
\end{lemC}

Moreover, the closure construction also makes the value function 
more manageable for our purpose. Indeed, as shown in
\cite{brihaye2021oneclock}, the mapping
$\val \mapsto \Value_{\game}(\loc,\val)$ is continuous over all
regions, but there might be discontinuities at the borders of the
regions. The closure construction clears this issue by softening the
borders of each region independently:  we show the continuity of the value 
function on each closed region (and not only on the regions) in the closure game 
by following a very similar sketch as the one of
\cite[Theorem~3.2]{brihaye2021oneclock}. The completed proof is given in
Appendix~\ref{app:proof-val_continue}.

\begin{restatable}{lem}{lemValContinue}
  \label{lem:val_continue}
  For all \WTG{s} $\game$ and $(\qloc,I)\in \QLocs\times\reggame$, the
  mapping $\val\mapsto \Value_{\rgame}((\qloc,I),\val)$ is continuous
  over $\overline I$.
\end{restatable}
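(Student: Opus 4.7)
The plan is to follow the proof scheme of \cite[Theorem~3.2]{brihaye2021oneclock} by exploiting two structural benefits of the closure $\rgame$: all guards are closed intervals $\overline{\guard \cap \reg''}$, and the region $I$ is encoded in the location. Together these ensure that any transition usable at some $\val' \in \overline I$ remains usable at any smaller $\val \in \overline I$ by increasing the initial delay by $\val' - \val$.

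I will establish the Lipschitz-type bound
\begin{displaymath}
|\Value_{\rgame}((\qloc, I), \val) - \Value_{\rgame}((\qloc, I), \val')| \leq \maxWeightLoc \cdot |\val - \val'|
\end{displaymath}
by a strategy-transport argument. Fix $\val \leq \val'$ in $\overline I$. Given any strategy $\strategy$ at $((\qloc, I), \val')$, I define a strategy $\strategy'$ at $((\qloc, I), \val)$ by adjusting only the first action: if $\strategy$ prescribes $(\delay, \trans)$, then $\strategy'$ plays $(\delay + (\val' - \val), \trans)$. This adjusted delay is non-negative, and since $\trans$'s closed guard contains $\val' + \delay = \val + (\delay + \val' - \val)$, the transition is still fireable. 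After this single adjustment the two plays are in identical configurations, so $\strategy'$ replays $\strategy$ verbatim; the only weight discrepancy is the first location's contribution $\weight(\qloc) \cdot (\val' - \val)$, of magnitude at most $\maxWeightLoc \cdot |\val - \val'|$. Instantiating with $\strategy$ an $\varepsilon$-optimal strategy of \MinPl at $\val'$ yields $\Value_\rgame((\qloc, I), \val) \leq \Value_\rgame((\qloc, I), \val') + \varepsilon + \maxWeightLoc |\val - \val'|$, while instantiating with an $\varepsilon$-optimal strategy of \MaxPl at $\val'$ gives the matching lower bound. Letting $\varepsilon \to 0$ yields the Lipschitz bound, hence continuity on $\overline I$. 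Target locations are handled directly, since $\CweightT((\qloc,I),\cdot)$ is continuous on $\overline I$ by construction.

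The main obstacle is the case where $\qloc$ is urgent, in which the initial delay is forced to be $0$ and thus cannot absorb the shift $\val' - \val$: the two plays leave $\qloc$ with distinct valuations and do not synchronise after one step. To handle this, one iterates the transport over the (necessarily finite, as an urgent cycle would be a zero-time infinite loop that can be pruned without affecting values) chain of consecutive urgent transitions, until either a reset synchronises the plays or a non-urgent location is reached at which the residual shift can finally be absorbed by enlarging a delay. At each urgent step, the closedness of the guards and the fact that both valuations remain in the closure of the current region ensure that the same transition is still fireable at both. The resulting Lipschitz constant may blow up by the maximal length of an urgent chain in $\rgame$, but continuity of $\val \mapsto \Value_\rgame((\qloc, I), \val)$ on $\overline I$ is preserved.
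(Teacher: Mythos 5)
The statement is correct, but your proof has a genuine gap: you claim that after adjusting a single initial delay ``the two plays are in identical configurations,'' yet this one-step synchronisation only holds when the first location is \emph{owned by the player whose strategy you are transporting} and is non-urgent. You identify urgency as an obstruction, but miss that adversary-owned locations cause exactly the same desynchronisation. When you transport $\MinPl$'s strategy from $(\loc,\val')$ to $(\loc,\val)$ with $\val\leq\val'$ and $\loc\in\LocsMax$, you must map each $\MaxPl$ move at $(\loc,\val)$ to a $\MaxPl$ move at $(\loc,\val')$: if $\MaxPl$ moves to a valuation $w<\val'$ (possible, since $w$ is only guaranteed to satisfy $w\geq\val$), the simulated move at $(\loc,\val')$ would need delay $w-\val'<0$. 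The only remedy is to clamp the delay to $0$, which leaves the plays desynchronised — and symmetrically at $\LocsMin$ locations when transporting $\MaxPl$'s strategy. Your urgent-chain iteration does not repair this: at the first non-urgent location reached, the adversary may again choose a delay small enough to keep the residual shift positive, indefinitely.

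The paper's proof (Appendix~\ref{app:proof-val_continue}) handles this by an induction \emph{over the full length of the play}, maintaining a residual shift $|\val_k-\val'_k|$ together with the invariant $|\delay'_k-\delay_k|\leq|\val_k-\val'_k|-|\val_{k+1}-\val'_{k+1}|$. When the adjusted delay must be clamped to $0$, the residual shift strictly decreases, and the per-step weight discrepancy is bounded by $\maxWeightLoc\cdot\bigl(|\val_k-\val'_k|-|\val_{k+1}-\val'_{k+1}|\bigr)$; summing telescopes to at most $\maxWeightLoc\,|\val-\val'|$ regardless of the play's length. The closure of the guards ensures that every clamped move is still fireable, since the shifted valuation remains in the closure of the same region as the originally reached point.

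Your Lipschitz constant is also incomplete: the residual shift may be strictly positive when a target location is reached, and the final weight function can have nonzero slope. The paper's actual bound is $(\maxWeightLoc+K)\,|\val-\val'|$, where $K$ bounds the absolute values of the slopes of the piecewise-affine final weights. For instance, a $\MaxPl$ location of weight $0$ with a single closed-guard transition (no reset) to a target $\qloc$ with $\weightT(\qloc,w)=-w$ on the region has $\Value_\rgame$ differing by exactly $|\val-\val'|$ at the two starting valuations, even though that location contributes nothing to $\maxWeightLoc$. This does not affect continuity, but the constant $\maxWeightLoc$ alone is wrong.
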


In \cite{brihaye2021oneclock}, it is also shown that the
mapping $\val\mapsto \Value_{\game}(\loc,\val)$ is piecewise affine on each
region where it is not infinite, that the total number of pieces (and thus of
cutpoints in-between two such affine pieces) is pseudo-polynomial (i.e.~polynomial in the
number of locations and the biggest weight $W$), and that
all cutpoints and the value associated to such a cutpoint are rational numbers. We will only use this result on \emph{reset-acyclic} \WTG{s}, i.e.~that do not contain cyclic 
  paths with a transition with a reset, which we formally cite here:

\begin{thmC}[\cite{brihaye2021oneclock}]
  \label{thm:pseudopoly}
  If $\game$ is a reset-acyclic \WTG, then for all locations $\qloc$, the piecewise affine mapping
  $\val \mapsto \Value_{\game}(\qloc,\val)$ is computable in
  time~polynomial in $|\QLocs|$ and $\maxWeight$. 
\end{thmC}

In \cite{brihaye2021oneclock}, this result is extended to take 
into allow for cyclic paths containing reset transitions when the weight of all the plays following them is not arbitrarily close to $0$ and negative.

\begin{exa}\label{ex:negative-cycle}
  Notice that the game on the left in \figurename~\ref{fig:Ex_transition_2valeurs} 
  does not fulfil this hypothesis: indeed, the play
  $(\qloc_0,0) \moveto{1/2-\varepsilon,\trans_1} (\qloc_1,1/2-\varepsilon) 
    \moveto{1/2+\varepsilon,\trans_2} (\qloc_0,0)$ is a cyclic play that contains a transition with a reset, and of weight
  $-2 \varepsilon$ negative and arbitrarily close to $0$.
\end{exa}

\subsection{Contribution}
In this work, we use a different technique 
of~\cite{brihaye2021oneclock} to push the 
decidability frontier and prove that the value function is computable 
for all \WTG{s} (in particular the one 
of~\figurename~\ref{fig:Ex_transition_2valeurs}):

\begin{thm}
  \label{thm:complexity}
  For all \WTG{s} $\game$ and all locations $\qlocinit$, the mapping
  $\val \mapsto \Value_{\game}(\qlocinit,\val)$ is computable in time
  exponential in $|\QLocs|$ and $\maxWeight$. 
\end{thm}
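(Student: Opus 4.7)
The plan is to reduce the computation of $\Value_{\game}$ to an acyclic (or at least reset-acyclic) \WTG, to which Theorem~\ref{thm:pseudopoly} can be applied. The issue, witnessed by Example~\ref{ex:negative-cycle}, is that when the game contains a cyclic path going through a reset whose weight interval touches~$0$ from below, $\MinPl$ may genuinely need to iterate that cycle $\Theta(1/\varepsilon)$ times to play $\varepsilon$-optimally, so a naive bounded unfolding fails. The remedy is to unfold only up to a pseudo-polynomial depth~$\kappa$, while handling encountered cycles separately according to the sign of their value.

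First, I would pass to the closure game $\rgame$ using Lemma~\ref{lem:region_ptg}, so that I can work with a game whose value function is continuous on each closed region (Lemma~\ref{lem:val_continue}) and whose locations carry their region. This continuity is the lever used to prove, as announced in Section~\ref{sec:negative-cycles}, that $\MaxPl$ admits an \emph{optimal memoryless} strategy $\maxstrategyopt$ that never enters a cycle (of configurations) of negative value. The argument should go by taking, at each $\MaxPl$-configuration, a choice that realises the supremum in the fixpoint equation $\F(\Value_{\rgame})=\Value_{\rgame}$ given by Theorem~\ref{theo:fixedpoint}; the continuity ensures such an extremal choice exists on the closed region, and any cycle traversed by this strategy would have value~$\geq 0$ (otherwise $\MaxPl$ could have improved by not entering it).

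Next, following Section~\ref{sec:unfolding}, I build a partial unfolding $\ugame$ of $\rgame$: starting from $(\qlocinit,I)$, I develop the tree of finite paths up to depth $\kappa$ (the pseudo-polynomial bound coming from the switching strategies of $\MinPl$ identified in \cite{brihaye2021oneclock}). Whenever a path closes a cycle on a configuration, I do not expand further but instead plug in a leaf whose final weight is chosen according to the value of the cycle: cycles of nonnegative value are truncated (since by the previous step $\MaxPl$'s optimal play never uses them, and $\MinPl$ will not benefit from iterating them either), while cycles of negative value are handled by computing the exact contribution that iterating them infinitely would give. The resulting $\ugame$ is reset-acyclic by construction and has size exponential in $|\QLocs|$ and $\maxWeight$.

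The core step is then to prove $\Value_{\ugame}=\Value_{\rgame}$ (Section~\ref{sec:equal-values}). For the inequality $\Value_{\ugame}\leq \Value_{\rgame}$, I transfer an $\varepsilon$-optimal switching strategy of $\MinPl$ in $\rgame$ to $\ugame$, using the fact that after $\kappa$ steps $\MinPl$'s behaviour is memoryless and in particular faithfully represented by the leaves. For the reverse inequality, I transfer the memoryless optimal strategy $\maxstrategyopt$ of $\MaxPl$ built above to $\ugame$ (this is precisely the content announced for Section~\ref{sec:negative-cycles-unfolding}), using that this strategy avoids cycles of negative value, so that the leaf replacements of $\ugame$ do not strictly overestimate what $\MaxPl$ can achieve. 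This step is the main obstacle, because one has to handle carefully the interplay between $\MinPl$'s need to cycle many times and $\MaxPl$'s commitment to not enter negative cycles; the fact that both reasoning use the fixpoint characterisation of Theorem~\ref{theo:fixedpoint} on the closure, together with the continuity of the value function, is what makes them compatible.

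Finally, since $\ugame$ is reset-acyclic and of size exponential in $|\QLocs|$ and $\maxWeight$, Theorem~\ref{thm:pseudopoly} computes $\val\mapsto \Value_{\ugame}((\qlocinit,I),\val)$ in time polynomial in the size of $\ugame$ and in $\maxWeight$, hence in time exponential in $|\QLocs|$ and $\maxWeight$ globally. Combined with $\Value_{\ugame}=\Value_{\rgame}=\Value_{\game}$, this yields the claimed complexity. The wrap-up is the content of Section~\ref{sec:end-proof}.
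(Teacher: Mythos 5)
Your overall architecture --- closure, an optimal memoryless $\MaxPl$ strategy avoiding negative-value cycles via the fixpoint characterisation, a bounded unfolding keyed on the switching-strategy depth $\kappa$, equality of values, and a final application of Theorem~\ref{thm:pseudopoly} --- is exactly the roadmap the paper follows. Two substantive points need correction, though, because as stated they misdescribe the key step.

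First, you justify cutting cycles of nonnegative value by saying ``$\MaxPl$'s optimal play never uses them.'' That is false, and the paper supplies a counterexample (the game of \figurename~\ref{fig:cycleNeg}): $\MaxPl$'s optimal strategy may well iterate a cyclic path of value~$0$, and some plays along it even have strictly negative weight. The correct statement, proved in Lemma~\ref{lem:MaxOpt}, is that $\MaxPl$ can play optimally while avoiding cyclic paths of \emph{negative} value; the complementary fact that $\MinPl$ cannot benefit from a nonnegative-value cycle is not secured by $\MaxPl$ avoiding it, but by the design of the leaf $\rpos$ in the unfolding: its final weight is the \emph{finite} constant $|\Locs|(\maxWeightTrans + \clockbound\,\maxWeightLoc) + \maxWeightFinal$, not $+\infty$. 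Your word ``truncated'' glosses over this choice, and it is essential: in the proof of Lemma~\ref{lem:U_MaxOpt} the inequality~\eqref{eq:U_OptMax-1} would become vacuous with $\uweightT(\rpos,0)=+\infty$, and in the first inequality of Theorem~\ref{theo:value} the $\rpos$ weight is chosen precisely so that a play reaching $\rpos$ in $\ugame$ can be matched by an attractor continuation in $\rgame$ whose cost is at most that constant.

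Second, your proposal to prove $\Value_{\ugame}\le\Value_{\rgame}$ by ``transferring an $\varepsilon$-optimal switching strategy of $\MinPl$ from $\rgame$ to $\ugame$'' does not go through as stated. The switching strategy bounds the number of steps \emph{between} resets, not the total number of resets, so a play following a transferred strategy in $\ugame$ may revisit the same reset transition within a $\kappa$-window and then hit $\rpos$, collecting the positive leaf weight instead of continuing toward the target; the transferred strategy is then not obviously $\varepsilon$-optimal in $\ugame$. The paper sidesteps this with a cycle-removing projection $\Phi$ from $\rgame$-plays to $\ugame$-plays: one fixes $\MaxPl$'s $\ugame$-optimal strategy $\umaxstrategy$ (built by re-applying Lemma~\ref{lem:MaxOpt} inside $\ugame$, which requires observing that $\ugame$ coincides with its own closure) and induces a strategy $\rmaxstrategy$ on $\rgame$ through $\Phi$; $\MinPl$'s switching strategy enters only as a witness play on which the weights are compared, and the stronger form Lemma~\ref{lem:U_MaxOpt} (segments into $\rpos$ after a repeated reset have nonnegative weight) is what makes the stripping of cycles weight-nonincreasing. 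You also attached Section~\ref{sec:negative-cycles-unfolding} to the wrong inequality: it provides the $\ugame$-internal analogue of Lemma~\ref{lem:MaxOpt} used for $\Value_{\ugame}\le\Value_{\rgame}$, whereas the transfer of the $\rgame$-optimal $\rmaxstrategy$ into $\ugame$ (which you correctly describe) handles the other direction $\Value_{\ugame}\ge\Value_{\rgame}$.
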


\begin{rem}
	The complexities of Theorems~\ref{thm:pseudopoly} and \ref{thm:complexity} 
	would be more traditionally considered as exponential and doubly-exponential 
	if weights of the \WTG were encoded in binary as usual. In this work, we thus 
	count the complexities as if all weights were encoded in unary and thus consider 
	$\maxWeight$ to be the bound of interest. For Theorem~\ref{thm:pseudopoly}, the 
	obtained bound is classically called \emph{pseudo-polynomial} in the literature. 
\end{rem}

The rest of this article gives the proof of Theorem~\ref{thm:complexity}. 
We fix a \WTG $\game$ and an initial location $\qlocinit$. We let
$\rgame = \langle\LocsMin,\LocsMax,\LocsT,\LocsUrg,\CTrans,
  \Cweight,\CweightT\rangle$ be its closure. We first use
Lemma~\ref{lem:region_ptg}, which allows us to deduce the result by
computing the value functions
$\val \mapsto \Value_{\rgame}((\qlocinit, I),\val)$ for all regions
$I$. Regions $I$ over which
$\val \mapsto \Value_{\rgame}((\qlocinit, I),\val)$ is constantly equal
to $+\infty$ or $-\infty$ are computable in polynomial time, as
explained in \cite{brihaye2021oneclock}. We
therefore remove them from $\rgame$ from now on. We now fix an initial
region~$I_{\mathsf i}$ and let $\locinit=(\qlocinit, I_{\mathsf i})$, and explain 
how to compute $\val \mapsto \Value_{\rgame}((\qlocinit, I_{\mathsf i}),\val)$ 
on the interval $I_{\mathsf i}$. 

As in the non-negative case \cite{BouyerLarsenMarkeyRasmussen-06}, the
objective is to limit the number of transitions with a reset taken
into the plays while not modifying the value of the game. When all
weights are non-negative, this is fairly easy to achieve since,
intuitively speaking, $\MinPl$ has no interest in using any cycles
containing such a transition (since it has non-negative weight and is
thus non-beneficial for $\MinPl$). The game can thus be transformed so
that each transition with a reset is taken at most once. To obtain a
smaller game, it is even possible to simply count the number of
transitions with a reset taken so far in the play and stop the game
(with a final weight $+\infty$) in case the counter goes above the
number of such transitions in the game. The transformed game has
a polynomial number of locations with respect to the original game
and is reset-acyclic, which allows one to solve it by using Theorem~\ref{thm:pseudopoly}, with
a time complexity polynomial in $|\QLocs|$ and $\maxWeight$ (instead of the exponential time
complexity originally achieved in
\cite{BouyerLarsenMarkeyRasmussen-06,Rutkowski-11} with respect to $|\QLocs|$).

The situation is much more intricate in the presence of negative
weights since negative cycles containing a transition with a reset
can be beneficial for $\MinPl$, as we have seen in
Example~\ref{ex:value-example}. Notice that this is still true in the
closure of the game, as can be checked on the right in
\figurename~\ref{fig:Ex_transition_2valeurs}. Moreover, some cyclic paths may
have both plays following it with a positive weight and plays following it with a negative weight, making it difficult to determine whether it is
beneficial to $\MinPl$ (or not). To overcome this situation, we will
consider the point of view of $\MaxPl$, benefiting from the
determinacy of the \WTG. We will show that, in the closure 
$\rgame$, $\MaxPl$ can play \emph{optimally} with \emph{memoryless}
strategies while \emph{avoiding negative cyclic plays}. This will
simplify our further study since, by following this strategy, $\MaxPl$
ensures that only non-negative cyclic plays will be encountered, which 
is not beneficial to $\MinPl$. Therefore, as in
\cite{BouyerLarsenMarkeyRasmussen-06}, we will limit the firing of
transitions with a reset to at most once. However, we are not able to do it
without blowing up exponentially the number of locations of the games. 
Instead, along the unfolding of the
game, we need to record enough information in order to know, in case a
cyclic path ending with a reset is closed, whether this cyclic path
has a potential negative weight (in which case $\MaxPl$ will indeed
not follow it) or non-negative weight (in which case it is not
beneficial for $\MinPl$ to close the cycle). Determining in which case
we are will be made possible by introducing the notion of value of a 
cyclic path in Section~\ref{sec:negative-cycles}. Then, $\MaxPl$ has
even an optimal strategy to avoid closing cyclic paths with a negative
value (which is stronger than only avoiding creating negative cyclic
plays). The unfolding, denoted $\ugame$, will be defined in
Section~\ref{sec:unfolding}. Section~\ref{sec:negative-cycles-unfolding} 
shows that $\MaxPl$ keeps its ability to play without falling in negative "cycles" in the unfolding. This allows us to show in Section~\ref{sec:equal-values} that the unfolding game has a value equal to the closure game. This allows us to wrap up the proof of Theorem~\ref{thm:complexity} in Section~\ref{sec:end-proof}. 

\section{\texorpdfstring{How \MaxPl can control negative cycles}{How Max can control negative cycles}}
\label{sec:negative-cycles}

One of the main arguments of our proof is that, in the closure of a \WTG
$\rgame$, $\MaxPl$ can play \emph{optimally} with \emph{memoryless}
strategies while \emph{avoiding negative cyclic plays}. As already
noticed in~\cite{brihaye2021oneclock}, this is
not always true in all \WTG{s}: $\MaxPl$ may need memory to play
$\varepsilon$-optimally without the possibility of avoiding some negative
cyclic plays.

\begin{figure}
	\centering
	\begin{tikzpicture}[xscale=.8,every node/.style={font=\footnotesize}, 
	every label/.style={font=\scriptsize}]
	\node[PlayerMax, label={right:$\qloc_1$}] at (0, 1.75) (s1) {$\mathbf{-1}$};
	\node[PlayerMin, label={right:$\qloc_0$}] at (0, 0)  (s0) {$\mathbf{0}$};
	\node[target] at (0, -1.5) (s3) {$\LARGE \smiley$};
	
	\draw[->]
	(s0) edge[bend right=20,draw=red] node[right] {
		\textcolor{red}{$\begin{array}{c}
		\trans_1 \\ x \leq 2 \\ x := 0
		\end{array}$}} (s1)
	(s1) edge[bend right=20,draw=DarkOrchid] node[left] {
		\textcolor{DarkOrchid}{$\begin{array}{c}
		\trans_2 \\ x \leq 2 \\ \mathbf{1}
		\end{array}$}} (s0)
	(s0) edge[bend left=20, draw=ForestGreen] node[right] {
		\textcolor{ForestGreen}{$\begin{array}{c}
		\trans_3 \\ x \leq 2
		\end{array}$}} (s3)
	(s0) edge[bend right=20, draw=blue] node[left] {
		\textcolor{blue}{$\begin{array}{c}
		\trans_4 \\ x \leq 1 \\ \mathbf{-10}
		\end{array}$}} (s3)
	;

	\begin{scope}[xshift=9.5cm]
	\node[PlayerMax, label={above:$\qloc_1,\{0\}$}] at (0, 1.75) (s1) {$\mathbf{-1}$};
	\node[PlayerMin, label={right:$\qloc_0, \{0\}$}] at (-6 ,0) (s01) {$\mathbf{0}$};
	\node[PlayerMin, label={right:$\qloc_0, (0, 1)$}] at (-3,0)  (s02) {$\mathbf{0}$};
	\node[PlayerMin, label={right:$\qloc_0, \{1\}$}] at (0, 0)  (s03) {$\mathbf{0}$};
	\node[PlayerMin, label={right:$\qloc_0, (1, 2)$}] at (3,0)  (s04) {$\mathbf{0}$};
	\node[PlayerMin, label={below:$\qloc_0, \{2\}$}] at (6,0)  (s05) {$\mathbf{0}$};
	\node[target] at (0, -1.5) (s3) {$\LARGE \smiley$};
	
	\draw[->]
	(s01) edge[bend left=20,draw=red] (s1)
	(s02) edge[draw=red] (s1)
	(s03) edge[bend left=10,draw=red] (s1)
	(s04) edge[draw=red] (s1)
	(s05) edge[bend right=20,draw=red] (s1)
	
	(s1) edge[bend right=30,draw=DarkOrchid] 
		node[above left,yshift=-.15cm] 
		{\textcolor{DarkOrchid}{$x = 0; \mathbf{1}$}} (s01)
	(s1) edge[bend right=10,draw=DarkOrchid] 
		node[left,near end] {\textcolor{DarkOrchid}{$0 \leq x \leq 1; \mathbf{1}$}} (s02)
	(s1) edge[bend left=5,draw=DarkOrchid] 
		node[near end,right] {\textcolor{DarkOrchid}{$x = 1; \mathbf{1}$}} (s03)
	(s1) edge[bend left=10,draw=DarkOrchid] 
		node[near end, right,yshift=.1cm] 
		{\textcolor{DarkOrchid}{$1 \leq x \leq 2; \mathbf{1}$}} (s04)
	(s1) edge[bend left=30,draw=DarkOrchid] 
		node[above right,yshift=-.15cm] 
		{\textcolor{DarkOrchid}{$x = 2; \mathbf{1}$}} (s05)
	
	(s01) edge[bend right=2.5,draw=ForestGreen] (s3)
	(s02) edge[bend left=5,draw=ForestGreen] (s3)
	(s03) edge[bend left=10,draw=ForestGreen] (s3)
	(s04) edge[bend left=5,draw=ForestGreen] (s3)
	(s05) edge[bend left=10,draw=ForestGreen] (s3)
	
	(s01) edge[draw=blue, bend right=10] 
		node[below, near start, xshift=-.25cm] {\textcolor{blue}{$\mathbf{-10}$}} (s3)
	(s02) edge[draw=blue, bend right=5] 
		node[left, near start,xshift=-.25cm] {\textcolor{blue}{$\mathbf{-10}$}} (s3)
	(s03) edge[draw=blue, bend right=10]node[left, near start] 
	{\textcolor{blue}{$\mathbf{-10}$}} (s3)
	;
	\end{scope}
	
	\end{tikzpicture}
	\caption{On the left, a \WTG where $\MaxPl$ needs memory to play
          $\varepsilon$-optimally. On the right, its closure 
          where we merged several transitions by removing unnecessary 
          guards.}
	\label{fig:Max_memless}
\end{figure}
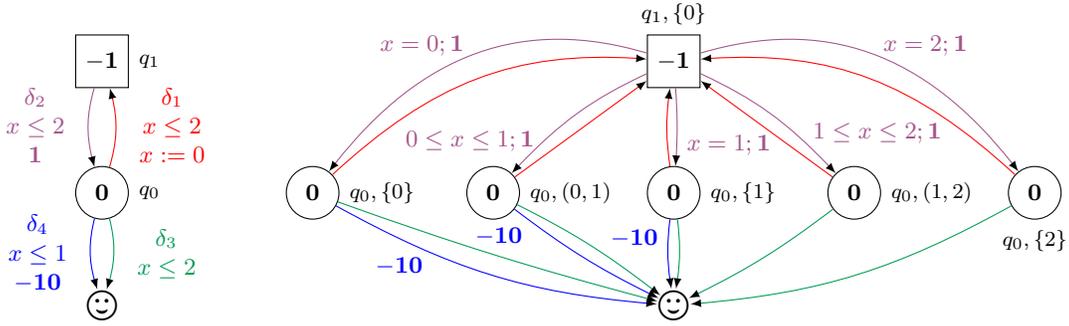

\begin{exa} \label{example:Max-not-optimal} 
	In the \WTG depicted on the left in \figurename~\ref{fig:Max_memless}, 
	we can see that $\Value(\qloc_1, 0) = 0$, but $\MaxPl$ does not have an optimal 
	strategy, needs memory to play $\varepsilon$-optimally, and cannot avoid negative 
	cyclic plays. Indeed, an optimal strategy for \MaxPl always chooses a delay greater 
	than $1$: if at some point, a strategy of \MaxPl chooses a delay less than or equal 
	to~$1$, then \MinPl can always choose $\trans_4$, and the value of this strategy 
	is $-10$. However, \MaxPl must choose a delay closer and closer to $1$. Otherwise, 
	we suppose that there exists $\beta > 0$ such that all delays chosen by the 
	strategy are greater than $1 + \beta$, and $\MinPl$ has a family of strategies 
	that stay longer and longer in the cycle with a weight at most $-\beta$. Thus, 
	the value of this strategy will tend to $-\infty$. In particular, \MaxPl does not 
	have an optimal strategy, and the $\varepsilon$-optimal strategy requires infinite 
	memory to play with delays closer and closer to $1$ (for instance, after the 
	$n$-th round in the cycle, Max delays $\varepsilon/2^n$ time units, to sum
	up, all weights to a value at most $-\varepsilon$).
\end{exa}

Such convergence phenomena needed by $\MaxPl$ do not exist in $\rgame$
since all guards are closed (this is not sufficient alone though) 
and by the regularity of
$\Value$ given by Lemma~\ref{lem:val_continue}.

\begin{exa}
	We consider the closure of the \WTG depicted in
    \figurename~\ref{fig:Max_memless}. The $\varepsilon$-optimal strategy
    (with memory) of $\MaxPl$ in $\game$ translates into an
    optimal memoryless strategy in $\rgame$: in~$(\qloc_1, \{0\})$,
    $\MaxPl$ can delay $1$ time unit and jump into the location
    $(\qloc_0, (1, 2))$. Then, cyclic plays that $\MinPl$
    can create have a zero weight and are thus not profitable for
    either~player.
\end{exa}

To generalise this explanation, we start by defining the value of
cyclic paths ending with a reset in a given \WTG. Intuitively, the value
of this cyclic path is the weight that $\MinPl$ (or $\MaxPl$) can
guarantee regardless of the delays chosen by $\MaxPl$ (or $\MinPl$)
during this one.

\begin{defi}
  \label{def:valCycle}
  Let $\game$ be a \WTG. We define by 
  induction the \emph{value} $\Value_\game^\val(\rpath)$ of a finite 
  path $\rpath$ in $\game$ from an initial valuation $\val$ of the clock: 
  if $\rpath$ has length $0$ (i.e.~if $\rpath \in \QLocs$), we let:
  \begin{displaymath}
  \Value_\game^\val(\rpath)=0 \,.
  \end{displaymath}
  Otherwise, $\rpath$ can be written
  $\qloc_0 \movetoPath{\trans_0} \rpath'$ (with $\rpath'$ starting in
  location $\qloc_1$), and we~let:
  \begin{displaymath}
  \Value_\game^\val(\rpath) = 
  \begin{cases}
      \inf_{(\qloc_0,\val) \moveto{\delay_0, \trans_0} (\qloc_1, \val') \text{ edge of } \sem{\game}} \big(\delay_0 \, \weight(\qloc_0) + \weight(\trans_0) +
      \Value_\game^{\val'}(\rpath')\big) & \text{if } \qloc_0 \in \LocsMin\\
      \sup_{(\qloc_0,\val) \moveto{\delay_0, \trans_0} (\qloc_1, \val') \text{ edge of } \sem{\game}} \big(\delay_0 \, \weight(\qloc_0) + \weight(\trans_0) +
      \Value_\game^{\val'}(\rpath')\big) & \text{if }
      \qloc_0 \in \LocsMax
  \end{cases}
  \end{displaymath} 
  Then, for a cyclic path $\rpath$ of $\game$ 
  ending by a transition with a reset, we let
  $\Value_{\game}(\rpath) = \Value_\game^0(\rpath)$.
\end{defi}

\begin{exa}
  \label{ex:value_cycle}
  Let $\rpath = \qloc_0 \movetoPath{\trans_1} \qloc_1 \movetoPath{\trans_2} 
  \qloc_0$ be the cyclic path of the \WTG $\game$ depicted on the left in 
  \figurename~\ref{fig:Ex_transition_2valeurs}. To evaluate the value of
  $\rpath$, $\MinPl$ only needs to choose a delay $\delay_1 \in [0, 1]$ when
  firing $\trans_1$, while $\MaxPl$ has no choice but to play a delay
  $1 - \delay_1$ when firing~$\trans_2$, generating a finite play $\play$
  of weight $\weightC(\play) = 2t_1 - 1$. We deduce that
  $\Value_\game (\rpath) = \inf_{\delay_1 \in [0, 1]} (2\, \delay_1 - 1) = -1$ 
  (when $\MinPl$ chooses~$\delay_1= 0$).
\end{exa}

A cyclic path with a negative value ensures that \MinPl can always guarantee 
to obtain a cyclic play that follows it with a negative weight, even when there 
are other cyclic plays (that follow it) with a non-negative weight. It is 
exactly those cycles that are problematic for \MaxPl since \MinPl can benefit 
from them. We now show our key lemma: in the closure of a \WTG, \MaxPl can play 
optimally and avoid cyclic paths of negative value.

\begin{lem}
  \label{lem:MaxOpt}
  In a closure \WTG $\rgame$, 
  $\MaxPl$ has a memoryless optimal
  strategy~$\maxstrategy$ such that
  \begin{enumerate}
  \item\label{item:MaxOpt_Weight} all cyclic plays conforming to
    $\maxstrategy$ have a non-negative weight;
  \item\label{item:MaxOpt_Value} all cyclic paths ending by a reset
    conforming to $\maxstrategy$ have a non-negative value.
  \end{enumerate}
\end{lem}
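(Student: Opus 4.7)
My plan is to construct $\maxstrategy$ by selecting, at every $\MaxPl$ configuration $(\loc,\val)$, an edge attaining the supremum
\[
\sup_{(\loc,\val)\moveto{\delay,\trans}(\loc',\val')} \bigl(\weight(\trans) + \delay\,\weight(\loc) + \Value_{\rgame}(\loc',\val')\bigr)
\]
appearing in the Bellman operator $\F$. Because $\rgame$ has closed guards and, by Lemma~\ref{lem:val_continue}, the map $\val\mapsto \Value_{\rgame}((\qloc,I),\val)$ is continuous on $\overline I$, the set of admissible delays for each transition is a compact interval on which the maximand is continuous; hence the supremum is attained and such a maximiser exists. Memoryless optimality of $\maxstrategy$ then follows from a classical telescoping argument anchored on Theorem~\ref{theo:fixedpoint}: for every $\MinPl$ strategy $\minstrategy$, along the play $\play=\Play(s_0,\minstrategy,\maxstrategy)$ we have $\Value_{\rgame}(s_i)\leq w_i+\Value_{\rgame}(s_{i+1})$ at each step (where $w_i$ is the weight of the $i$-th edge), with equality at $\MaxPl$ steps by the choice of a maximiser and with $\leq$ at $\MinPl$ steps since $\Value_{\rgame}(s_i)$ is an infimum in~$\F$. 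Summing along the play yields $\weightP(\play)\geq \Value_{\rgame}(s_0)$, whether $\play$ reaches a target or is infinite (the payoff being $+\infty$ in the latter case).

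For property~(1), I apply the same telescoping to a cyclic play $\play$ conforming to $\maxstrategy$ from a configuration $s$ back to $s$: this immediately yields $\Value_{\rgame}(s)\leq \weightC(\play)+\Value_{\rgame}(s)$, hence $\weightC(\play)\geq 0$ whenever $\Value_{\rgame}(s)$ is finite. The configurations with $\Value_{\rgame}(s)\in\{-\infty,+\infty\}$ are handled separately by additionally arranging $\maxstrategy$ on them to stay within the subgraph of configurations of the same infinite value, via a standard attractor-style selection.

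Property~(2) is the most delicate part of the lemma. Because a reset forces the final valuation to be $0$ and $\rpath$ is cyclic, both endpoints of $\rpath$ are the configuration $(\loc,0)$ with $\loc=(\qloc,\{0\})$, so $\Value_{\rgame}(\rpath)=\Value_{\rgame}^{0}(\rpath)$ is the value of the finite-horizon sub-game played on the transitions of $\rpath$ starting from $(\loc,0)$. I argue by contradiction: suppose $\Value_{\rgame}(\rpath)<0$. Determinacy of this sub-game provides an optimal $\MinPl$ sub-strategy $\minstrategy^{\textsf{sub}}$ forcing a strictly negative cumulated weight against any $\MaxPl$ sub-strategy. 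Since $\rpath$ is conforming to $\maxstrategy$, the memoryless strategy $\maxstrategy$ already selects, at each $\MaxPl$ location of $\rpath$, the transition of $\rpath$ for at least one reachable valuation; by refining the construction of $\maxstrategy$ — breaking ties among equally optimal maximising edges consistently, using the continuity of $\Value_{\rgame}$ on closed regions — I ensure that this compatibility extends uniformly to every valuation reached when traversing $\rpath$. Under such a refined choice, playing $\minstrategy^{\textsf{sub}}$ against $\maxstrategy$ yields a play that still follows $\rpath$, hence a cyclic play conforming to $\maxstrategy$ whose total weight is strictly negative, contradicting property~(1). The main obstacle is precisely this compatibility step: conformance alone guarantees a compatible $\MinPl$ response for only one sequence of valuations, and a memoryless $\maxstrategy$ could a priori leave $\rpath$ under a different $\MinPl$ response, so the crux of the argument is to use the regularity from Lemma~\ref{lem:val_continue} to rigidify $\maxstrategy$'s transition choice at each $\MaxPl$ location on $\rpath$ across all reachable valuations.
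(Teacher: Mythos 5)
Your definition of $\maxstrategy$ as a pointwise maximiser of the Bellman operator (justified via closed guards and Lemma~\ref{lem:val_continue}), the optimality argument, and the telescoping proof of property~(1) all follow Section~\ref{sec:negative-cycles} in substance and are fine.

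The gap is in property~(2), and it is exactly the step you yourself flag. That $\rpath$ conforms to $\maxstrategy$ supplies a \emph{single} play following $\rpath$ along which $\maxstrategy$'s moves happen to coincide with $\rpath$'s transitions; it says nothing about what $\maxstrategy$ does at the other valuations that arise when $\MinPl$ plays $\varepsilon$-optimal delays in the subgame defining $\ValueRG(\rpath)$. A consistent tie-breaking rule cannot bridge this: tie-breaking is relevant only when several transitions are \emph{simultaneously} maximisers, whereas the actual obstruction is that the argmax transition can genuinely switch as $\val$ sweeps across $\overline{I}$ --- the map $\val \mapsto \Value_\rgame(\loc',\val)$ is merely piecewise affine on the region closure, so the maximum of finitely many affine terms changes at cutpoints, and at such a cutpoint the path's transition may simply stop being optimal (no tie to break). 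When that happens, your rigidified $\maxstrategy$ still need not choose the transition of $\rpath$, so $\minstrategy^{\mathsf{sub}}$ may steer the play off $\rpath$, no cyclic play conforming to $\maxstrategy$ is produced, and the desired contradiction with property~(1) never materialises. You are right that this is the delicate point --- the paper's own proof plugs in the delay functions prescribed by $\maxstrategy$ and directly asserts the resulting play is conforming to $\maxstrategy$, leaving the same transition-compatibility claim implicit --- but the tie-breaking refinement you propose is not a sound way to discharge it, and the step would fail as written.
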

\begin{proof}
  We use Theorem~\ref{theo:fixedpoint} to define the memoryless strategy
  $\maxstrategy$. Indeed, the identity
  $\Value_\rgame = \F(\Value_\rgame)$, applied over configurations
  belonging to $\MaxPl$, suggests a choice of transition and delay to play
  almost optimally. As $\F$ computes a supremum on the set of possible
  (transitions and) delays, this does not directly lead to a specific
  choice: in general, this would give rise to $\varepsilon$-optimal
  strategies and not an optimal one. This is where we rely on the
  continuity of $\Value_\rgame$ (Lemma~\ref{lem:val_continue}) on each
  closure of region to deduce that this supremum is indeed a
  maximum. More precisely, for $\loc\in \LocsMax$, we can write
  $\F(\Value_\rgame)(\loc, \val)$ as:
  \begin{displaymath}
  \max_{\trans\in \CTrans} \sup_{\delay \text{ s.t. }
      (\loc,\val) \moveto{\delay, \trans} (\loc',\val')}
    \big(\Cweight(\trans) + \delay\, \Cweight(\loc) + \Value_\rgame
    (\loc', \val')\big) \,. 
  \end{displaymath} 
  The guard of transition $\trans$ is the closure $\overline \reg$ of a 
  region $\reg \in \reggame$, therefore, $\delay$ is in a closed interval 
  $J$ of values such that $\val+\delay$ falls in $\overline \reg$. Notice that 
  $\val'$ is either $0$ if $\trans$ contains a reset or is $\val+\delay$: 
  in both cases, this is a continuous function of $\delay$. Relying on the 
  continuity of~$\Value_\rgame$, the mapping $\delay \in J \mapsto 
  \Cweight(\trans) + \delay\, \Cweight(\loc) + \Value_\rgame (\loc', \val')$ is 
  thus continuous over a compact set so that its supremum is indeed a maximum.

  We thus let the memoryless strategy $\maxstrategy$ be such that,
  for all configurations $(\loc, \val)$, $\maxstrategy(\loc, \val)$
  is chosen arbitrarily in:
  \begin{equation}
    \argmax_{\trans \in \CTrans} 
    \argmax_{\delay \text{ s.t. } \loc,\val\xrightarrow{\delay, \trans} \loc',\val'} 
    \big(\Cweight(\trans) +  \delay \, \Cweight(\loc)  +
    \Value_\rgame(\loc', \val')\big)
    \label{eq:max_opt}
  \end{equation}
  This mapping $\maxstrategy$ is then extended into a memoryless strategy, defining it over 
  finite plays by only
  considering  the last configuration of the play. To conclude the proof, 
  we show that $\maxstrategy$ is an optimal strategy that satisfies 
  the two properties of the lemma.  
  
  We first show that $\maxstrategy$ is an  optimal strategy by proving that
  $\Value_\rgame^{\maxstrategy}(\loc, \val) \geq \Value_\rgame(\loc,
  \val)$ for all configurations $(\loc, \val)$. In particular, we show
  that for all plays $\play$ from $(\loc, \val)$ conforming to
  $\maxstrategy$, we have $\weightP(\play) \geq \Value_\rgame(\loc,
  \val)$. We remark that if $\play$ does not reach
  $\LocsT$, then $\weightP(\play) = +\infty$, and the inequality is
  satisfied. Now, we suppose that $\play$ reaches $\LocsT$, and we
  reason by induction on the length of $\play$ to show that
  $\weightP(\play) \geq \Value_\rgame(\loc, \val)$ for all plays $\play$
  starting in a configuration $(\loc, \val)$ reaching $\LocsT$.
  If $\play$ has length $0$, it starts directly in $\loc \in \LocsT$, and 
  $\weightP(\play) = \CweightT(\loc, \val) = \Value_\rgame(\loc,
  \val)$.  Otherwise,
  $\play = (\loc, \val) \moveto{\delay, \trans} \play'$, with
  $\play'$ starting in a configuration $(\loc',\val')$.  In
  particular, by inductive hypothesis, we have:
  \begin{displaymath}
  \weightP(\play) = \rweight(\trans) + \delay \, \rweight(\loc) +
  \weightP(\play') \geq \rweight(\trans) + \delay \, \rweight(\loc)
  + \Value_\rgame(\loc', \val') \,.
  \end{displaymath}
  Now, if $\loc \in \LocsMin$, then we conclude by using that
  $\Value_\rgame$ is a fixed point of $\F$, i.e. 
  \begin{displaymath}
	\weightP(\play) \geq 
	\!\!\!\!\!\inf_{(\loc, \val) \moveto{\delay, \trans} (\loc',\val')} \!\!\!\!\!
	\big(\rweight(\trans) + \delay \, \rweight(\loc) + 
	\Value_\rgame(\loc', \val')\big)  
	= \Value_\rgame(\loc, \val) \,.
  \end{displaymath}
  Otherwise, we suppose that $\loc \in \LocsMax$ and $(\delay, \trans)$ is 
  defined by $\maxstrategy$. Thus, by~\eqref{eq:max_opt} and using again
  that $\Value_\rgame$ is a fixed point of $\F$, we obtain that
  \begin{displaymath}
  	\weightP(\play) \geq 
  	\!\!\!\!\! \sup_{(\loc, \val) \moveto{\delay, \trans} (\loc',\val')} \!\!\!\!\!
  	\big(\rweight(\trans) + \delay \, \rweight(\loc, \reg)  + 
  	\Value_\rgame(\loc', \val')\big)
  	= \Value_\rgame(\loc, \val) \,.
  \end{displaymath}
  
  Finally, we conclude the proof by showing that $\maxstrategy$ satisfies 
  the two properties of the~lemma.
  \begin{enumerate}
  	\item Let $\play = (\loc_1, \val_1) \moveto{\delay_1, \trans_1} \cdots
  	(\loc_k, \val_k) \moveto{\delay_k, \trans_k} (\loc_{k+1}, \val_{k+1}) = 
  	(\loc_1, \val_1)$ be a cyclic play conforming to $\maxstrategy$. 
  	We show that $\weightC(\play) \geq 0$ by claiming that for all 
  	$i\in \{1,\ldots,k\}$, 
  	\begin{equation}
  	\label{eq:MaxOpt-1}
  	\Value_\rgame(\loc_i, \val_i) \leq \rweight(\trans_i) + 
  	\delay_i \, \rweight(\loc_i) + \Value_\rgame(\loc_{i+1},
  	\val_{i+1})
  	\end{equation} 
  	Indeed by summing this inequality along $\play$, we obtain:
  	\begin{displaymath}
  	\sum_{i=1}^k \Value_\rgame(\loc_i, \val_i) 
  	\leq \sum_{i=1}^k \big(\rweight(\trans_i) + 
  	\delay_i\,\rweight(\loc_i) + 
  	\Value_\rgame(\loc_{i+1}, \val_{i+1})\big)
  	\end{displaymath}
  	i.e., since $\play$ is a cyclic play,
  	\begin{displaymath}
  	\weightC(\play) = \sum_{i=1}^k \big(\rweight(\trans_i) + 
  	\delay_i \, \rweight(\loc_i)\big) \geq 0 \,.
  	\end{displaymath}
  	To conclude this point, we show~\eqref{eq:MaxOpt-1}. 
  	For $i\in \{1,\ldots,k\}$, we distinguish two cases.
  	First, we suppose that $\loc_i \in \LocsMin$ and we conclude as
	$\Value_\rgame$ is a fixed point of $\F$: 
	\begin{align*}
	\Value_\rgame(\loc_i, \val_i)&= 
	\inf_{(\loc_i, \val_i) \moveto{\delay, \trans} (\loc',\val')}
	\big(\rweight(\trans) + \delay \, \rweight(\loc_i) + 
	\Value_\rgame(\loc', \val')\big) \\
	&\leq \rweight(\trans_i) + \delay_i \, \rweight(\loc_i) + 
	\Value_\rgame(\loc_{i+1}, \val_{i+1}) \,.
	\end{align*}
	Otherwise, $\loc_i \in \LocsMax$ then, as $\Value_\rgame$ is a fixed point
	of $\F$ and by using~\eqref{eq:max_opt}, we have:
	\begin{align*}
	\Value_\rgame(\loc_i, \val_i) &= 
	\sup_{(\loc_i, \val_i) \moveto{\delay, \trans} (\loc',\val')}
	\big(\rweight(\trans) + \delay \, \rweight(\loc_i) + 
	\Value_\rgame(\loc', \val')\big) \\
	&\leq \rweight(\trans_i) + \delay_i \, \rweight(\loc_i) + 
	\Value_\rgame(\loc_{i+1}, \val_{i+1}) \,.
	\end{align*}

  	\item Let $\rpath = \loc_0 \movetoPath{\trans_0} \loc_1 \cdots \loc_k 
  	\movetoPath{\trans_k} \loc_0$ be a cyclic path conforming to $\maxstrategy$ 
  	such that $\trans_k$ contains a reset. 
  	By grouping all infimum/supremum together in the
  	previous definition, we can see that $\ValueRG(\rpath)$ can be 
  	rewritten~as:
  	\begin{displaymath}
  	\inf_{\big(f_i\colon (t_0,\ldots,t_{i-1}) \mapsto 
  		t_i\big)_{\substack{0\leq i \leq k \\ \loc_i \in \LocsMin}}} \,
  	\sup_{\big(f_i\colon (t_0,\ldots,t_{i-1}) \mapsto
  		t_i\big)_{\substack{0\leq i \leq k \\ \loc_i \in \LocsMax}}}
  	\weightC(\play)
  	\end{displaymath} 
  	where $\play$ is the finite play $(\loc_0,0) \moveto{t_0=f_0,\trans_0} 
  	(\loc_1,\val_1) \moveto{t_1=f_1(t_0),\trans_1} \cdots
  	\moveto{t_k=f_k(t_0,\ldots,t_{k-1}),\trans_k} (\loc_0,0)$. Notice
  	that the mapping $f_i$, chosen by the player owning location $\loc_i$,
  	describes the delay before taking the transition $\trans_i$ as a
  	function of the previously chosen delays.
  	In particular, for all $\varepsilon>0$, there exists
  	$\smash{\big(f_i\colon (t_0,\ldots,t_{i-1}) \mapsto
  	t_i\big)_{\substack{0\leq i \leq k \\ \loc_i \in \LocsMin}}}$ such
  	that for all $\big(f_i\colon (t_0,\ldots,t_{i-1}) \mapsto
  	t_i\big)_{\substack{0\leq i \leq k \\ \loc_i \in \LocsMax}}$:
  	\begin{displaymath}
  	\weightC(\play)\leq \ValueRG(\rpath)+\varepsilon
  	\end{displaymath}
  	with $\play$ the finite play described above. Since $\rpath$ is
  	conforming to $\maxstrategy$, a particular choice of delays
  	$\big(f_i\colon (t_0,\ldots,t_{i-1}) \mapsto
  	t_i\big)_{\substack{0\leq i \leq k \\ \loc_i \in \LocsMax}}$ is
  	given by $\maxstrategy$ itself. In this case, the latter finite
  	play $\play$ is conforming to $\maxstrategy$. By the previous
  	item, we know that $\weightC(\play)\geq 0$, therefore:
  	\begin{displaymath}
  	\ValueRG(\rpath) \geq -\varepsilon \,.
  	\end{displaymath}
  	Since this holds for all $\varepsilon>0$, we deduce that
  	$\ValueRG(\rpath)\geq 0$ as expected.\qedhere
  \end{enumerate}
\end{proof}

As a side note, it is tempting to strengthen
Lemma~\ref{lem:MaxOpt}.\eqref{item:MaxOpt_Value} 
so as to ensure that all plays following a cyclic path ending by a reset conforming to $\maxstrategy$ have
a non-negative weight. Unfortunately, this does
not hold, as shown in the following example.

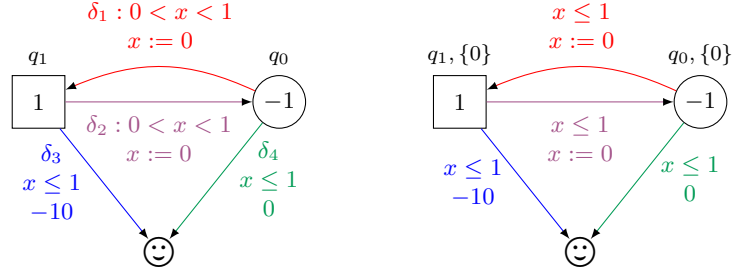
\begin{figure}
	\centering
	\begin{tikzpicture}[xscale=.8,every node/.style={font=\footnotesize}, 
	every label/.style={font=\scriptsize}]
	\node[PlayerMax, label={above:$\qloc_1$}] at (0, 0) (s1) {$1$};
	\node[PlayerMin, label={above:$\qloc_0$}] at (4, 0)  (s0) {$-1$};
	\node[target] at (2, -2) (s3) {$\LARGE \smiley$};
	
	\draw[->]
	(s0) edge[bend right=20,draw=red] node[above] {
		\textcolor{red}{$\begin{array}{c}
			\trans_1: 0 < x < 1 \\ x := 0
			\end{array}$}} (s1)
	(s1) edge[draw=DarkOrchid] node[below] {
		\textcolor{DarkOrchid}{$\begin{array}{c}
			\trans_2: 0 < x < 1 \\ x := 0
			\end{array}$}} (s0)
	(s1) edge[draw=blue] node[left] {
		\textcolor{blue}{$\begin{array}{c}
			\trans_3  \\ x \leq 1 \\ -10 
			\end{array}$}} (s3)
	(s0) edge[draw=ForestGreen] node[right] {
		\textcolor{ForestGreen}{$\begin{array}{c}
			\trans_4  \\ x \leq 1 \\ 0 
			\end{array}$}} (s3)
	;
	
	\begin{scope}[xshift=7cm]
	\node[PlayerMax, label={above:$\qloc_1, \{0\}$}] at (0, 0) (s1) {$1$};
	\node[PlayerMin, label={above:$\qloc_0, \{0\}$}] at (4, 0)  (s0) {$-1$};
	\node[target] at (2, -2) (s3) {$\LARGE \smiley$};
	
	\draw[->]
	(s0) edge[bend right=20, draw=red] node[above] {
		\textcolor{red}{$\begin{array}{c}
			x \leq 1\\ x:=0
			\end{array}$}} (s1)
	(s1) edge[draw=DarkOrchid] node[below] {
		\textcolor{DarkOrchid}{$\begin{array}{c}
			x \leq 1\\ x:= 0
			\end{array}$}} (s0)
	(s1) edge[draw=blue] node[left]{
		\textcolor{blue}{$\begin{array}{c}
			x \leq 1 \\ -10 
			\end{array}$}} (s3)
	(s0) edge[draw=ForestGreen] node[right]{
		\textcolor{ForestGreen}{$\begin{array}{c}
			x \leq 1 \\ 0 
			\end{array}$}} (s3)
	;
	\end{scope} 
	\end{tikzpicture}
	\caption{On the left, a \WTG such that its closure on the right
		contains a cyclic path of value $0$, but some cyclic paths of negative weight.  Moreover, $\MaxPl$ uses the cyclic path to play
		optimally.}
	\label{fig:cycleNeg}
\end{figure}

\begin{exa}
	We consider the closure of the \WTG depicted in 
	\figurename~\ref{fig:cycleNeg}. Let $\rpath = (\qloc_0, \{0\}) 
	\movetoPath{\trans_1} \allowbreak (\qloc_1, \{0\}) 
	\movetoPath{\trans_2} (\qloc_0, \{0\})$. It is a cyclic path such that plays following it have a weight in $[-1, 1]$. To evaluate 
	the value of $\rpath$ in $\rgame$, $\MinPl$ and $\MaxPl$ need to choose delays 
	$t_1, t_2\in [0, 1]$ when firing $\trans_1$ and $\trans_2$. We obtain 
	a set of finite plays $\play$ parametrised by $t_1$ and
	$t_2$ of weight $\weightC(\play) = -t_1 + t_2$. We deduce that 
	$\ValueRG (\rpath) = \inf_{t_1} \sup_{t_2} (t_2 - t_1) = 0$ (when
	$\MinPl$ and $\MaxPl$ choose $t_1 = t_2 = 1$). In particular, from 
	the configuration $((\qloc_0,\{0\}), 0)$, the cyclic path $\rpath$ 
	is not interesting for \MinPl since he only can guarantee the weight 
	$0$. Thus, he must play transition $\trans_4$ after a delay of $1$ 
	unit of time to lead to a value of $-1$. To play optimally, \MaxPl 
	must avoid the transition $\trans_3$, i.e.~all optimal strategies of 
	\MaxPl play in the previous cyclic path $\rpath$ that has a non-negative value but such that certain plays following it have a negative weight. 
\end{exa}
   
Finally, we note that Lemma~\ref{lem:MaxOpt} does not allow us to 
conclude on the decidability of the value problem since we use the 
unknown value $\Value_\rgame$ to define the optimal strategy.

\section{Definition of the unfolding}\label{sec:unfolding}

To compute $\Value_\rgame$, we now define the partial \emph{unfolding} 
of the \WTG $\rgame$ by allowing only one 
occurrence of each cyclic path (from $\rgame$) ending by a reset. In 
particular, when a transition with a reset is taken 
for the first time, we go into a new copy of the \WTG, from which, 
if this transition happens to be chosen one more time, we stop the 
play by jumping into a new target location. The final weight of this 
target location is determined by the value of the cyclic path (ending 
with a reset) that would have just been closed. If the cyclic path has 
a negative value, then we go in a leaf  $\rneg$ of final weight 
$-\infty$ since this is a desirable cycle for \MinPl. Otherwise, we go 
in a leaf $\rpos$ of final weight big enough to be an undesirable 
behaviour for \MinPl, i.e.~$|\Locs|(\maxWeightTrans+ \clockbound \, 
\maxWeightLoc) + \maxWeightFinal$ (for technical reasons that will 
become clear later, we can not simply put a final weight $+\infty$).

A single transition with a reset can be part of two distinct cyclic 
paths, one of negative value and the other of non-negative value, 
as demonstrated in Example~\ref{ex:Ex_cumule-value}. Thus, knowing 
the last transition of the cycle is not enough to compute the value 
of the cyclic path. Instead, we need to record the whole cyclic path: 
copying the game (as done in the non-negative 
setting~\cite{BouyerLarsenMarkeyRasmussen-06}) is not enough. Our 
unfolding needs to remember the path followed so far: their locations 
are thus finite paths of $\rgame$.

\begin{exa}
	\label{ex:Ex_cumule-value}
	In \figurename~\ref{fig:cumule_value}, we have depicted a
	\WTG (left) and a portion of its closure (right), where $\utrans_2$
	is contained in a cyclic path of negative value:
	\begin{displaymath}
	(\qloc_0, \{0\}) \movetoPath{\utrans_3} (\qloc_2, \{0\}) \movetoPath{\utrans'_4} 
	(\qloc_0, \{1\}) \movetoPath{\utrans'_1} (\qloc_1, \{0\}) \movetoPath{\utrans_2} 
	(\qloc_0, \{0\})
	\end{displaymath}
	and another cyclic path of non-negative (zero) value:
	\begin{displaymath}
	(\qloc_0, \{0\}) \movetoPath{\utrans_1} (\qloc_1, \{0\}) \movetoPath{\utrans_2} 
	(\qloc_0, \{0\}) \,.
	\end{displaymath}
\end{exa}

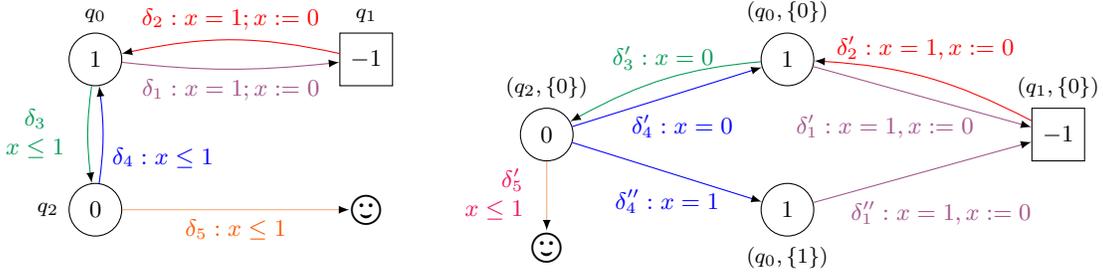
\begin{figure}[tbp]
	\centering
	\begin{tikzpicture}[xscale=.8,every node/.style={font=\footnotesize}, 
	every label/.style={font=\scriptsize}]
	\node[PlayerMax, label={above:$\qloc_1$}] at (4.5, 0) (s1) {$-1$};
	\node[PlayerMin, label={above:$\qloc_0$}] at (0, 0)  (s0) {$1$};
	\node[PlayerMin, label={left:$\qloc_2$}] at (0, -2)  (s2) {$0$};
	\node[target] at (4.5, -2) (s3) {$\LARGE \smiley$};
	
	\draw[->]
	(s0) edge[bend right=5,draw=DarkOrchid] node[below] {
		\textcolor{DarkOrchid}{$\trans_1: x = 1; x := 0$}} (s1)
	(s1) edge[bend right=10,draw=red] node[above] {
		\textcolor{red}{$\trans_2 : x = 1; x := 0$}} (s0)
	(s2) edge[draw=Orange!70] node[below] {
		\textcolor{orange!70!red}{$\trans_5 : x \leq 1$}} (s3)
	(s0) edge[bend right=10,draw=ForestGreen] node[left] {
		\textcolor{ForestGreen}{$\begin{array}{c}
			\trans_3 \\ x \leq 1
			\end{array}$}} (s2)
	(s2) edge[bend right=10,draw=blue] node[right, near start] {
		\textcolor{blue}{$\trans_4 : x \leq 1$}} (s0)
	;
	
	\begin{scope}[xshift=9.5cm]
	\node[PlayerMax, label={above:$(\qloc_1, \{0\})$}] at (6.5, -1) (s1) {$-1$};
	\node[PlayerMin, label={above:$(\qloc_0, \{0\})$}] at (2, 0)  (s01) {$1$};
	\node[PlayerMin, label={below:$(\qloc_0, \{1\})$}] at (2, -2)  (s03) {$1$};
	\node[PlayerMin, label={above:$(\qloc_2,\{0\})$}] at (-2, -1)  (s21) {$0$};
	\node[target] at (-2, -2.5) (s31) {$\LARGE \smiley$};
	
	\draw[->]
	(s01) edge[draw=DarkOrchid] node[below, xshift=-.5cm,yshift=-.1cm] 
		{\textcolor{DarkOrchid}{$\utrans_1: x=1, x:=0$}} (s1)
	(s03) edge[draw=DarkOrchid] node[below,xshift=.25cm,yshift=-.25cm]  	
		{\textcolor{DarkOrchid}{$\utrans'_1: x=1, x:=0$}} (s1)
	(s1) edge[bend right=10,draw=red] 
	node[above,yshift=.1cm] {\textcolor{red}{$\utrans_2: x=1, x:=0$}} (s01)
	(s01) edge[bend right=10,draw=ForestGreen] node[above]
		{\textcolor{ForestGreen}{$\utrans_3: x = 0$}} (s21)
	(s21) edge[draw=blue] 
	node[below,xshift=.2cm,yshift=-.1cm] {\textcolor{blue}{$\utrans_4: x = 0$}} (s01)
	(s21) edge[draw=Orange!70] node[left] 
	{\textcolor{OrangeRed}{$\begin{array}{r}
		\trans_5' \\ x \leq 1
		\end{array}$}} (s31)
	(s21) edge[draw=blue] node[below,yshift=-.1cm] 
		{\textcolor{blue}{$\utrans'_4: x = 1$}} (s03)
	;
	
	\end{scope}
	\end{tikzpicture}
	\caption{A \WTG (left) and a portion of its closure (right)
		where $\utrans_2$ belongs to a cyclic path of non-negative
		value and another cyclic path of negative value.}
	\label{fig:cumule_value}
\end{figure}

In order to obtain a finite acyclic unfolding, we also need to stop cyclic 
paths without resets. To do so, we will rely on a property of 
reset-acyclic \WTG{s}. For such \WTG{s}, it can be 
shown the existence of an $\varepsilon$-optimal strategy for \MinPl 
with a particular shape \cite{brihaye2021oneclock} defined as follows: 

\begin{defiC}
	\label{def:switching}
	A \emph{switching} strategy $\minstrategy$ is described by two
	memoryless strategies $\minstrategy^1$, and~$\minstrategy^2$, 
	as well as a switching threshold $\bornePseudoPoly'$. The strategy 
	$\minstrategy$ then consists in playing strategy $\minstrategy^1$ 
	until either we reach a target location or the finite play has 
	a length of at least $\bornePseudoPoly'$, in which case we switch 
	to strategy~$\minstrategy^2$.
\end{defiC}

Intuitively, $\minstrategy^1$ aims at reaching a cyclic play with negative 
weight, while $\minstrategy^2$ is an attractor to the target. As a  
consequence, we can estimate the maximal number of steps needed 
by~$\minstrategy^2$ to reach the target. Combining this with the 
switching threshold $\bornePseudoPoly'$, we can deduce a threshold 
$\bornePseudoPoly$ that upper bounds the number of
steps under the switching strategy $\minstrategy$ to reach the target. 
Moreover, we can explicitly give the pseudo-polynomial bound $\bornePseudoPoly$ 
since it is given by the previous work of \cite{brihaye2021oneclock}. 
From~\cite[Lemma 3.9]{brihaye2021oneclock}, we know~that 
\begin{displaymath}
\bornePseudoPoly' = O(|\Locs| \times (\maxWeightLoc + |\minstrategy^1| \times 
\maxWeightTrans|\Locs|) + |\minstrategy^1|)
\end{displaymath}
where $|\minstrategy^1|$ is the size of this strategy, i.e.~the number 
of cutpoints in $\ValueRG$ (by~\cite[Theorem 5.9]{brihaye2021oneclock}). Moreover, 
by~\cite[Theorem 5.13]{brihaye2021oneclock}, we have a bound over the number of 
cutpoints in $\ValueRG$, i.e.~$|\minstrategy^1| = O(\maxWeightTrans^4|\Locs|^9)$.
Thus, we deduce that the switching threshold $\bornePseudoPoly'$ is approximated by 
\begin{displaymath}
\bornePseudoPoly' = 
O\left(|\Locs| \times \left[\maxWeightLoc + \maxWeightTrans^4 |\Locs|^9 
\times \maxWeightTrans|\Locs|\right] + \maxWeightTrans^4 |\Locs|^9\right) = 
O\left(|\Locs|^{11} (\maxWeightLoc + \maxWeightTrans^5)\right) \,.
\end{displaymath}
Then, we fix $\bornePseudoPoly''$ to be the number of turns taken by $\minstrategy^2$ 
to reach the target location, which is polynomial in the number of locations of 
the region automaton underlying the game, thus polynomial in the number of 
locations of the game (since there is only one-clock). Overall, this gives a 
definition for $\bornePseudoPoly$ as:
\begin{displaymath}
\bornePseudoPoly = \bornePseudoPoly' + \bornePseudoPoly'' = 
O\left(|\Locs|^{12} (\maxWeightLoc + \maxWeightTrans^5)\right)
\end{displaymath} 
that is polynomial in $|\QLocs|$ (as $|\Locs|$ is polynomial 
in $|\QLocs|$) and in $\maxWeight$. Thus, we obtain the following result.

\begin{lemC}[\cite{brihaye2021oneclock}]
	\label{lem:switching}
	Let $\game$ be a reset-acyclic \WTG. $\MinPl$ has an 
	$\varepsilon$-optimal switching strategy~$\minstrategy$ such that 
	all plays conforming to $\minstrategy$ reach the target within 
	$\bornePseudoPoly$ steps. Moreover, $\bornePseudoPoly$ is 
	polynomial in~$|\QLocs|$ and $\maxWeight$.
\end{lemC}

\begin{figure}
	\centering
	\begin{tikzpicture}
	\draw[fill=gray!20] (-.2,-.6) circle (.9);
	\draw[fill=gray!20] (-3.5,-2.55) circle (.9);
	\draw[fill=gray!20] (3.35,-2.5) circle (.9);
	\draw[fill=gray!20] (-2.1,-4.6) circle (.95);
	\draw[fill=gray!20] (4.4,-4.6) circle (.9);
	
	\begin{scope}[scale=.3]
	\node[PlayerMinmin] at (0, 0) (s1) {};
	\node[PlayerMaxmin] at (-1, -1) (s2) {};
	\node[PlayerMinmin] at (1, -1) (s3) {};
	\node[PlayerMinmin] at (-1.75, -2.25) (s4) {};
	\node[PlayerMinmin] at (-.25, -2.25) (s5) {};
	\node[PlayerMaxmin] at (1, -2.25) (s6) {};
	\node[PlayerMaxmin] at (-2.25, -3.5) (s7) {};
	\node[PlayerMaxmin] at (-.25, -3.5) (s9) {};
	\node[PlayerMinmin] at (-1.25, -3.5) (s10) {};
	
	\draw[-]
	(s1) edge (s2)
	(s1) edge (s3)
	(s2) edge (s4)
	(s2) edge (s5)
	(s3) edge (s6)
	(s4) edge (s7)
	(s4) edge (s10)
	(s5) edge (s9)
	;
	
	\path[vecArrow]
	(s1) edge [decorate] (s2)
	(s1) edge [decorate] (s3)
	(s2) edge [decorate] (s4)
	(s2) edge [decorate] (s5)
	(s3) edge [decorate] (s6)
	(s4) edge [decorate] (s7)
	(s5) edge [decorate] (s9)
	(s4) edge [decorate] (s10)
	;
	\end{scope}
	
	\begin{scope}[xshift=-3.5cm,yshift=-2cm,scale=.3]
	\node[PlayerMinmin] at (0, 0) (s1) {};
	\node[PlayerMaxmin] at (-1, -1) (s2) {};
	\node[PlayerMinmin] at (1, -1) (s3) {};
	\node[PlayerMinmin] at (-1.75, -2.25) (s4) {};
	\node[PlayerMinmin] at (-.1, -2.25) (s5) {};
	\node[PlayerMaxmin] at (2, -1) (s6) {};
	\node[PlayerMaxmin] at (-1.75, -3.5) (s7) {};
	\node[PlayerMaxmin] at (1.1, -2.25) (s9) {};
	\node[PlayerMinmin] at (-.65, -3.5) (s10) {};
	
	\draw[-]
	(s1) edge (s2)
	(s1) edge (s3)
	(s2) edge (s4)
	(s2) edge (s5)
	(s3) edge (s6)
	(s4) edge (s7)
	(s4) edge (s10)
	(s5) edge (s9)
	;
	
	\path[vecArrow]
	(s1) edge [decorate] (s2)
	(s1) edge [decorate] (s3)
	(s2) edge [decorate] (s4)
	(s2) edge [decorate] (s5)
	(s3) edge [decorate] (s6)
	(s4) edge [decorate] (s7)
	(s5) edge [decorate] (s9)
	(s4) edge [decorate] (s10)
	;
	\end{scope}
	
	\begin{scope}[xshift=3.5cm,yshift=-2cm,scale=.3]
	\node[PlayerMaxmin] at (0, 0) (s1) {};
	\node[PlayerMinmin] at (1, -1) (s2) {};
	\node[PlayerMinmin] at (-1.25, 0) (s3) {};
	\node[PlayerMinmin] at (.25, -2) (s4) {};
	\node[PlayerMaxmin] at (1.75, -2) (s5) {};
	\node[PlayerMaxmin] at (-1.25, -1) (s6) {};
	\node[PlayerMinmin] at (.25, -3) (s7) {};
	\node[PlayerMaxmin] at (-1.25, -3) (s8) {};
	\node[PlayerMinmin] at (-2.5, -1) (s9) {};
	\node[PlayerMaxmin] at (-1.25, -2) (s10) {};
	
	\draw[-]
	(s1) edge (s2)
	(s1) edge (s3)
	(s2) edge (s4)
	(s2) edge (s5)
	(s3) edge (s6)
	(s4) edge (s7)
	(s4) edge (s10)
	(s7) edge (s8)
	(s6) edge (s9)
	;
	
	\path[vecArrow]
	(s1) edge [decorate] (s2)
	(s1) edge [decorate] (s3)
	(s2) edge [decorate] (s4)
	(s2) edge [decorate] (s5)
	(s3) edge [decorate] (s6)
	(s4) edge [decorate] (s7)
	(s4) edge [decorate] (s10)
	(s6) edge [decorate] (s9)
	(s7) edge [decorate] (s8)
	;
	\end{scope}
	
	\begin{scope}[xshift=-2cm,yshift=-4cm,scale=.3]
	\node[PlayerMaxmin] at (0, 0) (s1) {};
	\node[PlayerMinmin] at (-1, -1) (s2) {};
	\node[PlayerMinmin] at (1, -1) (s3) {};
	\node[PlayerMinmin] at (-1.75, -2) (s4) {};
	\node[PlayerMaxmin] at (-.25, -2) (s5) {};
	\node[PlayerMaxmin] at (1, -2) (s6) {};
	\node[PlayerMinmin] at (-1.75, -3) (s7) {};
	\node[PlayerMaxmin] at (-1.75, -4) (s8) {};
	\node[PlayerMaxmin] at (-.75, -3) (s10) {};
	\node[PlayerMaxmin] at (2.25, -2) (s9) {};
	
	\draw[-]
	(s1) edge (s2)
	(s1) edge (s3)
	(s2) edge (s4)
	(s2) edge (s5)
	(s3) edge (s6)
	(s4) edge (s7)
	(s7) edge (s8)
	(s4) edge (s10)
	(s6) edge (s9)
	;
	
	\path[vecArrow]
	(s1) edge [decorate] (s2)
	(s1) edge [decorate] (s3)
	(s2) edge [decorate] (s4)
	(s2) edge [decorate] (s5)
	(s3) edge [decorate] (s6)
	(s4) edge [decorate] (s7)
	(s4) edge [decorate] (s10)
	(s6) edge [decorate] (s9)
	(s7) edge [decorate] (s8)
	;
	\end{scope}
	
	\begin{scope}[xshift=4.5cm,yshift=-4cm,scale=.3]
	\node[PlayerMinmin] at (0, 0) (s1) {};
	\node[PlayerMaxmin] at (-1, -1) (s2) {};
	\node[PlayerMinmin] at (1, -1) (s3) {};
	\node[PlayerMinmin] at (-.25, -2.25) (s4) {};
	\node[PlayerMinmin] at (-1.5, -2.25) (s5) {};
	\node[PlayerMaxmin] at (1, -2.25) (s6) {};
	\node[PlayerMaxmin] at (0, -3.5) (s7) {};
	\node[PlayerMaxmin] at (-2.75, -2.25) (s9) {};
	\node[PlayerMinmin] at (-1, -3.5) (s10) {};
	
	\draw[-]
	(s1) edge (s2)
	(s1) edge (s3)
	(s2) edge (s4)
	(s2) edge (s5)
	(s3) edge (s6)
	(s4) edge (s7)
	(s4) edge (s10)
	(s5) edge (s9)
	;
	
	\path[vecArrow]
	(s1) edge [decorate] (s2)
	(s1) edge [decorate] (s3)
	(s2) edge [decorate] (s4)
	(s2) edge [decorate] (s5)
	(s3) edge [decorate] (s6)
	(s4) edge [decorate] (s7)
	(s4) edge [decorate] (s10)
	(s5) edge [decorate] (s9)
	;
	\end{scope}

	\node[target,scale=.75] at (0,-4.75) (target) {\Huge $\smiley$};
	\node[leaf,scale=.75] at (-4, -6) (rneg) {\small $\rneg$};
	\node[leaf,scale=.75] at (1.5, -6) (rpos) {\small $\rpos$};
	\node[leaf,scale=.75] at (-.1, -2.5) (revil) {$\revil$};

	\node[PlayerMaxmin,draw=none] at (-0.66,-1.09) (s1) {};
	\node[PlayerMaxmin,draw=none] at (-.37,-1.05) (s01) {};
	\node[PlayerMaxmin,draw=none] at (-.09,-1.05) (s02) {};
	\node[PlayerMinmin,draw=none] at (-3.5,-2) (s2) {};
	\node[PlayerMinmin,draw=none] at (.3,-.3) (s3) {};
	\node[PlayerMaxmin,draw=none] at (3.5,-2) (s4) {};
	\node[PlayerMaxmin,draw=none] at (.3,-.675) (s5) {};
	
	\node[PlayerMaxmin,draw=none] at (-4.025,-3.05) (s6) {};
	\node[PlayerMaxmin,draw=none] at (-3.7,-3.045) (s61) {};
	\node[PlayerMaxmin,draw=none] at (-3.175,-2.7) (s62) {};
	\node[PlayerMinmin,draw=none] at (-3.2,-2.3) (s8) {};
	\node[PlayerMaxmin,draw=none] at (-2,-4) (s9) {};
	\node[PlayerMaxmin,draw=none] at (-2.9,-2.3) (s10) {};
	
	\node[PlayerMaxmin,draw=none] at (4,-2.6) (s11) {};
	\node[PlayerMaxmin,draw=none] at (3.13,-2.615) (s111) {};
	\node[PlayerMaxmin,draw=none] at (2.75,-2.3) (s112) {};
	\node[PlayerMinmin,draw=none] at (3.55,-2.905) (s12) {};
	\node[PlayerMinmin,draw=none] at (4.5,-4) (s13) {};
	\node[PlayerMaxmin,draw=none] at (3.13,-2.9) (s14) {};
	
	\node[PlayerMinmin,draw=none] at (-2.525,-4.9) (s15) {};
	\node[PlayerMaxmin,draw=none] at (-1.3,-4.6) (s151) {};
	\node[PlayerMaxmin,draw=none] at (-2.225,-4.9) (s152) {};
	\node[PlayerMaxmin,draw=none] at (-2.1,-4.6) (s16) {};
	\node[PlayerMaxmin,draw=none] at (-2.55,-5.2) (s17) {};
	
	\node[PlayerMaxmin,draw=none] at (4.5,-5.05) (s18) {};
	\node[PlayerMaxmin,draw=none] at (3.7,-4.675) (s181) {};
	\node[PlayerMaxmin,draw=none] at (4.25,-5.05) (s182) {};
	\node[PlayerMinmin,draw=none] at (4.8,-4.29) (s20) {};
	\node[PlayerMaxmin,draw=none] at (4.8,-4.67) (s19) {};
	
	\draw[->]
	(s1) edge[draw=DarkBlue] node[above] 
	{\scriptsize \textcolor{DarkBlue}{$\begin{array}{c}
			\trans_1 \\ x:= 0
			\end{array}$}} (s2)
	(s01) edge[draw=DarkOrchid, dashed] (revil.110) 
	(s02) edge[draw=DarkOrchid, dashed] (revil) 
	(s3) edge[draw=red] node[above,xshift=.1cm] 
	{\scriptsize \textcolor{red}{$\begin{array}{c}
			\trans_2 \\ x:= 0
			\end{array}$}} (s4)
	(s5) edge[draw=ForestGreen,bend left=30] (target)
	
	(s6) edge[draw=DarkBlue] node[left] 
	{\scriptsize \textcolor{DarkBlue}{$\begin{array}{c}
			\trans_1 \\ x:= 0
			\end{array}$}} (rneg)	
	(s61) edge[draw=DarkOrchid, dashed] (revil.210) 
	(s62) edge[draw=DarkOrchid, dashed] (revil) 
	(s8) edge[draw=red] node[above,xshift=.5cm,yshift=-.2cm, near end] 
	{\scriptsize \textcolor{red}{$\begin{array}{c}
			\trans_2 \\ x:= 0
			\end{array}$}} (s9)
	(s10) edge[bend left,draw=ForestGreen] (target)
	
	(s11) edge[bend left=20,draw=DarkBlue] node[right] 
	{\scriptsize \textcolor{DarkBlue}{$\begin{array}{c}
			\trans_1 \\ x:= 0
			\end{array}$}} (s13)
	(s111) edge[draw=DarkOrchid, dashed] (revil) 
	(s112) edge[draw=DarkOrchid, dashed] (revil.10) 
	(s12) edge[draw=red] node[left, yshift=-0.4cm,xshift=-.1cm] 
	{\scriptsize \textcolor{red}{$\begin{array}{c}
			\trans_2 \\ x:= 0
			\end{array}$}} (rpos)
	(s14) edge[bend right=15,draw=ForestGreen] (target)
	
	(s15) edge[draw=red] node[below] 
	{\scriptsize \textcolor{red}{$\begin{array}{c}
			\trans_2 \\ x:= 0
			\end{array}$}} (rneg)
	(s151) edge[draw=DarkOrchid, dashed] (revil) 
	(s152) edge[draw=DarkOrchid, dashed, bend right=50] (revil) 
	(s16) edge[bend right,draw=DarkBlue] node[below,xshift=-.25cm] 
	{\scriptsize \textcolor{DarkBlue}{$\begin{array}{c}
			\trans_1 \\ x:= 0
			\end{array}$}} (rpos)
	(s17) edge[bend right=40,draw=ForestGreen] (target)
	
	(s18) edge[draw=DarkBlue, bend left=05] node[below] 
	{\scriptsize \textcolor{DarkBlue}{$\begin{array}{c}
			\trans_1 \\ x:= 0
			\end{array}$}} (rpos)
	(s181) edge[draw=DarkOrchid, dashed] (revil.325) 
	(s182) edge[draw=DarkOrchid, dashed, bend left=10] (revil) 
	(s20) edge[draw=red, in=-50, out=-45] node[right] 
	{\scriptsize \textcolor{red}{$\begin{array}{c}
			\trans_2 \\ x:= 0
			\end{array}$}} (rpos)
	(s19) edge[in=-10, out=-90,draw=ForestGreen] (target)
	;
	\end{tikzpicture}
	\caption{Scheme of the unfolding of a closure of a \WTG.}
	\label{fig:unfolding}
\end{figure}
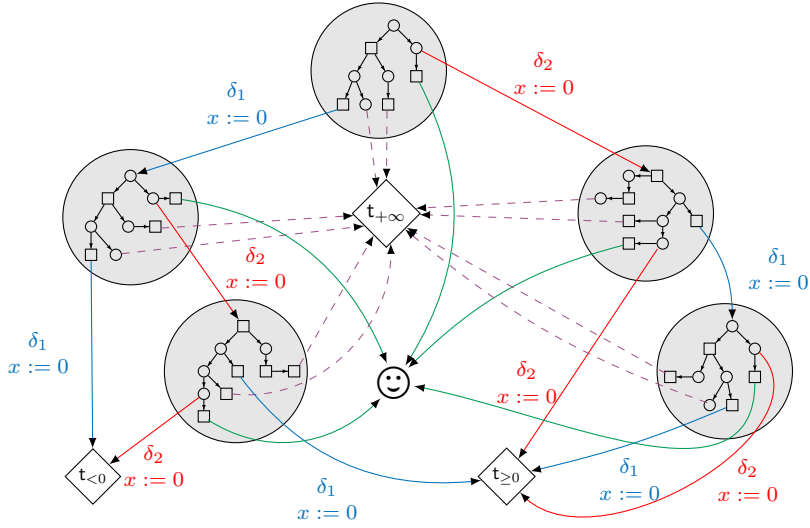

Now, between two transitions with a reset, we obtain a reset-acyclic \WTG. 
As a consequence, since \MinPl can play almost optimally using a switching 
strategy, we can bound the number of steps between two transitions with a 
reset by $\bornePseudoPoly$. This property allows us to avoid incorporating 
cycles in the unfolding: we cut the unfolding when the play becomes longer 
than $\bornePseudoPoly$ since the last seen transition with a reset. In 
this case, we will jump into a new target location, $\revil$, whose final 
weight is equal to $+\infty$ since it is an undesirable behaviour for \MinPl.

The scheme of the unfolding is depicted in \figurename~\ref{fig:unfolding} 
when the closure of a \WTG contains two transitions with a reset, 
$\trans_1$ and $\trans_2$, each belonging to several cycles of different 
values (negative and non-negative). Inside each grey component, only 
transitions with no reset are unfolded for at most $\bornePseudoPoly$ steps 
by only keeping, in the current location, the path followed so far. 
Transitions with a reset induce a change of components: these are in between 
the components. The second time they are visited, the value of the cycle 
it closes is computed, and we jump in $\rneg$ or $\rpos$ depending on the 
sign of the~value.

\begin{defi}
  \label{def:depliage}
  The \emph{unfolding} of $\rgame$ from the initial location $\locinit$ is the 
  (a priori infinite) \WTG 
  $\ugame = \langle\uLocsMin, \uLocsMax, \uLocsT, \uLocsUrg,
  \uTrans, \uweight, \uweightT\rangle$ with
  $\uLocsMin \subseteq \PPathsMin$, $\uLocsMax \subseteq \PPathsMax$,
  $\uLocsT \subseteq \LocsT \cup \{\rpos, \rneg, \revil\}$ such that 
  \begin{itemize} 
  \item $\uLocs = \uLocsMin\uplus\uLocsMax\uplus\uLocsT$ and $\uTrans$
    are the smallest sets such that $\locinit\in \uLocs$ and for all
    $\uloc\in\uLocsMin\uplus\uLocsMax$ and $\trans \in \Trans$, if
    $\textsc{Next}(\uloc, \trans) = (\uloc', \utrans)$ then
    $\uloc'\in\uLocs$ and $\utrans\in\uTrans$ (where $\textsc{Next}$
    is defined in Algorithm~\ref{algo:Next});
  \item
    $\uLocsUrg = \{\uloc\in\uLocs \mid \last(\uloc) \in \LocsUrg\}$;
  \item for all $\uloc \notin \uLocsT$,
    $\uweight(\uloc) = \Cweight(\last(\uloc))$;
  \item for all $\uloc \in \uLocsT$, for all
    $\val$,
    \begin{align*}
      \uweightT(\uloc, \val) &= \CweightT(\uloc, \val) \quad
                               \text{if } \uloc\in\LocsT & 
      \uweightT(\rpos,\val) &= |\Locs|(\maxWeightTrans+ \clockbound \,
                              \maxWeightLoc) + \maxWeightFinal\\
      \uweightT(\rneg,\val) &= - \infty &
     \uweightT(\revil,\val) &= + \infty \,.
    \end{align*}
  \end{itemize}
\end{defi}

As expected, the definition of \textsc{Next} guarantees that a ``new'' 
target location is reached when the length between two resets is too 
long or when a transition with a reset appears two times. Moreover, the 
length of the path in a location that is not a target, given by the 
application of $\textsc{Next}$, strictly increases.
This allows us to show that $\ugame$ is a finite and acyclic~\WTG.

\begin{algorithm}[tbp]
	\begin{algorithmic}[1] 		
		\Function{\textsc{Next}}{$\uloc, \trans = (\loc_1, \reg, \reset, w, \loc_2)$}:
		\Comment{\quad$\last(\uloc) = \loc_1$}
		\If{$\loc_2 \in \LocsT$} 
		$\uloc' := \loc_2$ \label{line:uloc_1} 
		\ElsIf{$\reset = \{\clockx\}$}
		\If{$|\uloc|_\trans = 0$}
		$\uloc' := \uloc \movetoPath{\trans} \loc_2$ \label{line:reset}
		\Else 
		\{ $\uloc := \uloc_1 \movetoPath{\trans} \uloc_2$\;
		\IfThenElse{$\Value_{\rgame}(\uloc_2 \movetoPath{\trans} \loc_2) \geq 0$}
		{$\uloc' := \rpos$\label{line:rpos}}
		{$\uloc' := \rneg$\label{line:rneg}}\}
		\EndIf
		\Else 
		\{ $\uloc := \uloc_1 \cdot \uloc_2$ where $\uloc_2$ contains no 
		reset and $|\uloc_2|$ is maximal\;
		\IfThenElse{$|\uloc_2| = \bornePseudoPoly$}
		{$\uloc' := \revil$ \label{line:revil2}}
		{$\uloc' := \uloc \movetoPath{\trans} \loc_2$\label{line:plusone2}}
		\}
		\EndIf
		\State $\utrans := (\uloc, \reg, \reset, w, \uloc')$ 
		\Comment{\quad $\Dproj(\utrans) := \trans$} \label{line:dproj}
		\State \Return $(\uloc', \utrans)$
		\EndFunction
	\end{algorithmic}
	\caption{Function $\textsc{Next}$ that maps pairs
		$(\uloc, \trans)\in\PPaths_{\rgame}\times \CTrans$ to pairs
		$(\uloc', \utrans)$ composed of a finite path $\uloc'$ of
		$\rgame$ (or $\rpos$, or $\rneg$, or $\revil$) and a new
		transition $\utrans$ of the unfolding $\ugame$.}
	\label{algo:Next}
\end{algorithm}

\begin{lem}
  \label{lem:sizeU}
  The \WTG $\ugame$ is acyclic and has a 
  finite set of locations of cardinality at most exponential in $|\QLocs|$ 
  and $\maxWeight$.
\end{lem}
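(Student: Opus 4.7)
The plan is to argue acyclicity directly from the \textsc{Next} function and then to bound the length of each non-target location as a path of $\rgame$, after which counting is routine. Acyclicity follows by inspection of Algorithm~\ref{algo:Next}: the only lines that return a non-target successor (lines~\ref{line:reset} and~\ref{line:plusone2}) do so by appending exactly one transition to $\uloc$, so $|\uloc'| = |\uloc| + 1$. Combined with the absence of outgoing transitions from target locations, this rules out any cycle in $\uTrans$.

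For the size bound, I would prove by induction along the inductive generation of $\uLocs$ from $\locinit$ (viewed as a path of length zero) that every $\uloc \in \uLocsMin \uplus \uLocsMax$ satisfies two structural invariants: \emph{(i)} each reset transition of $\CTrans$ appears at most once in $\uloc$, since a second occurrence is redirected to $\rpos$ or $\rneg$ at lines~\ref{line:rpos}--\ref{line:rneg}; and \emph{(ii)} the maximal reset-free suffix of $\uloc$ has length at most $\bornePseudoPoly$, since attempting to append a non-reset transition when this suffix already has length $\bornePseudoPoly$ is diverted to $\revil$ at line~\ref{line:revil2}. Decomposing $\uloc$ around its distinct reset transitions gives at most $|\CTrans| + 1$ maximal reset-free blocks (each of which was, at some earlier step of the construction, the maximal reset-free suffix of the then-current location), each of length at most $\bornePseudoPoly$, whence
\[
|\uloc| \le L := |\CTrans| + (|\CTrans|+1)\,\bornePseudoPoly.
\]
By Lemma~\ref{lem:switching} the threshold $\bornePseudoPoly$ is polynomial in $|\QLocs|$ and $\maxWeight$, and $|\CTrans|$ is polynomial in $|\QLocs|$, so $L$ is polynomial in $|\QLocs|$ and $\maxWeight$.

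Since each non-target location is a path of $\rgame$ of length at most $L$, the number of such locations is at most $\sum_{k=0}^{L} |\CTrans|^{k} \le |\CTrans|^{L+1}$, which is exponential in $|\QLocs|$ and $\maxWeight$; the polynomially many target locations (those of $\LocsT$ together with $\rpos$, $\rneg$, and $\revil$) do not affect this asymptotic bound. I expect the only delicate point to be the bookkeeping for invariant~\emph{(ii)}: one has to verify that appending a reset transition collapses the maximal reset-free suffix back to length zero while appending a non-reset transition (guarded by $|\uloc_2| < \bornePseudoPoly$ in \textsc{Next}) grows it by exactly one, so the suffix length never exceeds~$\bornePseudoPoly$ throughout the construction.
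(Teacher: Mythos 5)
Your proof is correct and follows essentially the same route as the paper: acyclicity from the fact that \textsc{Next} strictly extends the path, and finiteness by decomposing each location around its (pairwise distinct) reset transitions into reset-free blocks of length at most $\bornePseudoPoly$. The only cosmetic difference is that you bound the number of blocks by $|\CTrans|+1$ whereas the paper uses the sharper $|\Transr|+1$ (only transitions with a reset can separate blocks); this looseness is harmless since both give the stated exponential bound.
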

\begin{proof}
	We start by proving that $\ugame$ is an acyclic \WTG. The function 
	\textsc{Next} never removes a transition from a path 
	$\uloc \in \uLocsMin \cup \uLocsMax$ that is given as input. In particular, 
	the function \textsc{Next} produces a transition from $\uloc$ to $\uloc'$ 
	that is an extension of $\uloc$ (i.e.~$|\uloc'| > |\uloc|$) or a target
	location. Thus, all paths in $\ugame$ are acyclic, i.e.~$\ugame$ is acyclic.
	
	Now, we prove that $\ugame$ is a finite \WTG by proving that 
	for all locations $\uloc \in \uLocsMin\cup\uLocsMax$, the length of the path is 
	upper-bounded by $(|\Transr| + 1)\bornePseudoPoly$, 
	where we let $\Transr$ be the subset of transitions with a reset. 
	Locations of $\ugame$ are 
	built by successive applications of $\textsc{Next}$. We 
	show by induction (on the number of such applications) that every 
	location $\uloc \in \uLocsMin \cup \uLocsMax$ can be decomposed as follows:
	$\uloc = \uloc'_0 \movetoPath{\delta_0} \uloc'_1 \movetoPath{\delta_1}
	\cdots \movetoPath{\delta_k} \uloc'_k$
	where transitions $\delta_i$ belong to~$\Transr$ and are pairwise distinct, 
	and where $\uloc'_i$ belong to $\PPaths_{\rgame}$ and have length at most
	$\bornePseudoPoly$. As a direct consequence of this property, we have 
	$k\leq |\Transr|$, and we easily deduce the expected bound on $|\uloc|$, and thus the desired bound on the number of locations of $\ugame$ by the bound on~$\bornePseudoPoly$ of Lemma~\ref{lem:switching}. 
	
	We now proceed to the induction on the number of applications of 
	$\textsc{Next}$. As a base case, we have $\uloc = \locinit$, and the 
	property trivially holds. Assume now that the property holds for $\uloc$, 
	with a decomposition
	$\uloc = \uloc'_0 \movetoPath{\trans_0} \uloc'_1 \movetoPath{\trans_1} 
	\cdots \movetoPath{\delta_k} \uloc'_k $. We fix some 
	$\trans = (\loc_1, \reg, \reset, w, \loc_2) \in \Trans$ with $\last(\uloc) = 
	\loc_1$, and we consider 
	$\textsc{Next}(\uloc, \trans) = (\uloc', \utrans)$ with
	$\uloc' \in \uLocsMin\cup\uLocsMax$.  We distinguish cases according to
	the definition of $\textsc{Next}$. Observe that
	$\uloc'\in \uLocsMin\cup\uLocsMax$ excludes the case when
	$\textsc{Next}$ sets $\uloc'$ in lines~\ref{line:uloc_1},
	\ref{line:rpos}, or \ref{line:revil2} (when $|\uloc'_k| = \bornePseudoPoly$). 
	The following cases may occur:
	\begin{itemize}
		\item if $\textsc{Next}$ sets $\uloc'$ in line~\ref{line:reset}, then 
		we have $\trans \in \Transr$, $|\uloc|_\trans=0$, and
		$\uloc' = \uloc \movetoPath{\trans} \loc_2$. Hence, a correct
		decomposition of $\uloc'$ is obtained by adding $\trans$ and an empty path $\uloc'_{k+1}$
		to the ones of $\uloc$.
		\item if $\textsc{Next}$ sets $\uloc'$ in line~\ref{line:plusone2},
		while $\uloc'\neq \revil$ (by hypothesis), then
		$|\uloc'_k|<\bornePseudoPoly$. A correct decomposition of $\uloc'$ is
		then obtained from the one of $\uloc$ by replacing $\uloc'_k$ with 
		$\uloc'_k \movetoPath{\trans} \loc_2$.
		\qedhere
	\end{itemize}	
\end{proof}

\section{\texorpdfstring{How \MaxPl can control negative "cycles" in the unfolding}{How Max can control negative "cycles" in the unfolding}}
\label{sec:negative-cycles-unfolding}

In this section, we show another good property of $\ugame$, mimicking the one of Lemma~\ref{lem:MaxOpt} that \MaxPl has an optimal memoryless strategy in $\rgame$ avoiding cyclic plays with a negative weight. In the unfolding 
$\ugame$, there are no cyclic plays, but we will be able to obtain a similar result: \MaxPl can play optimally with a memoryless strategy while making sure that in-between two occurrences of the same transition with a reset (which would result in a cyclic play in the original game $\rgame$), the play has a non-negative weight. More formally, we want to obtain the following lemma, that we show in the rest of the section: 

\begin{lem}
  \label{lem:U_MaxOpt}
  In the \WTG $\ugame$, $\MaxPl$ has a memoryless optimal
  strategy~$\maxstrategy$ such that if a finite play 
  $\play = \play_1 \moveto{\delay_1,\utrans_1} \play_2
  \moveto{\delay_2, \utrans_2} (\rpos,0)$ is conforming to
  $\maxstrategy$ with $\Dproj(\utrans_1) = \Dproj(\utrans_2) 
  \in \Transr$ (i.e.~the same transition with a reset in the original \WTG $\rgame$), then
  $\weightC(\play_2 \moveto{\delay_2, \utrans_2} (\rpos,0)) \geq 0$.
\end{lem}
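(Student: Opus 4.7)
My plan is to follow closely the blueprint used for Lemma~\ref{lem:MaxOpt}, adapted to the unfolding. In a first step, I define $\maxstrategy$ memorylessly on $\ugame$ by using the fixpoint characterisation $\Value_\ugame = \F(\Value_\ugame)$ given by Theorem~\ref{theo:fixedpoint} and selecting, at each \MaxPl configuration, a pair $(\delay,\utrans)$ realising the supremum in $\F(\Value_\ugame)$, exactly as in equation~\eqref{eq:max_opt}. This supremum is indeed attained because $\Value_\ugame$ is continuous on each closed region: the proof of Lemma~\ref{lem:val_continue} transfers verbatim to $\ugame$, since $\ugame$ is built from the already-closed \WTG $\rgame$, so that every guard of $\uTrans$ is closed and the continuity argument of \cite{brihaye2021oneclock} applies.

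In a second step, I prove that $\maxstrategy$ is optimal in $\ugame$. Because $\ugame$ is acyclic and has finitely many locations (Lemma~\ref{lem:sizeU}), every play from a fixed configuration has length bounded by the depth of $\ugame$. A backward induction on this depth, identical in shape to the optimality argument given in the proof of Lemma~\ref{lem:MaxOpt}, yields $\Value_\ugame^\maxstrategy(\uloc,\val) \geq \Value_\ugame(\uloc,\val)$ at every configuration: at \MinPl configurations we invoke the infimum form of the fixpoint equation, and at \MaxPl configurations we invoke the $\argmax$ choice of $\maxstrategy$.

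Finally, I establish the ``cycle'' inequality for the play $\play$. By Algorithm~\ref{algo:Next} line~\ref{line:rpos}, the fact that $\play$ reaches $(\rpos,0)$ via a second occurrence of $\trans = \Dproj(\utrans_1) = \Dproj(\utrans_2) \in \Transr$ forces the cyclic path $\rpath$ in $\rgame$ closed at $\utrans_2$ to satisfy $\Value_\rgame(\rpath)\geq 0$. Since weights, guards and delays are preserved along the projection $\Dproj$ of Algorithm~\ref{algo:Next} line~\ref{line:dproj}, the portion of $\play$ between the two occurrences of $\trans$ projects to a play of $\rgame$ following $\rpath$, of the same cumulated weight. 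I then mimic the argument of Lemma~\ref{lem:MaxOpt}.\eqref{item:MaxOpt_Value}: fix $\varepsilon>0$ and pick along $\rpath$ an $\varepsilon$-infimum-realising choice of \MinPl delays in the definition of $\Value_\rgame(\rpath)$. Combining these delays with the \MaxPl delays prescribed by $\maxstrategy$ along the cycle portion of $\play$ yields, via the telescoping of the fixpoint inequalities $\Value_\ugame(\uloc_i,\val_i)\leq \uweight(\utrans_i) + \delay_i\,\uweight(\uloc_i) + \Value_\ugame(\uloc_{i+1},\val_{i+1})$ (analogous to Lemma~\ref{lem:MaxOpt}.\eqref{item:MaxOpt_Weight}), a bound of the form $\weightC(\play_2 \moveto{\delay_2,\utrans_2}(\rpos,0)) \geq \Value_\rgame(\rpath) - \varepsilon \geq -\varepsilon$; letting $\varepsilon \to 0$ gives the desired non-negativity.

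The main obstacle is the last step: unlike in Lemma~\ref{lem:MaxOpt}.\eqref{item:MaxOpt_Weight}, the play in $\ugame$ does not close back on itself, so a pure telescoping of $\Value_\ugame$ leaves a residual involving the final weight at $\rpos$. Bridging this requires transporting the \emph{cyclic}-path $\inf\sup$-value $\Value_\rgame(\rpath)\geq 0$ into a statement about the $\ugame$-play; I expect that the correct formulation uses the sup-achieving nature of the \MaxPl choices prescribed by $\maxstrategy$ (which rely on the continuity ensured in the first step) to show that these choices are no worse than the ones realising the $\sup$ in $\Value_\rgame(\rpath)$, so that the cycle weight in $\ugame$ dominates $\Value_\rgame(\rpath)-\varepsilon$ as claimed above.
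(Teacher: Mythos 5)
Your first two steps essentially match the paper's approach: since every guard of $\ugame$ is inherited from $\rgame$ and is already a closure of a region, $\ugame$ coincides with its own closure $\overline{\ugame}$, so Lemma~\ref{lem:MaxOpt} applies to $\ugame$ verbatim and already delivers a memoryless optimal~$\maxstrategy$; the fresh backward induction on depth you sketch for optimality is unnecessary.

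The genuine gap is the last step, and you do not close it. It is more than a residual bookkeeping issue. (i)~Choosing $\varepsilon$-infimum-realising \MinPl delays along $\rpath$ is backwards: you must bound the weight of the \emph{given} play $\play_2\moveto{\delay_2,\utrans_2}(\rpos,0)$, whose \MinPl delays are fixed by the play; replacing them yields a different play. The trick you cite is exactly what the paper uses for Lemma~\ref{lem:MaxOpt}.\eqref{item:MaxOpt_Value}, which passes from a statement about \emph{all} conforming cyclic plays to a statement about a path-value; here you need the converse direction, from a path-value bound to a bound on a single given play. (ii)~Telescoping the fixpoint inequalities on the game value $\ValueU$ along the given play only yields
\[
\weightC\bigl(\play_2\moveto{\delay_2,\utrans_2}(\rpos,0)\bigr)\ \geq\ \ValueU(\uloc,0) - \uweightT(\rpos,0)\,,
\]
where $(\uloc,0)$ is the first configuration of $\play_2$, and there is no reason for the right-hand side to be non-negative: $\ValueU(\uloc,0)$ is a full game value at the root of the post-reset subtree, where $\MinPl$ may pick a route of far smaller value than $\uweightT(\rpos,0)$. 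The missing object is the \emph{path-value} $\ValueU^\val(\upath)$ (Definition~\ref{def:valCycle} applied inside $\ugame$) of the specific path $\upath$ followed by $\play_2\moveto{\delay_2,\utrans_2}(\rpos,0)$. The paper proves a projection lemma (Lemma~\ref{lem:Pproj}) stating $\ValueU^\val(\upath)=\ValueRG^\val(\Pproj(\upath))$, together with a suffix property of $\Pproj$, and then carries out a suffix induction along the play (the infimum at \MinPl configurations, the argmax property of $\maxstrategy$ at \MaxPl ones) to obtain $\weightC(\play_2\moveto{\delay_2,\utrans_2}(\rpos,0))\geq \ValueU^0(\upath)$; the finiteness of $\uweightT(\rpos,\cdot)$ is precisely what makes the final weight cancel from both sides of this inequality. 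Finally $\ValueU^0(\upath)=\ValueRG^0(\Pproj(\upath))$ is the cyclic-path value that line~\ref{line:rpos} of Algorithm~\ref{algo:Next} has certified non-negative. You only manipulate the $\rgame$-path-value $\Value_\rgame(\rpath)$; without the $\ugame$-path-value and the transport lemma relating the two, the bridge you anticipate in your last sentence cannot be built.
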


\begin{rem}
	The fact that $\ugame$ is acyclic is crucial in this result: we can 
	not guarantee that the value of the path ending in $\rpos$ is 
	non-negative if we would have defined $\ugame$ with grey components 
	(in \figurename{~\ref{fig:unfolding}}) containing cyclic paths 
	without a reset. Indeed, the values of cyclic paths are not preserved 
	by concatenation. 
	For instance, in the \WTG $\game$ depicted on the left of \figurename~\ref{fig:cumule_value}, 
	we can see that $\ValueG(\qloc_0 \movetoPath{\trans_1} \qloc_1 
	\movetoPath{\trans_2} \qloc_0) = 0$ ($\MinPl$ and $\MaxPl$ must delay $1$ 
	in each location), and $\ValueG(\qloc_0 \movetoPath{\trans_3} \qloc_2
	\movetoPath{\trans_4} \qloc_0) = 0$. However, when we
	concatenate these two cyclic paths, we obtain the cycle $\qloc_0 \movetoPath{\trans_3} \qloc_2 \movetoPath{\trans_4}  \qloc_0
	\movetoPath{\trans_1} \qloc_1 \movetoPath{\trans_2} \qloc_0)$ of value $-1$. 
%
\end{rem}

The proof of Lemma~\ref{lem:U_MaxOpt} essentially consists in applying 
the result of Lemma~\ref{lem:MaxOpt} in $\ugame$ to define a 
memoryless optimal strategy with the desired property. However, Lemma~\ref{lem:MaxOpt} 
holds only in the closure of \WTG{s} (Example~\ref{example:Max-not-optimal} 
gives a counter-example when the \WTG is not a closure), i.e.~a priori, the 
result holds only on the closure $\overline{\ugame}$ of $\ugame$.  
Nevertheless, we notice that the guards of all transitions of $\ugame$ 
come from $\rgame$ and the regions of $\ugame$ are thus the ones of $\rgame$. 
Therefore, apart from target locations, only locations $(\uloc, I)$ in
$\overline{\ugame}$ with $\uloc$ ending in a location of the form 
$(\loc, I)$ are reachable. Thus $\ugame$ and $\overline{\ugame}$ are the 
same \WTG, and the result of Lemma~\ref{lem:MaxOpt} transfers to $\ugame$ as well. 

The second argument of the proof is checking that $\ugame$ preserves the value 
of paths, i.e.~the value in $\ugame$ of a finite path $\upath$, 
$\ValueU^{\val}(\upath)$, is equal to the value in $\rgame$ of its projection 
given by $\Dproj$. In particular, we define a new projection 
function $\Pproj$ as an extension over finite paths of $\Dproj$ such that for all 
finite paths in $\ugame$ with at least one transition, 
$\upath = \uloc_1 \movetoPath{\utrans} \upath'$ with 
$\uloc_1 \in \uLocs \setminus \uLocsT$, we let $\Pproj(\upath)$ be equal to:
\begin{displaymath}
\begin{cases}
\last(\uloc_1) \movetoPath{\Dproj(\utrans)} \loc_2 & 
\text{if $\upath' \in \uLocs$ and 
$\Dproj(\utrans) = (\last(\uloc_1), \reg, \reset, w, \loc_2)$}  \\
\last(\uloc_1) \movetoPath{\Dproj(\utrans)} \Pproj(\upath') & 
\text{otherwise .} 
\end{cases}
\end{displaymath}
We note that, $\Pproj(\upath)$ is always a finite path in $\rgame$ with the 
same length of $\upath$ and it 
satisfies the following properties: 
\begin{lem}
	\label{lem:Pproj}
	Let $\upath \in \PPaths_{\ugame}$ be a path with at 
	least one transition, then 
	\begin{enumerate}
		\item\label{item:Pproj_val} for all valuations $\val$, 
		$\ValueU^{\val}(\upath) = \ValueRG^{\val}(\Pproj(\upath))$; 
		\item\label{item:Pproj_suffix} if $\last(\upath) \notin \uLocsT$, 
		then $\Pproj(\upath)$ is a suffix of $\last(\upath)$.
	\end{enumerate} 
\end{lem}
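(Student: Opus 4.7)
The plan is to handle both items via a common structural observation about Algorithm~\ref{algo:Next}. Direct inspection shows that whenever $\textsc{Next}(\uloc,\trans) = (\uloc', \utrans)$ produces a non-target location $\uloc' \in \uLocsMin \cup \uLocsMax$, we are necessarily in either line~\ref{line:reset} (a fresh reset) or line~\ref{line:plusone2} (no reset and the reset-free suffix not yet saturated): in both cases $\uloc' = \uloc \movetoPath{\trans} \loc_2$, so $\last(\uloc') = \loc_2$. Moreover, line~\ref{line:dproj} stores $\Dproj(\utrans) = \trans$, so the guard, reset, weight and endpoint location of $\utrans$ coincide with those of $\trans$; and the owner is preserved via $\uLocsMin \subseteq \PPathsMin$ and $\uLocsMax \subseteq \PPathsMax$.

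For item~2 (the suffix property), I would write $\upath = \uloc_1 \movetoPath{\utrans_1} \cdots \movetoPath{\utrans_{n-1}} \uloc_n$ with $\uloc_n \notin \uLocsT$, and first observe that this forces every intermediate $\uloc_i$ to be non-target as well, since target locations of $\ugame$ are dead-ends of $\textsc{Next}$. Applying the structural observation at each step yields $\uloc_{i+1} = \uloc_i \movetoPath{\Dproj(\utrans_i)} \last(\uloc_{i+1})$, and unrolling from $\uloc_1$ to $\uloc_n$ shows that the path $\uloc_n$ contains $\last(\uloc_1) \movetoPath{\Dproj(\utrans_1)} \cdots \movetoPath{\Dproj(\utrans_{n-1})} \last(\uloc_n)$ as a suffix. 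A short induction on the recursive definition of $\Pproj$ then identifies this suffix with $\Pproj(\upath)$: the second clause of $\Pproj$ fires $n-2$ times before the first clause closes the recursion, with the $\loc_2$ of its definition equal to $\last(\uloc_n)$.

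For item~1 (the value equality), I would induct on $n$ and decompose $\upath = \uloc_1 \movetoPath{\utrans} \upath'$. By Definition~\ref{def:depliage} the location weight satisfies $\uweight(\uloc_1) = \Cweight(\last(\uloc_1))$, and by the structural observation above the admissible edges $(\uloc_1, \val) \moveto{\delay, \utrans} (\uloc_2, \val')$ in $\sem{\ugame}$ are in bijection with the admissible edges $(\last(\uloc_1), \val) \moveto{\delay, \Dproj(\utrans)} (\loc_2, \val')$ in $\sem{\rgame}$, with identical successor valuations and identical weight contributions $\delay\,\uweight(\uloc_1) + \uweight(\utrans) = \delay\,\Cweight(\last(\uloc_1)) + \Cweight(\Dproj(\utrans))$. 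The inner value $\ValueU^{\val'}(\upath')$ equals $\ValueRG^{\val'}(\Pproj(\upath'))$ by the inductive hypothesis, or equals $0$ on both sides in the base case where $\upath'$ is a single location; taking $\inf$ or $\sup$ over $\delay$ according to the owner of $\last(\uloc_1)$ (which matches the owner of $\uloc_1$) then propagates the equality. The only subtlety, which I do not expect to be a genuine obstacle, is aligning the two clauses of the definition of $\Pproj$ with the base case of the induction; this is settled by the fact that $\last(\uloc_2) = \loc_2$ holds in every non-target branch of $\textsc{Next}$, and symmetrically $\uloc_2 = \loc_2$ in the target branch (line~\ref{line:uloc_1}), ensuring that the projected terminal location always coincides with the actual terminal location in $\rgame$.
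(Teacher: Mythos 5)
Your proposal is correct and follows essentially the same strategy as the paper: both prove item~1 by induction on the length of $\upath$ using that $\ugame$ preserves location/transition weights and that the value of a singleton path is zero, and both prove item~2 from the observation that every non-target application of $\textsc{Next}$ appends exactly one transition (your ``unrolling'' presentation is just the paper's induction written out in closed form). One small imprecision in your final sentence: the claim that ``$\uloc_2 = \loc_2$ in the target branch'' only holds when $\textsc{Next}$ exits at line~\ref{line:uloc_1}, not when it produces one of the leaves $\rpos$, $\rneg$, $\revil$ (lines~\ref{line:rpos}, \ref{line:rneg}, \ref{line:revil2}), where $\uloc_2$ and the location $\loc_2$ of $\Dproj(\utrans)$ genuinely differ. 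This does not create a gap, though, because item~1 never needs the terminal locations to agree: $\ValueU^{\val'}(\uloc_2)$ and $\ValueRG^{\val'}(\loc_2)$ are both $0$ by Definition~\ref{def:valCycle} regardless of what those locations are, which is exactly how the paper closes the base case.
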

\begin{proof}
	\begin{enumerate}
		\item We prove this property when the first location of $\upath$ belongs to \MinPl. The case where it belongs to \MaxPl is analogous 
		when we replace the infimum by a supremum. We reason by 
		induction on the length of $\upath$. First, we suppose that $\upath$ 
		contains exactly one transition, i.e.~$\upath = \uloc_1 \movetoPath{\utrans} \uloc_2$ and :
		\begin{displaymath}
		\ValueU^{\val}(\upath) = \inf_\delay \big(\delay\,\uweight(\uloc_1) + 
		\uweight(\utrans) + \ValueU^{\val'}(\uloc_2) \big) 
		\end{displaymath} 
		where $\val' = \val + \delay$ if $\utrans$ does not contain a reset, or 
		$\val' = 0$ otherwise. Since the value of an empty path is null, we have 
		$\ValueU^{\val'}(\uloc_2) = 0 = \ValueRG^{\val'}(\loc_2)$ where 
		$\loc_2$ is given by $\Dproj(\utrans)$. Moreover, as $\ugame$ preserves the weight 
		of transitions and locations, we obtain~that 
		\begin{displaymath}
		\ValueU^{\val}(\upath) 
		= \inf_\delay \big(\delay\,\weight(\last(\uloc_1)) + 
		\weight(\Dproj(\utrans)) + \ValueRG^{\val'}(\loc_2) \big) \,.
		\end{displaymath}
		By definition of $\Pproj$, we remark that  
		$\Pproj(\upath) = \last(\uloc_1) \movetoPath{\Dproj(\utrans)} \loc_2$.
		Thus, since $\ugame$ preserves transitions with a reset, we deduce~that 
		$\ValueU^{\val}(\upath) = \ValueRG^{\val}(\Pproj(\upath))$.

		Now, we suppose that $\upath = \uloc_1 \movetoPath{\utrans} \upath'$ with $\upath'$ a path in $\ugame$. Since $\ugame$ preserves the weight of transitions and locations, we have:
		\begin{align*}
		\ValueU^{\val}(\upath) 
		&= \inf_\delay \big(\delay \,\uweight(\uloc_1) + 
		\uweight(\utrans) + \ValueU^{\val'}(\upath') \big) \\
		&= \inf_\delay \big(\delay \,\weight(\last(\uloc_1)) + 
		\weight(\Dproj(\utrans)) + \ValueU^{\val'}(\upath') \big)
		\end{align*}
		where $\val' = \val + \delay$ if $\utrans$ does not contain a reset, or 
		$\val' = 0$ otherwise. Now, the induction hypothesis applied to $\upath'$
		implies that 
		\begin{displaymath}
		\ValueU^{\val}(\upath) 
		= \inf_t \big(t\,\weight(\last(\uloc_1)) + 
		\weight(\Dproj(\utrans)) + \ValueRG^{\val'}(\Pproj(\upath')) \big) 
		\end{displaymath}
		Finally, we obtain that $\ValueU^{\val}(\upath) = \ValueRG^{\val}(\Pproj(\upath))$, 
		since $\ugame$ preserves transitions with a reset and 
		$\Pproj(\upath) = \last(\uloc_1) \movetoPath{\Dproj(\utrans)} \Pproj(\upath')$.

		\item We reason by induction on the length of $\upath$. First, we suppose 
		that $\upath$ contains only one transition, i.e.~$\upath = \uloc_1 \movetoPath{\utrans} \uloc_2$ with 
		$\uloc_2 \in \uLocsMin \cup \uLocsMax$, and  
		$\Pproj(\upath) = \last(\uloc_1) \movetoPath{\Dproj(\utrans)} \loc_2$ 
		where $\loc_2$ is given by $\Dproj(\utrans)$. By definition of \textsc{Next}, since $\uloc_2 \notin \uLocsT$, we note that 
		$\uloc_2 = \uloc_1 \movetoPath{\Dproj(\utrans)} \loc_2$. 
		Since $\last(\uloc_1)$ is a suffix of $\uloc_1$, it follows that 
		$\Pproj(\upath)$ is a suffix of $\last(\upath) = \uloc_2$. 
		
		Otherwise, we suppose that $\upath = \uloc_1 \movetoPath{\utrans} \upath'$ 
		with $\last(\upath') \notin \uLocsT$, and 
		$\Pproj(\upath) = \last(\uloc_1) \movetoPath{\Dproj(\utrans)} \Pproj(\upath')$. 
		By induction hypothesis, 
		$\Pproj(\upath')$ is a suffix of $\last(\upath')$, i.e.~there exists a finite path $\rpath$ 
		of $\rgame$ such that 
		$\last(\upath') = \rpath \cdot \Pproj(\upath')$. Now, we remark that 
		$\rpath = \uloc_1$, since each application of \textsc{Next} (that does not reach 
		a target location) adds exactly one transition in the path of the next 
		location: $\utrans$ is a transition between $\uloc_1$ and $\uloc_2$ where 
		$\uloc_2 = \uloc_1 \movetoPath{\Dproj(\utrans)} \loc_2$ is the first 
		location of $\upath'$. 
		Finally, we obtain a suffix of $\last(\upath)$ since $\last(\uloc_1)$ is a 
		suffix of $\uloc_1$.
		\qedhere
	\end{enumerate}
\end{proof}

\noindent Finally, we have the tools to finish the proof of Lemma~\ref{lem:U_MaxOpt}. 
	As explained before, we apply the result of Lemma~\ref{lem:MaxOpt} in $\ugame$ 
	(since the closure of $\ugame$ describes the same \WTG as $\ugame$)
	to obtain a memoryless optimal strategy~$\maxstrategy$ for \MaxPl. 
	It remains to show that if $\play = \play_1 \moveto{\delay,\utrans_1} \play_2
	\moveto{\delay_2, \utrans_2} (\rpos,0)$ is conforming to
	$\maxstrategy$ with $\Dproj(\utrans_1) = \Dproj(\utrans_2)$ containing 
	a reset, then 
	$\weightC(\play_2 \moveto{\delay_2, \utrans_2} (\rpos,0)) \geq 0$. 
	Let $\upath$ be the path of $\ugame$ followed by 
	$\play_2 \moveto{\delay_2, \utrans_2} (\rpos,0)$. We start by
	claiming that
	\begin{equation}
	\label{eq:U_OptMax-1}
	\weightP(\play_2 \moveto{\delay_2, \utrans_2} (\rpos,0)) \geq
	\ValueU(\upath) + \weightT(\rpos, 0) 
	\end{equation}
	where this is the place where the fact that $\weightT(\rpos, 0)$ is not equal to $+\infty$ is crucial.
	 
	Equation~\eqref{eq:U_OptMax-1} and 
	Lemma~\ref{lem:Pproj}.\eqref{item:Pproj_val} (with $\val = 0$) allow us to conclude as follows. First, 
	\begin{displaymath}
	\weightC(\play_2 \moveto{\delay_2, \utrans_2} (\rpos,0)) =
	\weightP(\play_2 \moveto{\delay_2, \utrans_2} (\rpos,0)) -
	\weightT(\rpos, 0) \geq  \ValueU(\upath) \geq \ValueRG(\rpath) 
	\end{displaymath} 
	where $\rpath = \Pproj(\upath)$. Then, since 
	$\upath = \upath' \movetoPath{\utrans_2} \rpos$ 
	where $\play_2$ follows $\upath'$, we deduce that $\rpath'$ is a suffix of 
	$\upath'$ where $\rpath = \rpath' \movetoPath{\trans} \loc$ (by 
	Lemma~\ref{lem:Pproj}.\eqref{item:Pproj_suffix} applied to $\upath'$). 
	In particular, the definition of $\textsc{Next}$ on $\rpath'$ and $\trans$ 
	(as $\rpos$ is reached) guarantees that 
	$\ValueRG(\rpath) \geq 0$. 
	
	To conclude the proof, we need to show~\eqref{eq:U_OptMax-1}. We 
	reason by induction on suffixes $\play'$ of 
	$\play_2 \moveto{\delay_2, \utrans_2} (\rpos,0)$ 
	showing that 
	\begin{displaymath}
	\weightP(\play') \geq \ValueU^{\val'}(\upath') + 
	\weightT(\rpos, 0)
	\end{displaymath}
	where $\upath'$ is the path followed by~$\play'$, 
	and $\val'$ is the first valuation of $\play'$.
	For the suffix $\play' = (\rpos, 0)$, then
	\begin{displaymath}
	\weightP(\play') = \weightT(\rpos, 0) = \ValueU^0(\rpos) +
	\weightT(\rpos, 0) \,.
	\end{displaymath}
	Otherwise, we suppose that
	$\play' = (\uloc, \val') \moveto{\delay, \utrans} \play''$. In particular, 
	we fix $\val'' = \val' + \delay$ the first valuation of $\play''$ 
	($\val'' \neq 0$ since $\play_2$ does not contain a transition with a reset) and 
	$\upath' = \uloc \movetoPath{\utrans} \upath''$ with $\play''$ follows $\upath''$. 
	Moreover, we deduce that
	\begin{align*}
	\weightP(\play') 
	&= \delay \, \uweight(\uloc) + \uweight(\utrans) + \weightP(\play'') \\
	&\geq \delay \, \uweight(\uloc) + \uweight(\utrans) +
	\ValueU^{\val'+\delay}(\upath'') + \weightT(\rpos, 0)  
	\qquad \text{(by induction hypothesis)}\,.
	\end{align*} 
	To conclude the induction case, we distinguish two cases. 
	\begin{itemize}
		\item If $\uloc \in \uLocsMin$, then
		\begin{align*}
		\weightP(\play') 
		&\geq \inf_{\delay \text{ s.t. } (\uloc,\val') \moveto{\delay,\utrans} 
			(\uloc', \val'+\delay)} 
		\left(\delay \, \uweight(\uloc) + \uweight(\utrans) + 
		\ValueU^{\val'+\delay}(\upath'') + \weightT(\rpos,0)\right) \\
		&= \inf_{\delay \text{ s.t. } (\uloc,\val') \moveto{\delay,\utrans} 
			(\uloc', \val' + \delay)} \left(\delay \, \uweight(\uloc) + 
		\uweight(\utrans) + \ValueU^{\val'+\delay}(\upath'')\right) + 
		\weightT(\rpos,0) \\ 
		&= \ValueU^{\val'}(\upath') + \weightT(\rpos,0) \,.
		\end{align*}
		
		\item If $\uloc \in \uLocsMax$, since $\maxstrategy$ chooses $\utrans$,
		we can deduce that
		\begin{align*}
		\weightP(\play') 
		&\geq \sup_{\delay \text{ s.t. } (\uloc,\val') \moveto{\delay,\utrans} 
			(\uloc', \val' + \delay)} \left(\delay \, \uweight(\uloc) + 
		\uweight(\utrans) + \ValueU^{\val'+\delay}(\upath'') + 
		\weightT(\rpos,0)\right) \\
		&= \sup_{\delay \text{ s.t. } (\uloc,\val') \moveto{\delay,\utrans} 
			(\uloc',\val'+\delay)} \left(\delay \, \uweight(\uloc) + 
		\uweight(\utrans) + \ValueU^{\val'+\delay}(\upath'')\right) + 
		\weightT(\rpos,0) \\ 
		&= \ValueU^{\val'}(\upath') + \weightT(\rpos,0) \,.
		\end{align*}
	\end{itemize}
	Since it is obtained after transition $\utrans_1$ that resets the
	clock, the first valuation of 
	$\play_2 \moveto{\delay_2, \utrans_2} (\rpos,0)$ is $0$. Thus, by induction, 
	we obtain \eqref{eq:U_OptMax-1} as expected. 
 
\section{Value of the unfolding}
\label{sec:equal-values}

The most difficult part of the proof of Theorem~\ref{thm:complexity} 
is to show that the unfolding preserves the value from $\rgame$. Remember 
that we have fixed an initial location 
$\locinit = (\qlocinit, I_{\mathsf{i}})$ to build $\ugame$.

\begin{thm}
  \label{theo:value}
  For all $\val \in I_{\mathsf{i}}$,
  $\ValueRG(\locinit, \val) = \ValueU(\locinit, \val)$.
\end{thm}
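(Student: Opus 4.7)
I will prove the two inequalities separately, using strategy transport between $\rgame$ and $\ugame$ via the projection $\Dproj$ (together with $\Pproj$) from Section~\ref{sec:negative-cycles-unfolding} in one direction, and via the canonical lift given by iterated application of $\textsc{Next}$ in the other. The choice of final weights for $\rpos$, $\rneg$, $\revil$ should match exactly the three ``bad'' scenarios for $\MinPl$ (respectively, iterating a non-negative cycle, iterating a negative cycle, and staying too long between two resets), and my job is to show that each of these scenarios is correctly priced.

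\textbf{Direction $\ValueU(\locinit,\val)\geq \ValueRG(\locinit,\val)$.} Apply Lemma~\ref{lem:MaxOpt} in $\rgame$ to get a memoryless optimal strategy $\rmaxstrategy^*$ whose conforming cyclic plays have non-negative weight and whose conforming cyclic paths ending in a reset have non-negative value. Lift it to $\ugame$ by setting $\umaxstrategy(\uloc,\val)=\rmaxstrategy^*(\last(\uloc),\val)$ (well-defined in $\ugame$ since the guard of the corresponding $\ugame$-transition matches). Fix any $\uminstrategy$ in $\ugame$, let $\uplay$ be the induced play, and let $\play$ be its $\rgame$-projection (well-defined as long as $\uplay$ stays in $\uLocsMin\cup\uLocsMax$). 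I distinguish cases on the terminal location of $\uplay$. If $\uplay$ reaches the actual target, then $\weightP(\uplay)=\weightP(\play)\geq \ValueRG^{\rmaxstrategy^*}(\locinit,\val) = \ValueRG(\locinit,\val)$ because $\play$ conforms to $\rmaxstrategy^*$. The case $\uplay$ ending in $\rneg$ is ruled out: by construction of $\textsc{Next}$, reaching $\rneg$ means $\play$ has just closed a cyclic path of negative value, which is conforming to $\rmaxstrategy^*$, contradicting Lemma~\ref{lem:MaxOpt}(\ref{item:MaxOpt_Value}). The delicate case is $\uplay$ ending in $\rpos$: here $\weightP(\uplay)=\weightC(\play)+|\Locs|(\maxWeightTrans+M\,\maxWeightLoc)+\maxWeightFinal$, and I lower-bound this by extending $\play$ in $\rgame$ under $\rmaxstrategy^*$ (choosing $\MinPl$ moves to continue into an acyclic continuation to the target, which exists since under $\rmaxstrategy^*$ all cycles contribute non-negatively so they are never beneficial to $\MinPl$). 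The extended play has payoff $\geq \ValueRG$ and its total weight is at most $\weightC(\play)+|\Locs|(\maxWeightTrans+M\,\maxWeightLoc)+\maxWeightFinal$, which is exactly $\weightP(\uplay)$. The $\revil$ case is handled similarly, since $\revil$'s weight $+\infty$ dominates trivially.

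\textbf{Direction $\ValueU(\locinit,\val)\leq \ValueRG(\locinit,\val)$.} Fix $\varepsilon>0$. Since $\ugame$ is acyclic (Lemma~\ref{lem:sizeU}), $\MinPl$ has $\varepsilon$-optimal strategies there; but I need to produce one whose payoff is controlled by $\ValueRG$. I plan to build $\uminstrategy^*$ segment by segment: between two successive reset transitions, $\ugame$ behaves as a reset-acyclic fragment, so by Lemma~\ref{lem:switching} I can pick an $\varepsilon$-optimal switching strategy of $\MinPl$ that reaches either the actual target or the next reset within $\bornePseudoPoly$ steps. Gluing these segments, $\uminstrategy^*$ never lets the play reach $\revil$. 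Against any $\umaxstrategy$, the play $\uplay$ either reaches the actual target, $\rneg$, or $\rpos$. In the first case the payoff matches the one of the projected play in $\rgame$, which is almost-optimal by construction; reaching $\rneg$ only helps $\MinPl$ since its weight is $-\infty$; and if $\uplay$ reaches $\rpos$, the just-closed cycle in $\rgame$ has non-negative value, so iterating it does not benefit $\MinPl$, meaning that the extra cost $|\Locs|(\maxWeightTrans+M\,\maxWeightLoc)+\maxWeightFinal$ at $\rpos$ is compensated by what $\MinPl$ would have incurred by an acyclic continuation in $\rgame$ (which is bounded by the same quantity).

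\textbf{Main obstacle.} The hardest step is the analysis of the $\rpos$ branch in both directions, i.e.~justifying that the precise constant $|\Locs|(\maxWeightTrans+M\,\maxWeightLoc)+\maxWeightFinal$ balances the two effects: it must be large enough so that $\MinPl$ in $\ugame$ never gains by aiming at it (otherwise $\ValueU<\ValueRG$), yet small enough so that $\MaxPl$ in $\ugame$ does not benefit artificially by being forced to route plays there (otherwise $\ValueU>\ValueRG$). The argument crucially uses Lemma~\ref{lem:U_MaxOpt}, which ensures that under $\MaxPl$'s optimal memoryless strategy in $\ugame$, any play reaching $\rpos$ via a repeated reset has a non-negative segment between the two occurrences, together with the simple but essential observation that $|\Locs|(\maxWeightTrans+M\,\maxWeightLoc)+\maxWeightFinal$ upper-bounds the cumulated weight of any acyclic play in $\rgame$. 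Plugging this into the case analysis should close both inequalities and yield the theorem.
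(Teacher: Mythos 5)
Your first direction (lifting the optimal $\rmaxstrategy$ from Lemma~\ref{lem:MaxOpt} to $\ugame$ via the last location, ruling out $\rneg$ by Lemma~\ref{lem:MaxOpt}(2), and handling $\rpos$ by extending the projected $\rgame$-play with an acyclic attractor) is essentially the paper's argument, and the case analysis goes through.

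The second direction, however, has a genuine gap. You want to build $\uminstrategy^*$ directly in $\ugame$ by gluing switching strategies, and then claim it achieves $\leq\ValueRG+\varepsilon$ "against any $\umaxstrategy$". But the inequality you must establish is $\weightP(\uplay)\leq\ValueRG(\locinit,\val)+\varepsilon$ whenever $\uplay$ reaches $\rpos$. Write $\uplay=\play_1\moveto{}\play_2\moveto{}(\rpos,0)$, split at the first occurrence of the doubled reset transition. Then $\weightP(\uplay)=\weightC(\play_1)+\weightC(\play_2\moveto{}(\rpos,0))+\uweightT(\rpos,0)$. Lemma~\ref{lem:U_MaxOpt} gives $\weightC(\play_2\moveto{}(\rpos,0))\geq 0$ (only under $\MaxPl$'s specific optimal strategy, which already restricts which $\umaxstrategy$ you may consider --- but that is fine if you fix $\umaxstrategy^*$ explicitly); this makes the payoff \emph{bigger}, not smaller, so it works against you in this direction. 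What you actually need is the reverse link: that $\weightC(\play_1)+\uweightT(\rpos,0)$ is at most the payoff of some play in $\rgame$ conforming to the same $\MinPl$ strategy, and that this payoff is $\leq\ValueRG+\varepsilon$. You never establish either half. "Not beneficial for $\MinPl$ to loop" is a statement about $\MinPl$'s choice in $\rgame$, not a bound on what $\MaxPl$ can extract via $\rpos$ in $\ugame$.

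The paper sidesteps this by going the other way: it constructs $\rmaxstrategy$ in $\rgame$ from the optimal $\umaxstrategy$ via a cycle-erasing map $\Phi\colon\rFPlays\to\uFPlays$ (the dual of your $\proj$), then picks a specific play $\play$ in $\rgame$ conforming to $\rmaxstrategy$ and to an $\varepsilon$-optimal switching strategy $\minstrategy$, and shows by induction on prefixes that $\weightC(\Phi(\play'))\leq\weightC(\play')$ whenever $\Phi(\play')$ is not yet at a target. The crucial $\rpos$ step then reads: removing the cycle from $\play'$ decreases $\weightC$ by exactly $\weightC(\play_2)$, while the jump to $\rpos$ only adds $\weightC((\uloc',\val')\moveto{}\rpos)=\delay\,\uweight(\uloc')+\uweight(\utrans')$; Lemma~\ref{lem:U_MaxOpt} guarantees $\weightC(\play_2\moveto{}(\rpos,0))\geq 0$ so the two effects cancel correctly in the inequality. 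That bookkeeping is precisely what your sketch of the $\rpos$ case does not carry out. The paper also notes that Lemma~\ref{lem:switching} must be slightly enriched to speak of plays between two resets (not just of reset-acyclic games), which your proposal takes for granted. Your high-level intuitions (a non-negative cycle costs $\MinPl$ nothing, $\uweightT(\rpos,0)$ bounds any acyclic $\rgame$-play) are the right ingredients, but the argument as written does not combine them into the required inequality.
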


\begin{figure}[t]
	\centering
	\begin{tikzpicture}[xscale=1.2,every node/.style={font=\footnotesize}]
	\node[PlayerMax, draw=none] at (0, 0) (Rplay) {$\rFPlays$};
	\node[PlayerMax, draw=none] at (4, 0) (Uplay) {$\uFPlaysEtoile$};
	\node[PlayerMax, draw=none] at (0, -2) (rStrat)
        {$\Rpos\times \CTrans$};
	\node[PlayerMax, draw=none] at (4, -2) (uStrat)
        {$\Rpos\times \uTrans$};
	
	\draw[->,rounded corners]
        (Rplay) edge[bend right=10] node[below] {$\Phi$} (Uplay)
		(Rplay) edge node[left] {$\rmaxstrategy$} (rStrat)
		(Uplay) edge[bend right=10] node[above] {$\proj$} (Rplay) 
		(Uplay) edge node[right] {$\umaxstrategy$} (uStrat)
		(uStrat) edge[bend right=10] node[above] {$\id \times \Dproj$} (rStrat)
		(rStrat) edge[bend right=10,dotted] node[below] {$\id \times \textsc{Next}$} (uStrat);
	\end{tikzpicture}
	\caption{Scheme showing the links between the different
          objects defined for the proof of Theorem~\ref{theo:value}
          where $\uFPlaysEtoile$ is the set of finite plays of
          $\ugame$ avoiding target locations $\rpos$ and $\rneg$.}
	\label{fig:schema_preuve}
\end{figure}

We prove Theorem~\ref{theo:value} in this section, splitting the 
proof into two inequalities.

\paragraph{First inequality}
We prove first that $\ValueRG(\locinit, \val) \leq \ValueU(\locinit,\val)$, which can be rewritten as: 
\begin{displaymath}
\ValueRG(\locinit, \val)  \leq
\sup_{\umaxstrategy} \ValueU^{\umaxstrategy}(\locinit,\val) \,. 
\end{displaymath}
We must thus show that $\MaxPl$ can
guarantee to always do at least as good in $\ugame$ as in~$\rgame$. We
thus fix an optimal strategy $\rmaxstrategy$ in $\rgame$
obtained by Lemma~\ref{lem:MaxOpt}: in
particular, $\ValueRG(\locinit, \val) =
\ValueRG^{\rmaxstrategy}(\locinit,\val)$. We show the existence of a strategy
$\umaxstrategy$ in $\ugame$ such that $\ValueRG^{\rmaxstrategy}(\locinit, \val) \leq
\ValueU^{\umaxstrategy}(\locinit, \val)$, i.e. for all plays $\play$ conforming
to $\umaxstrategy$, there exists a play conforming to~$\rmaxstrategy$
with a weight at most the weight of $\play$. As it is depicted in
\figurename~\ref{fig:schema_preuve}, the strategy~$\umaxstrategy$ is
defined via a \emph{projection} of plays of $\ugame$ in $\rgame$: we
use the mapping $\textsc{Next}$ to send back transitions of $\CTrans$
to~$\uTrans$.

%
More formally, the projection operator $\proj$ 
projects finite plays of $\ugame$ starting in $\locinit$
(since these are the only ones we need to take care of) to finite
plays of $\rgame$. For this reason, from now on, $\uFPlays$ and
$\rFPlays$ denote the subsets of plays that start in location
$\locinit$. Moreover, we limit ourselves to projecting plays of
$\ugame$ that do not reach the targets $\rneg$ and $\rpos$, since
otherwise there is no canonical projection in $\rgame$. We
thus let $\uFPlaysEtoile$ be all such finite plays of $\uFPlays$ that
do not end in $\rneg$ or $\rpos$. 
The projection function $\proj\colon \uFPlaysEtoile \to \rFPlays$ is
defined inductively on finite plays $\play \in \uFPlaysEtoile$ by
letting~$\proj(\play)$~be
\begin{displaymath}
\begin{cases}
(\locinit, \val) & \text{if $\play = (\locinit, \val) \in \uLocs$;} \\
\proj(\play') \moveto{\delay, \Dproj(\utrans)} (\last(\uloc), \val) 
& \text{if $\play = \play' \moveto{\delay, \utrans} (\uloc, \val)$;} \\
\proj(\play') \moveto{\delay, \Dproj(\utrans)} (\loc', \val) 
& \text{if $\play = \play' \moveto{\delay, \utrans} (\revil, \val)$ 
	and $\Dproj(\utrans) = (\loc, \reg, \reset, w, \loc')$.}
\end{cases}
\end{displaymath}
It fulfils the following properties:
\begin{lem}
	\label{lem:proj}
	For all plays $\play \in \uFPlaysEtoile$, 
	\begin{enumerate}
		\item\label{item:proj_last} if $\last(\play)=(\uloc,\val)$ 
		with $\uloc \neq \revil$,
		then $\last(\proj(\play)) = (\last(\uloc),\val)$;
		\item\label{item:proj_weight}
		$\weightC(\play) = \weightC(\proj(\play))$;
		\item\label{item:proj_strong} if $\last(\play) = (\uloc,\val)$ with
		$\uloc\notin\LocsT$, then $\proj(\play)$ follows $\uloc$.
	\end{enumerate}
\end{lem}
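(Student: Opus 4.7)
The plan is to prove the three properties simultaneously by induction on the length of $\play \in \uFPlaysEtoile$. The base case $\play = (\locinit, \val)$ is immediate from the defining clause $\proj(\locinit, \val) = (\locinit, \val)$: property~(1) holds because $\locinit$ is a length-$0$ path so $\last(\locinit) = \locinit$, property~(2) holds because both weights equal $0$, and property~(3) holds because a zero-transition play trivially follows the single-location path $\locinit$ (and $\locinit \notin \LocsT$ is the case we care about).

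For the inductive step, write $\play = \play' \moveto{\delay, \utrans} (\uloc', \val')$ with $\play' \in \uFPlaysEtoile$ (prefixes of plays in $\uFPlaysEtoile$ remain in $\uFPlaysEtoile$ since $\rpos, \rneg$ only appear as final locations), and let $(\uloc,\val)$ be the last configuration of $\play'$. I apply the induction hypothesis to $\play'$ to access all three properties, and then split along the three defining cases of $\proj$ according to whether $\uloc' \in \uLocs$, $\uloc' \in \LocsT$, or $\uloc' = \revil$. Property~(1) is a direct reading of the definition of $\proj$ in the first two cases, and does not need to be established in the third case (which is excluded by the hypothesis $\uloc' \neq \revil$).

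For property~(2), the crucial point is that $\uweight(\uloc) = \Cweight(\last(\uloc))$ and $\uweight(\utrans) = \Cweight(\Dproj(\utrans))$, by the definition of $\ugame$ together with line~\ref{line:dproj} of Algorithm~\ref{algo:Next}. Hence the weight contributed by the new edge in $\play$ equals the weight contributed by the edge labelled $(\delay, \Dproj(\utrans))$ in $\proj(\play)$; the induction hypothesis on~(2) applied to $\play'$ then yields $\weightC(\play) = \weightC(\proj(\play))$. Here I also need property~(1) applied to~$\play'$ in order to justify that $\proj(\play')$ actually ends at the configuration $(\last(\uloc), \val)$, so the edge I append in $\rgame$ is well-defined; this is legitimate because $\play'$ is strictly shorter than $\play$ and $\uloc \neq \revil$ (otherwise no transition could extend $\play'$).

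For property~(3), assume $\last(\play) = (\uloc', \val')$ with $\uloc' \notin \LocsT$. Inspecting Algorithm~\ref{algo:Next}, the only lines producing a non-target $\uloc'$ are lines~\ref{line:reset} and \ref{line:plusone2}, both of which set $\uloc' = \uloc \movetoPath{\Dproj(\utrans)} \loc_2$ (where $\loc_2 = \last(\uloc')$). The induction hypothesis on~(3) gives that $\proj(\play')$ follows $\uloc$, and concatenating the transition $\Dproj(\utrans)$ yields that $\proj(\play)$ follows $\uloc'$. The main technical point, and the only place where the case analysis is delicate, is keeping the bookkeeping straight between the three clauses of the definition of $\proj$ and the four branches of $\textsc{Next}$: in particular, (2) must be proved for all three cases (including the $\revil$ case), whereas (1) and (3) are only required in the cases where they are non-vacuous, and the inductive invocation of (1) on the prefix is what makes the extension consistent at the $\rgame$ level.
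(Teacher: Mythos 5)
Your proposal is correct and follows essentially the same route as the paper. The only cosmetic difference is that you package the three items as a single simultaneous induction, whereas the paper proves item~(1) directly (it is a one-line case analysis on the definition of $\proj$) and runs two separate but parallel inductions for items~(2) and~(3); the key ingredients you invoke — weight preservation in $\ugame$ via $\uweight(\uloc') = \Cweight(\last(\uloc'))$ and $\uweight(\utrans) = \Cweight(\Dproj(\utrans))$, the use of item~(1) on the prefix to align $\last(\proj(\play'))$, and the fact that a non-target output of \textsc{Next} is produced only on lines~\ref{line:reset} and~\ref{line:plusone2} and therefore extends the path by exactly one transition — are precisely those of the paper's proof.
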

\begin{proof}
	\begin{enumerate}
		\item Since $\uloc \neq \revil$, this is direct from a case 
		analysis on the definition of $\proj$.
		
		\item We reason by induction on the length of
		$\play \in \uFPlaysEtoile$. First, we suppose that 
		$\play = (\locinit, \val)$, then we have $\proj(\play) = \play$ and   
		$\weightC(\play) = 0 = \weightC(\proj(\play))$.  Now, we suppose
		that $\play = \play' \moveto{\delay, \utrans} (\uloc, \val)$,
		with $\play' \in \uFPlaysEtoile$ ending in location $\uloc'$ such 
		that $\uloc' \notin \uLocsT$. Then,
		\begin{align*}
		\weightC(\play) 
		&= \weightC(\play') + \delay \, \weight'(\uloc') + \weight'(\utrans) \\ 
		&= \weightC(\play') + 
		\delay\, \Cweight(\last(\uloc')) + \Cweight(\Dproj(\utrans))
		\end{align*}
		since $\ugame$ preserves the weights of $\rgame$, i.e.~$\weight'(\uloc') = \Cweight(\last(\uloc'))$, and
		$\weight'(\utrans) = \Cweight(\Dproj(\utrans))$. Moreover, 
		the induction hypothesis applied to $\play'$ implies that 
		\begin{align*}
		\weightC(\play) 
		&= \weightC(\proj(\play')) + 
		\delay\, \Cweight(\last(\uloc')) + \Cweight(\Dproj(\utrans)) \\
		&= \weightC(\proj(\play')) + 
		\delay\, \Cweight(\last(\proj(\play'))) + \Cweight(\Dproj(\utrans))
		\end{align*}
		since $\last(\uloc') = \last(\proj(\play'))$ by the first item 
		(as $\uloc' \neq \revil$). Finally, by the definition of $\proj(\play)$, we
		conclude that $\weightC(\play) = \weightC(\proj(\play))$.
		
		\item We reason by induction on the length of $\play \in \uFPlaysEtoile$ 
		that does not reach a target location. If $\play = (\locinit, \val)$, 
		the property is trivial. Now, we suppose that 
		$\play = \play' \moveto{\delay, \utrans} (\uloc, \val)$,
		with $\play' \in \uFPlaysEtoile$ ending in a configuration $(\uloc', \val')$ 
		such that $\uloc' \notin \uLocsT$. In particular, we have 
		$\proj(\play) = \proj(\play') \moveto{\delay, \trans} (\last(\uloc), \val)$ 
		with $\trans = \Dproj(\utrans)$, and, by the induction hypothesis,
		$\proj(\play')$ follows $\uloc'$. Moreover, we have
		$\textsc{Next}(\uloc', \trans) = (\uloc, \utrans)$ such that $\uloc$ 
		must be obtained from $\uloc'$ on lines~\ref{line:reset} or~\ref{line:plusone2} 
		of Algorithm~\ref{algo:Next}, i.e.~$\uloc = \uloc' \movetoPath{\trans} \loc_2$ 
		where $\loc_2$ is given by $\trans$. Thus, we deduce that
		$\proj(\play)$ follows $\uloc$.
		\qedhere
	\end{enumerate}
\end{proof}

Now, for all plays $\play \in \uFPlaysEtoile$ such that 
$\last(\play) = (\uloc, \val)$ and $\uloc \in \uLocsMax$ 
(for plays not starting in~$\locinit$, the decision over $\play$ is irrelevant), we define 
a strategy $\umaxstrategy$ for \MaxPl in $\ugame$ by
\begin{equation*}
\umaxstrategy(\play) = (\delay, \utrans) \quad
\text{if } \rmaxstrategy(\proj(\play)) = (\delay, \trans) \text{ and } 
\textsc{Next}(\uloc, \trans) = (\uloc', \utrans)
\end{equation*} 
We note that this is a valid decision for $\MaxPl$: we apply 
the same delay (since delays chosen in~$\rmaxstrategy$ and 
$\umaxstrategy$ are the identical) from the same configuration (as 
$\last(\proj(\play)) = (\last(\rpath), \val)$, by
Lemma~\ref{lem:proj}.\eqref{item:proj_last}), through the same guard
(since guards of $\trans$ and $\trans'$ are identical). Thus, whether or not
the location $\uloc$ is urgent (i.e.~$\last(\rpath)$ is urgent), the
decision $(\delay, \utrans)$ gives rise to an edge in $\sem{\ugame}$.
Moreover, since the definition of $\umaxstrategy$ relies on the 
projection, it is of no surprise that:

\begin{lem}
	\label{lem:proj_conf}
	Let $\play\in \uFPlaysEtoile$ be a play conforming to
	$\umaxstrategy$. Then, $\proj(\play)$ is conforming 
	to~$\rmaxstrategy$.
\end{lem}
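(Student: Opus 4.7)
The plan is a straightforward induction on the length of $\play$, exploiting the fact that $\umaxstrategy$ was defined via $\rmaxstrategy$ and the projection $\proj$. The crucial observation is that $\proj$ essentially preserves length: each new edge $(\delay,\utrans)$ in $\play$ is mirrored by exactly one new edge $(\delay,\Dproj(\utrans))$ in $\proj(\play)$, and the owner of the last location is preserved thanks to Lemma~\ref{lem:proj}.\eqref{item:proj_last}, since $\uloc\in\uLocsMax$ iff $\last(\uloc)\in\LocsMax$ by construction of $\ugame$.

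For the base case, $\play=(\locinit,\val)$ has no \MaxPl choice to check, and $\proj(\play)=(\locinit,\val)$ vacuously conforms to $\rmaxstrategy$. For the inductive step, write $\play=\play'\moveto{\delay,\utrans}(\uloc,\val)$ with $\play'\in\uFPlaysEtoile$ conforming to $\umaxstrategy$. By induction, $\proj(\play')$ is conforming to $\rmaxstrategy$, so the only thing to verify is that the last decision added in $\proj(\play)$ is consistent with $\rmaxstrategy$ in case the last configuration of $\proj(\play')$ belongs to \MaxPl. If $\last(\play')=(\uloc',\val')$ with $\uloc'\notin \uLocsMax$, there is nothing to prove; otherwise, since $\play$ is conforming to $\umaxstrategy$, we have $\umaxstrategy(\play')=(\delay,\utrans)$, which by definition means $\rmaxstrategy(\proj(\play'))=(\delay,\Dproj(\utrans))$. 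By the definition of $\proj$, the last transition appended to $\proj(\play')$ in $\proj(\play)$ is precisely $(\delay,\Dproj(\utrans))$, whether $\uloc\neq\revil$ (first recursive case of $\proj$) or $\uloc=\revil$ (second recursive case of $\proj$, which does not alter the delay nor the projection of the transition). Hence $\proj(\play)$ is conforming to $\rmaxstrategy$.

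The only subtle point is making sure we correctly handle the $\revil$ case: there, $\proj$ replaces the target $\revil$ by the genuine successor location $\loc'$ prescribed by $\Dproj(\utrans)$, but the delay and the transition of $\rgame$ used are unchanged, so the argument above still applies. All other bookkeeping (that $\last(\proj(\play'))=(\last(\uloc'),\val')$ belongs to the same player as $(\uloc',\val')$, and that the guard and urgency constraints of $\rgame$ and $\ugame$ coincide on $\trans$ and $\utrans$) is already encapsulated in Lemma~\ref{lem:proj} and the definition of $\ugame$, so the induction goes through with no further obstacle.
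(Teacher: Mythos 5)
Your proof is correct and follows essentially the same induction as the paper's: unwind one edge, use the induction hypothesis on the prefix, and when the prefix ends in a \MaxPl-location unwrap the definition of $\umaxstrategy$ to recover $\rmaxstrategy(\proj(\play'))=(\delay,\Dproj(\utrans))$. Your explicit treatment of the $\revil$ case of $\proj$ (noting that the delay and $\Dproj(\utrans)$ are unchanged even when the endpoint is replaced) is in fact slightly more careful than the paper, which silently uses only the first recursive clause of $\proj$.
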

\begin{proof}
	We reason by induction on the length of $\play$. If
	$\play = (\locinit, \val)$, then $\proj(\play) = (\locinit, \val)$,
	and the property is trivial. Otherwise, we suppose that 
	$\play = \play' \moveto{\delay, \utrans} (\uloc, \val)$ and 
	$\proj(\play) = \proj(\play') \moveto{\delay, \trans} (\last(\uloc), \val)$ 
	where $\trans = \Dproj(\utrans)$. By the induction hypothesis, 
	$\proj(\play')$ is conforming to $\rmaxstrategy$. Letting
	$\last(\proj(\play')) = (\loc', \val')$, we conclude by distinguishing
	two cases. First, if $\loc' \in \LocsMin$,
	we directly conclude that $\proj(\play)$ is conforming to
	$\rmaxstrategy$ too.  Otherwise, we suppose that $\loc' \in \LocsMax$. 
	Since $\play$ is conforming to $\umaxstrategy$ and $\play'$ also 
	belongs to $\MaxPl$ (by Lemma~\ref{lem:proj}.\eqref{item:proj_last}), we have
	$\umaxstrategy(\play') = (\delay, \utrans)$. In particular, by
	definition of $\umaxstrategy$,
	$\rmaxstrategy(\proj(\play')) = (\delay, \Dproj(\utrans)) =
	(\delay,\trans)$. Thus, $\rplay$ is conforming to~$\rmaxstrategy$.
\end{proof}

Finally, we prove $\ValueRG^{\rmaxstrategy}(\locinit, \val) \leq 
\ValueU^{\umaxstrategy}(\locinit, \val)$ by showing that for
all plays $\uplay$ from $(\locinit, \val)$ conforming to
$\umaxstrategy$, there exists a play~$\rplay$ from $(\locinit,\val)$
conforming to $\rmaxstrategy$ such that
$\weightP(\rplay) \leq \weightP(\uplay)$. 
%
	If $\uplay$ does
	not reach a target location of $\ugame$ or reaches target $\revil$,
	then $\weightP(\uplay) = +\infty$, and for all plays $\rplay$
	conforming to $\rmaxstrategy$, we have
	$\weightP(\rplay) \leq +\infty = \weightP(\uplay)$. Now, we 
	suppose that $\uplay$ reaches a target location different from $\revil$.
	\begin{itemize}
		\item If the target location reached by $\uplay$ is not in
		$\{\rpos,\rneg\}$, then $\uplay\in \uFPlaysEtoile$, and we can use
		the projector operator to let $\rplay = \proj(\uplay)$. It 
		is conforming to $\rmaxstrategy$ (by Lemma~\ref{lem:proj_conf}). 
		Moreover, by letting $\last(\uplay) = (\uloc, \val)$ (with 
		$\uloc \neq \revil$ by hypothesis), we have 
		$\uweightT(\uloc, \val) = \CweightT(\last(\uloc), \val)$
		since $\last(\rplay) = (\last(\uloc), \val)$, by  
		Lemma~\ref{lem:proj}.\eqref{item:proj_last}. 
		We conclude that $\weightP(\rplay) = \weightP(\uplay)$, since 
		$\proj$ preserves the weight (by
		Lemma~\ref{lem:proj}.\eqref{item:proj_weight}).
		
		\item If the target location reached by $\uplay$ is $\rpos$, then 
		we decompose $\uplay$ as $\uplay = \uplay^1 \moveto{\delay, \utrans} 
		(\rpos, \val)$ with $\uplay^1 \in \uFPlaysEtoile$ and 
		$(\rpath', \val') = \last(\uplay^1)$. Since the
		value in $\rgame$ is supposed to be finite (we removed configurations of value $+\infty$ or $-\infty$), $\MinPl$ can always
		guarantee to reach the target, i.e.~there exists an (attractor) memoryless
		strategy $\minstrategy_{\rgame}$ that guarantees to reach $\LocsT$.
		Now, let $\rplay = \rplay^1\rplay^2$ be such that
		$\rplay^1 = \proj(\uplay^1) \moveto{\delay, \trans} (\loc, \val)$ 
		with $\trans = \Dproj(\utrans)$ 
		and $\rplay^2$ be the play from $(\loc, \val)$ conforming to 
		$\rmaxstrategy$ and $\minstrategy_{\rgame}$. 
		 To conclude this 
		case, we prove that $\rplay$ is conforming to $\rmaxstrategy$ and 
		$\weightP(\rplay) \leq \weightP(\uplay)$.
		
		First, since $\proj(\uplay^1)$ is conforming to $\rmaxstrategy$ 
		(by Lemma~\ref{lem:proj_conf}), then $\rplay^1$ is conforming 
		to $\rmaxstrategy$ if and only if its last move is. If
		$\rpath'\in\uLocsMin$, then $\proj(\uplay^1)$ belongs to \MinPl 
		(by Lemma~\ref{lem:proj}.\eqref{item:proj_last}) and $\rplay^1$ 
		is conforming to $\rmaxstrategy$. Otherwise, we suppose that 
		$\rpath'\in\uLocsMax$, then $\umaxstrategy(\uplay^1) = 
		(\delay,\utrans)$ and $\textsc{Next}(\rpath',\trans) = 
		(\rneg, \utrans)$. Thus, since $\proj(\uplay^1)$ belongs to \MaxPl 
		(by Lemma~\ref{lem:proj}.\eqref{item:proj_last}) and by the 
		construction of $\umaxstrategy$, we deduce that 
		$\rmaxstrategy(\proj(\uplay^1)) = (\delay, \trans)$, i.e.~$\rplay^1$ is conforming to $\rmaxstrategy$. Finally, we conclude 
		that $\rplay$ is conforming to $\rmaxstrategy$ by the choice of 
		$\rplay^2$. 
		
		Now, we prove that $\weightP(\rplay) \leq \weightP(\uplay)$. First, we 
		remark that 
		\begin{displaymath}
		\weightP(\rplay)
		= \weightC(\rplay^1) + \weightP(\rplay^2) 
		= \weightC(\proj(\uplay^1)) + \delay \,
		\Cweight(\last(\rpath')) + \Cweight(\trans) + \weightP(\rplay^2) \,. 
		\end{displaymath} 
		In particular, since $\Cweight(\last(\rpath')) = \uweight(\rpath')$ 
		(by definition of $\ugame$) and also by using 
		Lemma\ \ref{lem:proj}.\eqref{item:proj_weight}, we obtain:
		\begin{displaymath}
		\weightP(\rplay)
		= \weightC(\uplay^1) + \delay \, \uweight(\rpath') + \uweight(\utrans) + 
		\weightP(\rplay^2)
		= \weightC(\uplay) + \weightP(\rplay^2) \,.
		\end{displaymath}
		Moreover, the length of $\rplay^2$ is bounded by $|\Locs|$ (since it is 
		conforming to an attractor, and since regions are already encoded in $\rgame$) and each of its edges has a weight bounded in
		absolute values by $\maxWeightTrans+ \clockbound \, \maxWeightLoc$. 
		By adding its final weight, we obtain:
		\begin{displaymath}
		\weightP(\rplay) \leq \weightC(\uplay) + |\Locs|(\maxWeightTrans+ \clockbound \,
		\maxWeightLoc) + \maxWeightFinal \,.
		\end{displaymath}
		Now, we remark that $\uplay$ reaches $\rpos$, and its
		weight is thus:
		\begin{displaymath}
		\weightP(\uplay) = \weightC(\uplay) + 
		|\Locs|(\maxWeightTrans+ \clockbound \,
		\maxWeightLoc) + \maxWeightFinal \,.
		\end{displaymath}
		Therefore, $\weightP(\rplay) \leq \weightP(\uplay)$.
		
		\item Finally, we prove that the case where the target location 
		reached by $\uplay$ is $\rneg$ is not possible. As before we 
		decompose $\uplay$ as $\uplay = \uplay^1 \moveto{\delay, \utrans} 
		(\rpos, \val)$ with $\uplay^1 \in \uFPlaysEtoile$ and 
		$(\rpath', \val') = \last(\uplay^1)$. We consider 
		$\rplay^1 = \proj(\uplay^1) \moveto{\delay, \trans} (\loc, \val)$ 
		with $\trans = \Dproj(\utrans)$ that is conforming to $\rmaxstrategy$ 
		(by the same reasoning than the previous case) and we prove that 
		$\rplay^1$ finishes with a play that follows the cyclic path with 
		negative value that contradicts 
		Lemma~\ref{lem:MaxOpt}.\eqref{item:MaxOpt_Value}. By definition of $\ugame$,  
		we have $\textsc{Next}(\rpath', \trans) = (\rneg, \utrans)$ with
		$|\rpath'|_{\trans}> 0$, so by letting $\rpath' = \rpath_1 
		\movetoPath{\trans} \rpath_2$ with $|\rpath_2|_\trans = 0$, 
		we have $\ValueRG(\rpath_2 \moveto{\trans} \loc_2) <0$ where 
		$\loc_2$ is given by $\trans$. Moreover, since $\proj(\uplay^1)$ follows 
		$\rpath$ (by Lemma~\ref{lem:proj}.\eqref{item:proj_strong}), $\rplay^1$ 
		follows $\rpath \movetoPath{\trans} \loc_2$ that contains a cyclic path 
		$\rpath_2 \movetoPath{\trans} \loc_2$ with a negative~value. 
	\end{itemize}
	To conclude the proof, we have shown that for all plays $\uplay$ from 
	$(\locinit, \val)$ conforming to $\umaxstrategy$, we can build a play 
	$\rplay$ from $(\locinit, \val)$ conforming to $\rmaxstrategy$ such that 
	$\weightP(\rplay) \leq \weightP(\uplay)$. In particular, 
	\begin{align*}
	\ValueRG^{\rmaxstrategy}(\locinit, \val) 
	&= \inf_{\rmaxstrategy \in \StratMin[\rgame]} 
	\weightP(\Play((\locinit, \val), \rminstrategy, \rmaxstrategy))\\
	&\leq \inf_{\umaxstrategy \in \StratMin[\ugame]}  
	\weightP(\Play((\locinit, \val), \uminstrategy, \umaxstrategy))\\ 
	&\leq\ValueU^{\umaxstrategy}(\locinit, \val) \,.
	\end{align*}


\paragraph{Second inequality}
We then prove the reciprocal inequality
$\ValueRG(\locinit, \val) \geq \ValueU(\locinit,\val)$ that can be rewritten as:
\begin{displaymath}
\ValueRG(\locinit, \val)  \geq
\sup_{\umaxstrategy} \ValueU^{\umaxstrategy}(\locinit,\val) \,.
\end{displaymath}
It thus amounts to showing that $\MaxPl$ 
can guarantee to always do at least as good in $\rgame$ as in $\ugame$. 
We thus fix the optimal strategy $\umaxstrategy$ in~$\ugame$ given by Lemma~\ref{lem:U_MaxOpt}, and show that
$\ValueRG(\locinit, \val) \geq \ValueU^{\umaxstrategy}(\locinit, \val)$.

To do so, we show that there exists a strategy 
$\rmaxstrategy$ in~$\rgame$ such that for a particular play~$\play$ 
conforming to $\rmaxstrategy$, there exists a play conforming to
$\umaxstrategy$ with a weight at most the weight of $\play$. 
As depicted in \figurename~\ref{fig:schema_preuve}, the
strategy $\rmaxstrategy$ is defined via a function $\Phi$ that maps
plays of $\rgame$ into plays of $\ugame$. 
Intuitively, this function removes all
cyclic plays ending with a reset from plays in $\rgame$.
%
Formally, 
it is defined by induction on the length of the plays by letting 
$\Phi(\locinit, \val) = (\locinit, \val)$, and for all plays 
$\play \in \rFPlays$, letting $\play' = \play 
\moveto{\delay, \trans} (\loc, \val)$,
\begin{enumerate}
	\item if $\Phi(\play)$ ends in $\revil$, we fix 
	$\Phi(\play') = \Phi(\play)$;
	
	\item else, if $\trans$ contains a reset and
	$\Phi(\play) = \play_1 \moveto{\delay', \trans'} \play_2$ with
	$\Dproj(\trans') = \trans$, letting $\uloc$ the first location 
	of $\play_2$, we fix $\Phi(\play') = \play_1 
	\moveto{\delay', \trans'} (\uloc,0)$;
	
	\item otherwise, letting $\textsc{Next}(\uloc,\trans) = (\uloc',\trans')$ 
	with $\uloc$ the last location of $\Phi(\play)$, we fix 
	$\Phi(\play') = \Phi(\play) \moveto{\delay, \trans'} (\uloc', \val)$.
\end{enumerate}
This function satisfies the following properties:
\begin{lem}
	\label{lem:Phi}
	For all plays $\play\in \rFPlays$, if $\last(\Phi(\play)) = (\uloc, \val)$ 
	with $\uloc\neq \revil$, then we have $\uloc\notin\{\rneg,\rpos\}$ and
	\begin{displaymath}
	\last(\play) =
	\begin{cases}
	(\last(\uloc), \val) & \text{if } \uloc\notin \LocsT \,;\\
	(\uloc, \val) & \text{otherwise .}
	\end{cases}
	\end{displaymath}
\end{lem}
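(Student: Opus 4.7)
The plan is to proceed by induction on the length of $\play$, performing a case analysis on the three cases in the definition of $\Phi$. For the base case $\play = (\locinit, \val)$, we have $\Phi(\play) = (\locinit, \val)$, so the last location is $\locinit \notin \{\rpos, \rneg, \revil\}$, and $\last(\play) = (\locinit, \val)$ matches either $(\uloc, \val)$ when $\locinit \in \LocsT$ or $(\last(\uloc), \val)$ when $\locinit \notin \LocsT$ (using that $\locinit$ is then a length-$0$ path with $\last(\locinit) = \locinit$).

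For the inductive step, let $\play' = \play \moveto{\delay, \trans} (\loc, \val)$ with $\trans = (\loc_1, \reg, \reset, w, \loc)$. Case~1 of the definition is vacuous since $\Phi(\play')$ ends at $\revil$. In Case~2, $\trans$ is a reset, $\Phi(\play) = \play_1 \moveto{\delay', \trans'} \play_2$ with $\Dproj(\trans') = \trans$, and $\Phi(\play')$ ends at $(\uloc, 0)$ where $\uloc$ is the first location of $\play_2$, i.e.~the destination of $\trans'$ in $\ugame$. Since all transitions in $\Phi(\play)$ were necessarily inserted by earlier invocations of Case~3 (Case~1 adds nothing, Case~2 only truncates), $\trans'$ was added at a step where $\trans$ was a first-occurrence reset. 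Inspecting Algorithm~\ref{algo:Next}, such a call goes through line~\ref{line:uloc_1} (giving $\uloc = \loc \in \LocsT$) or line~\ref{line:reset} (giving a path $\uloc$ with $\last(\uloc) = \loc$). Either way, $\uloc \notin \{\rpos, \rneg\}$, and the conclusion holds with $\val = 0$ since $\trans$ resets the clock.

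In Case~3, $\Phi(\play') = \Phi(\play) \moveto{\delay, \trans'} (\uloc, \val)$ with $\textsc{Next}(\uloc_0, \trans) = (\uloc, \trans')$ where $\uloc_0$ is the last location of $\Phi(\play)$ (not $\revil$ by the case assumption). I will establish, as an auxiliary invariant maintained by parallel induction, that for every reset transition $\trans$ of $\rgame$, the count $|\uloc_0|_\trans$ equals the number of occurrences of $\trans$ as a $\Dproj$-projection within $\Phi(\play)$. Combined with the failure of Case~2 of $\Phi$, this invariant ensures $|\uloc_0|_\trans = 0$ whenever $\trans$ is a reset in the current step, so $\textsc{Next}$ cannot use lines~\ref{line:rpos} or~\ref{line:rneg}. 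The remaining outputs are $\revil$ (vacuous by the hypothesis $\uloc \neq \revil$), $\loc \in \LocsT$ (matching the target branch), or $\uloc_0 \movetoPath{\trans} \loc$ with $\last(\uloc) = \loc$ (matching the non-target branch), and the valuation component follows from the definition of the edges in the semantics.

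The main obstacle is rigorously establishing the auxiliary invariant relating reset projections in $\Phi(\play)$ to the path structure of its final location, especially across applications of Case~2 (which truncates $\Phi(\play)$ to the first occurrence of the reset and therefore simultaneously modifies both sides of the invariant). This is what allows us to identify the "not yet seen" condition in Case~3 of $\Phi$ with the $|\uloc_0|_\trans = 0$ branch of $\textsc{Next}$, and it underpins the analysis in Case~2 as well.
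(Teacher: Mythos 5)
Your proof follows essentially the same approach as the paper's: the same induction on $|\play|$, the same three-case analysis mirroring $\Phi$ and $\textsc{Next}$, and in Case~2 the same argument that $\play_1$ cannot already contain a transition projecting onto $\trans$. The auxiliary invariant you flag as the main obstacle---that $|\uloc_0|_\trans$ counts the $\Dproj$-projections occurring in $\Phi(\play)$---is not actually an extra hurdle: it is a structural consequence of the construction of $\ugame$ (a non-target location of $\ugame$ is, by definition, the path of $\Dproj$-projected transitions taken since $\locinit$, cf.\ Lemma~\ref{lem:proj}.\eqref{item:proj_strong}), and the paper relies on it implicitly in exactly the spot where you invoke it.
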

\begin{proof}
	We show the property by induction on the length of $\play$. 
	If $\play = (\locinit, \val)$, then $\Phi(\play) = \play$ and 
	the property holds. Otherwise, we let
	$\play' = \play \moveto{\delay, \trans} (\loc, \val)$, and we 
	suppose that the property holds for $\play$ (since it does not end 
	in $\uLocsT$) and we follow the definition of $\Phi$.
	\begin{enumerate}
		\item If $\Phi(\play)$ ends in $\revil$, we have
		$\Phi(\play') = \Phi(\play)$ and this case is thus not possible
		(since $\Phi(\play')$ is supposed to not end in $\revil$).
		
		\item Else, if $\trans$ contains a reset and
		$\smash[t]{\Phi(\play) = \play_1 \moveto{\delay', \utrans} \play_2}$ 
		with $\Dproj(\utrans) = \trans = (\loc, \reg, \reset, w, \loc')$ 
		and $\last(\play_1) = (\uloc_1, \val)$, 
		we have $\Phi(\play') = \play_1 \moveto{\delay', \utrans} (\uloc',0)$, 
		by letting $\uloc'$ the first location of $\play_2$. Moreover, we have 
		$\textsc{Next}(\uloc_1, \trans) = (\uloc', \utrans)$. Now, by 
		definition of \textsc{Next}, if
		$\loc' \in \LocsT$, then $\uloc' = \loc' \in \LocsT$. Thus, we 
		conclude that $\last(\play') = (\loc', 0) = (\last(\Phi(\play')), 0)$ 
		as expected. Otherwise, $\loc' \notin \LocsT$ and we have 
		$\last(\Phi(\play)) = (\uloc, 0)$. We note that 
		$\uloc \notin \{\rneg, \rpos\}$ since $\play_1$ 
		does not contain a transition $\utrans_1$ such that 
		$\Dproj(\utrans_1) = \trans$ (otherwise, in $\Phi(\rho)$, we 
		would have already fired twice the transition $\trans$ with a 
		reset, before trying to fire it a third time). Thus 
		$\uloc= \uloc' \movetoPath{\trans} \loc'$, and we conclude.
		
		\item Otherwise, $\Phi(\play') = \Phi(\play) \moveto{\delay, \trans'} 
		(\uloc', \val)$ if $\textsc{Next}(\uloc,\trans) = (\uloc', \trans')$ 
		with $\uloc$ the last location of $\Phi(\play)$. Once again, we are
		in a case where $\uloc' = \uloc \movetoPath{\trans} \loc'$, by 
		letting $\trans = (\loc, \reg, \reset, w, \loc')$. Thus, we
		conclude as~before. \qedhere
	\end{enumerate}
\end{proof}

Now, we define $\rmaxstrategy$ such that its behaviour is the same as
the one given by $\umaxstrategy$ after the application of $\Phi$ on
the finite play, i.e.~after the removal of all cyclic paths between the same 
transition with a reset. Formally, for all plays
$\play \in \rFPlays$, we let $\rmaxstrategy(\play)$ be defined as any
valid move $(\delay,\trans)$ if $\Phi(\play)$ ends in $\revil$, and
otherwise,
\begin{equation*}
\rmaxstrategy(\play) =
(\delay, \Dproj(\utrans)) \qquad \text{if }
\umaxstrategy(\Phi(\play)) = (\delay, \utrans)
\end{equation*}
This is a valid decision for \MaxPl. First, by Lemma~\ref{lem:Phi},
$\last(\play) = (\last(\rpath), \val)$ when
$\last(\Phi(\play)) = (\rpath, \val)$. Moreover, delays chosen in
$\rmaxstrategy$ and $\umaxstrategy$ are the same, and the guards of
$\utrans$ and $\Dproj(\utrans)$ are identical. Thus, whether or not
the location~$\rpath$ is urgent, the decision $(\delay,
\Dproj(\utrans))$ gives rise to an edge in~$\sem{\rgame}$.
Since the definition of $\rmaxstrategy$ relies on the operation
$\Phi$, it is again not surprising that:
\begin{lem}
	\label{lem:Phi_conf}
	Let $\play \in \rFPlays$ be a play conforming to
	$\rmaxstrategy$. Then $\Phi(\play)$ is conforming to
	$\umaxstrategy$.
\end{lem}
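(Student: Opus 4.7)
The plan is to proceed by induction on the length of $\play$, mirroring the structure of the symmetric result Lemma~\ref{lem:proj_conf}, and dispatching the inductive step via the three-case definition of $\Phi$. The base case $\play = (\locinit, \val)$ is immediate, since $\Phi(\play) = \play$ and no move of \MaxPl has been played yet.

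For the inductive step, write $\play' = \play \moveto{\delay, \trans} (\loc, \val)$, assume $\play'$ is conforming to $\rmaxstrategy$ (so in particular $\play$ is too), and apply the induction hypothesis to obtain that $\Phi(\play)$ is conforming to $\umaxstrategy$. The task is to show that the (possibly trivial) extension from $\Phi(\play)$ to $\Phi(\play')$ remains conforming. I would handle the three cases of the definition of $\Phi$ in turn. In case~(1), $\Phi(\play') = \Phi(\play)$, so there is nothing to check. In case~(2), $\Phi(\play') = \play_1 \moveto{\delay', \utrans} (\uloc, 0)$ is simply a prefix of $\Phi(\play) = \play_1 \moveto{\delay', \utrans} \play_2$ (with $(\uloc, 0)$ being the first configuration of $\play_2$); a prefix of a play conforming to a memoryless strategy is itself conforming, which settles this case.

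The non-trivial case is~(3), where $\Phi(\play') = \Phi(\play) \moveto{\delay, \utrans} (\uloc', \val)$ with $\textsc{Next}(\uloc, \trans) = (\uloc', \utrans)$, and I must check that the last edge agrees with $\umaxstrategy$ when the last location of $\Phi(\play)$ belongs to \MaxPl. If $\last(\Phi(\play)) = (\uloc, \cdot)$ with $\uloc \notin \uLocsT$, then by Lemma~\ref{lem:Phi} the last location of $\play$ is $\last(\uloc)$, and the owner is preserved under $\Phi$ (since $\uweight$ inherits the splitting between players from $\rgame$ via the last projection). Thus, if it is a \MinPl location, there is no constraint; if it is a \MaxPl location, then by definition of $\rmaxstrategy$ we have $\rmaxstrategy(\play) = (\delay, \Dproj(\utrans_0))$ with $\umaxstrategy(\Phi(\play)) = (\delay, \utrans_0)$, and since $\play'$ is conforming to $\rmaxstrategy$, we get $\trans = \Dproj(\utrans_0)$. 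The determinism of $\textsc{Next}$ on the pair $(\uloc, \trans)$ then forces $\utrans = \utrans_0$, so the last move of $\Phi(\play')$ coincides with $\umaxstrategy(\Phi(\play))$, as required.

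The main subtlety will be ensuring the bookkeeping between $\rmaxstrategy$, $\umaxstrategy$, and $\Phi$ in case~(3), where one must verify that $\textsc{Next}$ produces the very transition picked by $\umaxstrategy$; this rests on the fact that $\textsc{Next}(\uloc, \trans)$ is uniquely determined by its inputs, together with Lemma~\ref{lem:Phi} to transport ownership of locations through $\Phi$. The case where $\Phi(\play)$ ends in $\revil$ is harmless precisely because the definition of $\rmaxstrategy$ only matters on plays whose $\Phi$-image has not been frozen in $\revil$, and such plays never reappear as genuine $\rmaxstrategy$-obligations beyond case~(1).
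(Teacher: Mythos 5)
Your plan is correct and follows essentially the same route as the paper's proof: induction on the length of $\play$, a three-case split along the definition of $\Phi$, triviality in case~(1), the prefix argument in case~(2), and in case~(3) matching the last move to $\umaxstrategy(\Phi(\play))$ via the definition of $\rmaxstrategy$ and the fact that $\textsc{Next}$ is a function. The only (harmless) extraneous ingredient is the invocation of memorylessness in case~(2): a prefix of a conforming play is conforming for \emph{any} strategy, since conformance is a pointwise condition at each \MaxPl position of the play.
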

\begin{proof}
	We reason by induction on the length of $\play$. If
	$\play = (\locinit, \val)$, then $\Phi(\play) = (\locinit,\val)$ 
	and the property is trivial. Otherwise, we suppose that
	$\play' = \play \moveto{\delay, \trans} (\loc, \val)$. By
	the induction hypothesis, $\Phi(\play)$ conforms to 
	$\umaxstrategy$. 
	\begin{enumerate}
		\item If $\Phi(\play)$ ends in $\revil$, we have
		$\Phi(\play')=\Phi(\play)$ that is conforming to $\umaxstrategy$.
		
		\item If $\trans$ contains a reset and $\Phi(\play) = \play_1 
		\moveto{\delay', \trans'} \play_2$ with $\Dproj(\trans') = \trans$, 
		letting $\uloc$ be the first location of $\play_2$, we have
		$\Phi(\play') = \play_1 \moveto{\delay', \trans'} (\uloc,0)$. This is
		a prefix of $\Phi(\play)$ that is conforming to $\umaxstrategy$. Thus, 
		$\Phi(\play')$ is conforming to $\umaxstrategy$ too.
		
		\item Otherwise, $\Phi(\play') = \Phi(\play) 
		\moveto{\delay, \trans'} (\uloc', \val)$ if 
		$\textsc{Next}(\uloc,\trans) = (\uloc',\trans')$ with $\uloc$ the
		last location of $\Phi(\play)$. If $\Phi(\play)$ ends in a location
		of $\MinPl$, since it is conforming to $\umaxstrategy$, so does
		$\Phi(\play')$. Otherwise, $\rmaxstrategy(\play) = (\delay, \trans)$
		which implies that $\umaxstrategy(\Phi(\play)) = (\delay, \utrans')$
		with $\Dproj(\utrans') = \trans$, meaning that
		$\textsc{Next}(\uloc, \trans) = (\uloc', \utrans')$,
		i.e.~$\utrans' = \utrans$: in this case too, $\Phi(\play')$ 
		is conforming to~$\umaxstrategy$.
		\qedhere
	\end{enumerate}
\end{proof}

Finally, we prove that
$\ValueRG(\locinit, \val) \geq \ValueU^{\umaxstrategy}(\locinit, \val)$.
Notice that we do not aim at comparing
$\ValueU^{\umaxstrategy}(\locinit, \val)$ with
$\ValueRG^{\rmaxstrategy}(\locinit, \val)$ but instead directly with
$\ValueRG(\locinit, \val)$. This is helpful here since we do not need
to start with any play $\play$ conforming to $\rmaxstrategy$. Instead,
we pick a special play, choosing well the strategy followed by
$\MinPl$. Indeed, we suppose that $\MinPl$ follows an 
$\varepsilon$-optimal (switching) strategy $\minstrategy$ in $\rgame$, 
as given in 
\cite{brihaye2021oneclock}. 
As we explained before in Definition~\ref{def:depliage}, in \WTG{s} 
without resets, this ensures that in all plays $\rplay$ conforming to
$\minstrategy$, the target is reached fast enough (with a number of
transitions bounded by $\bornePseudoPoly$). We can easily enrich the
result of
\cite{brihaye2021oneclock}
to take into account resets. Indeed, as performed in
\cite[Theorem~6.6]{brihaye2021oneclock} to show
that all one-clock \WTG{s} have an (a priori non-computable) value 
function that is piecewise affine with a finite number of cutpoints, 
we can replace each transition with a reset by a new transition 
jumping in a fresh target location of value given by the value function 
we aim at computing. From a strategy perspective, this means that in each
component of our unfolding (in-between two transitions with a reset),
$\MinPl$ follows a switching strategy. Notice that such strategies are
a priori not knowing to be computable (since we cannot perform the 
transformation described above, using the value function), but we 
use only its existence in this proof.

We thus consider an $\varepsilon$-optimal strategy $\minstrategy$ for
$\MinPl$ in $\rgame$ such that in all plays $\rplay$ conforming
to $\minstrategy$, in-between two transitions with a reset and after
the last such transition, the number of transitions is bounded by
$\bornePseudoPoly$.
	We now fix the special play $\play$ from $(\locinit, \val)$
	conforming to $\minstrategy$ and $\rmaxstrategy$. It reaches a 
	target since $\minstrategy$ is $\varepsilon$-optimal and
	$\ValueRG(\locinit, \val) \neq +\infty$. We show that
	\begin{equation}
	\label{eq:star}
	\exists \uplay \in \uFPlays \text{ conforming to } \umaxstrategy 
	\quad
	\weightP(\uplay) \leq \weightP(\play) \tag{$\star$}
	\end{equation}
	As a consequence, we obtain:
	\begin{displaymath}
	\ValueU^{\umaxstrategy}(\locinit, \val)
	= \inf_{\uminstrategy \in \StratMin[\ugame]} 
	\weightP(\Play((\locinit, \val),
	\uminstrategy, \umaxstrategy)) 
	\leq \weightP(\uplay) 
	\leq \weightP(\play) 
	\leq \ValueRG(\locinit, \val) + \varepsilon \,.
	\end{displaymath}
	Since this holds for all $\varepsilon>0$, we have
	$\ValueU^{\umaxstrategy}(\locinit, \val) \leq 
	\ValueRG(\locinit, \val)$ as expected.
	
	To show \eqref{eq:star}, we proceed by induction on the
	prefixes~$\play'$ of $\play$, proving that \eqref{eq:star} 
	holds or that $\Phi(\play')$ does not end in~$\revil$ and
	$\weightC(\Phi(\play')) \leq \weightC(\play')$. Indeed, at 
	the end of the induction, we therefore obtain \eqref{eq:star} or that 
	$\Phi(\play)$ does not end in~$\revil$ and
	$\weightC(\Phi(\play)) \leq \weightC(\play)$. In the case where 
	\eqref{eq:star} does not hold, we fix $\uplay = \Phi(\play)$ and 
	$\last(\uplay) = (\uloc, \val)$. In particular, we have 
	$\uloc \in \LocsT$, and $\last(\play) = (\uloc, \val)$: by 
	Lemma~\ref{lem:Phi}, if $\uloc \notin \LocsT$, then 
	$\last(\play) = (\last(\uloc), \val)$, with 
	$\last(\uloc) \notin \LocsT$ that contradicts the fact that 
	$\play$ reaches the target. Therefore,
	\begin{displaymath}
	\weightP(\uplay) = \weightP(\Phi(\play)) 
	= \weightC(\Phi(\play)) + \uweightT(\uloc,\val)
	\leq \weightC(\play) + \weightT(\uloc,\val) 
	= \weightP(\play) \,.
	\end{displaymath} 
	Since~$\uplay$ is conforming to $\umaxstrategy$ (by 
	Lemma~\ref{lem:Phi_conf}), we obtain \eqref{eq:star} here too. 
	
	Finally, we proceed to the proof by induction. First, we suppose that 
	$\play' = (\locinit, \val)$ and $\weightC(\Phi(\play')) = 0 = 
	\weightC(\play')$. Otherwise, we suppose that
	$\play' = \play'' \moveto{\delay, \trans} (\loc, \val)$. By
	induction on $\play''$, if \eqref{eq:star} does not (already) hold,
	we know that $\Phi(\play'')$ does not end in $\revil$ and
	$\weightC(\Phi(\play'')) \leq \weightC(\play'')$. We follow the three
	cases of the definition of $\Phi(\play')$.
	\begin{enumerate}
		\item We cannot have $\Phi(\play'')$ ending in $\revil$ by
		hypothesis.
		
		\item Suppose now that $\trans$ contains a reset and
		$\Phi(\play'') = \play_1 \moveto{\delay', \utrans} \play_2$ with
		$\Dproj(\utrans) = \trans$. Letting $\uloc$ the first location of
		$\play_2$, we have $\Phi(\play') = \play_1 \moveto{\delay', \utrans} 
		(\uloc,0)$. Thus
		\begin{equation}
		\label{eq:2}
		\weightC(\Phi(\play')) = \weightC(\Phi(\play'')) -
		\weightC(\play_2) \leq \weightC(\play'') -
		\weightC(\play_2) 
		\end{equation}
		Let $(\uloc', \val') = \last(\play_2)$, and
		$\uplay = \Phi(\play'') \moveto{\delay, \utrans'} (\uloc'',0)$, 
		with $\textsc{Next}(\uloc',\trans) = (\uloc'',\utrans')$. Notice 
		that $\uplay$ is conforming to $\umaxstrategy$, since 
		$\Phi(\play'')$ does and if $\uloc'$ belongs to $\MaxPl$, this 
		follows directly from the definition of $\rmaxstrategy$ from 
		$\umaxstrategy$ (since $\rmaxstrategy(\rplay'') = (\delay, \trans)$, and
		$\Phi(\play'') \notin \revil$). Moreover, it 
		contains twice a transition with a reset coming from the same 
		transition $\trans$ of $\rgame$, therefore
		$\uloc'' \in \{\rneg, \rpos\}$. If $\uloc'' = \rneg$,
		$\weightP(\uplay) = -\infty$ and \eqref{eq:star} holds. Otherwise, if
		$\uloc'' = \rpos$, by Lemma~\ref{lem:U_MaxOpt} applied on $\uplay$,
		$\weightC(\play_2 \moveto{\delay, \utrans'} (\rpos,0)) \geq 0$, i.e.~$\weightC((\uloc',\val') \moveto{\delay, \utrans'} (\rpos,0)) \geq 
		-\weightC(\play_2)$. Combined with \eqref{eq:2}, we obtain~that
		\begin{align*}
		\weightC(\Phi(\play')) 
		&\leq \weightC(\play'') + \weightC((\uloc',\val') 
			\moveto{\delay, \utrans'} (\rpos,0)) \\
		&= \weightC(\play'') + \delay\,\uweight(\uloc') + \uweight(\utrans') \\
		&= \weightC(\play'') + \delay\,\Cweight(\loc') + \Cweight(\trans) 
		= \weightC(\play')
		\end{align*}
		where we let $\loc'$ be the last location of $\play''$, which is
		also the last location of $\uloc'$. 
				
		\item Otherwise, $\Phi(\play') = \Phi(\play'') 
		\moveto{\delay, \utrans} (\uloc', \val)$ if 
		$\textsc{Next}(\uloc, \trans) = (\uloc', \utrans)$ with
		$\uloc$ the last location of $\Phi(\play'')$. In this case,
		\begin{align*}
		\weightC(\Phi(\play')) 
		&= \weightC(\Phi(\play'')) + \delay \, \uweight(\uloc) + \uweight(\utrans) \\
		& \leq \weightC(\play'') + \delay \,\Cweight(\loc') + \Cweight(\trans) 
		= \weightC(\play')
		\end{align*} 
		where we let $\loc'$ e the last location of $\play''$.
	\end{enumerate}
	This ends the proof by induction. 

\section{Main decidability result}
\label{sec:end-proof}

By using the unfolding, we are now able to conclude the proof of Theorem~\ref{thm:complexity}, i.e.~to compute the value function of $\game$ in exponential time with respect to $|\QLocs|$ and $\maxWeight$. 

  Remember (by Lemma~\ref{lem:region_ptg}) that we only need to explain 
  how to compute
  $\val \mapsto \Value_{\rgame}((\qlocinit, I_{\mathsf{i}}),\val)$
  over $I_{\mathsf{i}}$. By Theorem~\ref{theo:value}, this is
  equivalent to computing
  $\val \mapsto \Value_{\ugame}((\qlocinit, I_{\mathsf{i}}),\val)$
  over $I_{\mathsf{i}}$. We now explain why this is doable.
	
  First, the definition of $\ugame$ is effective: we can compute it
  entirely, making use of Lemma~\ref{lem:sizeU} showing that it is a
  finite \WTG. The only non-trivial part is the test of the sign of
  $\ValueRG(\ppath_2 \movetoPath{\trans} \loc_2)$ in line~\ref{line:reset} 
  of Algorithm~\ref{algo:Next} to determine in which target location we
  jump. Since $\ppath_2 \movetoPath{\trans} \loc_2$ is a finite path,
  we can apply Theorem~\ref{thm:pseudopoly} to compute the value of
  the corresponding game, which is exactly the value
  $\ValueRG(\ppath_2 \movetoPath{\trans} \loc_2)$. 
  The complexity of computing the  value of a path is polynomial in 
  the length of this path (that is exponential in $|\QLocs|$ 
  and $\maxWeight$, by Lemma~\ref{lem:sizeU}) and polynomial 
  in $|\QLocs|$ and $\maxWeight$ (notice that weights of $\rgame$ 
  are the same as the ones in $\game$): this is thus of complexity 
  exponential in $|\QLocs|$ and $\maxWeight$. Since $\ugame$ has an
  exponential number of locations with respect to $|\QLocs|$ and 
  $\maxWeight$, the total time required to compute~$\ugame$
  is also exponential with respect to $|\QLocs|$ and $\maxWeight$. 

  Lemma~\ref{lem:sizeU} ensures that $\ugame$ is acyclic, so
  we can apply Theorem~\ref{thm:pseudopoly} to compute the value
  mapping
  $\val \mapsto \Value_{\ugame}((\qlocinit, I_{\mathsf{i}}),\val)$ as
  a piecewise affine and continuous function. It requires a complexity
  polynomial in the number of locations of $\ugame$, and in $\maxWeight$
  (since weights of $\ugame$ all 
  come from $\game$). Knowing the previous bound on the number of locations 
  of~$\ugame$, this complexity translates into an exponential time complexity 
  with respect to $|\QLocs|$ and $\maxWeight$, as announced.

\section{\texorpdfstring{The value function of (one-clock) WTGs is the greatest fixpoint of $\F$}{The value function of (one-clock) WTGs is the greatest fixpoint of F}}
\label{sec:Value-fixpoint}

We finally prove Theorem~\ref{thm:Value-fixpoint}, i.e.~that the value function of all  
(one-clock) \WTG{s} is the greatest fixpoint of the operator~$\F$. 
A natural way to prove this theorem would be to use the fixpoint theory and, 
more precisely, Kleene's theorem characterising the greatest 
fixpoint\footnote{A careful reader will remark that Kleene's theorem 
characterises the least fixpoint for increasing sequences of elements in a complete partial order (CPO). 
Intuitively, we use this version with a reverse order (the CPO admits 
upper-bounds instead of lower-bounds). 
Formally, we can fit the hypothesis of Kleene's theorem by considering the 
operator $-\F$.} 
as the limit of a sequence of iterates of $\F$ before showing that the limit is equal to the value of 
$\game$.
To be applicable, this theorem requires $\F$ to be \emph{Scott-continuous} over a complete partial order (CPO)~\cite[Chapter 8]{Winskel_93}, 
i.e. monotonous and such that for all non-increasing sequence $(X_i)_{i\in \N}$ of elements of the CPO, $\inf_i \F(X^i) = \F(\inf_i X^i)$. 
Unfortunately, although the operator $\F$ can be shown to be monotonous (see Lemma~\ref{lem:Fproperties}.\ref{item:Fproperties_monotonic}), 
it is not necessarily Scott-continuous as demonstrated by the following example. 

\begin{exa}
	We consider the \WTG depicted on the left of 
	\figurename{~\ref{fig:fixpoint_sequence}}. 
	We prove that $\F$ is not Scott-continuous by exhibiting a 
	non-increasing sequence $(X_i)_i$ of functions (that are continuous over regions) 
	such that $\inf_i X_i > \F(\inf_i X_i)$. The sequence is depicted in the right of \figurename{~\ref{fig:fixpoint_sequence}} and can be 
	defined, for all configurations $(\qloc, \val)$ and all $i \in \N$, by:
	\begin{displaymath}
	X_i(\qloc, \val) = 
	\begin{cases}
	\frac{1}{2^i} & \text{if } 0 < \val \leq \frac{2^i-1}{2^i}\\
	(2^i - 1)\val + (2 - 2^i) &  \text{if } \frac{2^i-1}{2^i} < \val < 1
	\end{cases}
	\end{displaymath}
	We consider the location $\qloc_0$ of \MaxPl. For all $i \in \N$ and 
	all valuations $\val$, we have: 
	\begin{displaymath}
	\F(X_i)(\qloc_0, \val) = 
	\sup_{\delay} (\delay\, \weight(\qloc_0) + \weight(\trans) + X_i(\qloc_1, \val + \delay)) = 
	\sup_{\delay} X_i(\qloc_1, \val + \delay) = 1
	\end{displaymath} 
	and thus $\inf_i \F(X_i)(\qloc_0, \val) = 1$.
	Moreover, since $\inf_i X_i$ is  constant function whose value is $0$ in the interval $(0 ,1)$, 
	we deduce that $\F(\inf_i X_i)(\qloc_0, \val) = 0$ for all configurations $(\qloc_0, \val)$. 
	Thus, we deduce that $\F(\inf_i X_i)(\qloc_0, \val) < \inf_i \F(X_i)(\qloc_0, \val)$.	
	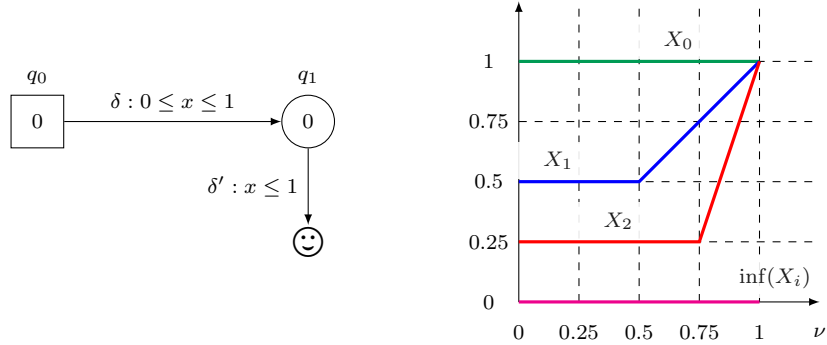
\begin{figure}[tbp]
		\begin{tikzpicture}[node distance=3cm,auto,>=latex,
		every node/.style={font=\scriptsize},scale=.8]
		
		\node[PlayerMax, label={above:$\qloc_0$}] at (0, 0) (s1) {$0$};
		\node[PlayerMin, label={above:$\qloc_1$}] at (4.5, 0)  (s0) {$0$};
		\node[target] at (4.5, -2) (s3) {$\LARGE \smiley$};
		
		\draw[->]
		(s1) edge node[above] {$\trans : 0 \leq x \leq 1$} (s0)
		(s0) edge node[left] {$\trans' : x \leq 1$} (s3)
		;
		
		\begin{scope}[xshift=8cm, yshift=-3cm]
		\draw[->] (0,0) -- (0,5);
		\draw[->] (0,0) -- (5,0);
		
		\draw[dashed] (1,0) -- (1,5);
		\draw[dashed] (2,0) -- (2,5);
		\draw[dashed] (3,0) -- (3,5);
		\draw[dashed] (4,0) -- (4,5);
		
		\draw[dashed] (0,1) -- (5,1);
		\draw[dashed] (0,2) -- (5,2);
		\draw[dashed] (0,3) -- (5,3);
		\draw[dashed] (0,4) -- (5,4);
		
		\node at (5,-.5) {$\val$};
		
		\node at (0,-.5) {$0$};
		\node at (1,-.5) {$0.25$};
		\node at (2,-.5) {$0.5$};
		\node at (3,-.5) {$0.75$};
		\node at (4,-.5) {$1$};
		
		\node at (-.5,0) {$0$};
		\node at (-.5,1) {$0.25$};
		\node at (-.5,2) {$0.5$};
		\node at (-.5,3) {$0.75$};
		\node at (-.5,4) {$1$};
		
		\draw[ForestGreen,very thick] (0,4) -- (4,4);
		\draw[blue,very thick] (0,2) -- (2,2) -- (4,4);
		\draw[red,very thick] (0,1) -- (3,1) -- (4,4);
		\draw[magenta,very thick] (0,0) -- (4,0);
		
		\node[rectangle,fill=white,fill opacity=.9,text opacity=1] at 
		(2.5,4.35) {$X_0$};
		\node[rectangle,fill=white,fill opacity=.9,text opacity=1]  at 
		(.5,2.35) {$X_1$};
		\node[rectangle,fill=white,fill opacity=.9,text opacity=1]  at 
		(1.5,1.35) {$X_2$};
		\node[rectangle,fill=white,fill opacity=.9,text opacity=1]  at 
		(4.1,0.4) {$\inf(X_i)$};
		\end{scope}
		\end{tikzpicture}
		\caption{On the left, a \WTG in which $\F$ is not Scott-continuous, for instance when 
			we consider the non-increasing sequence of continuous functions on 
			$(0, 1)$ depicted on the right for all locations.}
		\label{fig:fixpoint_sequence}
	\end{figure}

\end{exa}

We thus design a more pedestrian 
proof only using non-increasing sequences $(V_i)$ 
defined by an iteration of the operator $\F$ (as in~\cite{Tarski-55})
that \emph{uniformly} converge over each region, i.e.~the restriction of the sequence 
to each region uniformly converges. 
In particular, we adapt and correct the sketch given in~\cite{Bou16} 
for concurrent hybrid games with only non-negative weights to the context 
of (one-clock) \WTG{s} with negative weights.
As in~\cite{Bou16}, our proof is split into two parts: 
\begin{enumerate}
	\item In Section~\ref{subsec:fixpoint_Kleene}, we prove that  
	the sequence $V_i$ of iterates of $\F$ (used in the \emph{value iteration}-based algorithm 
	of~\cite{BouyerCassezFleuryLarsen-04}) converges toward the greatest fixpoint of $\F$. 
	In \cite{Bou16}, it is proved that all non-increasing sequences of functions 
	that uniformly converge over each region are a fixpoint of the operator $\F$. 
	The key argument of \cite{Bou16} is to prove the uniform convergence of the sequence $V_i$ by using Dini's theorem. 
	We show that this is legal by showing that functions $V_i$ are all $k$-Lipschitz-continuous for the same constant $k$ (which requires us to restrict to one-clock WTGs).
	
	\item In Section~\ref{subsec:fixpoint_Val}, we prove that the 
	sequence $V_i$ of iterates of $\F$ converges to the value function. 
	The key argument in~\cite{Bou16} is to remark that the mapping obtained 
	after $i$ applications of $\F$ is a value function when we consider only plays of length $i$.
	We formalise this intuition by inductively defining a strategy of \MinPl that will 
	increase the length $i$ of the plays such that its value is upper bounded by $V_i$.
	To do it, in Section~\ref{subsec:fixpoint_restriction}, we start by proving that a fixpoint of a restriction 
	of $\F$ under a given strategy of \MinPl is the value of this strategy.
\end{enumerate}

We now fix a (one-clock) \WTG $\game$. We have supposed that the final weight functions are continuous over each region. Without loss of generality, we may also suppose that final weights $\weightT(\qloc,\val)$ are different from $+\infty$, for all configurations $(\qloc, \val)$ with $\qloc \in \QLocsT$. To do so, it suffices to forbid the jump into a region $I$ where the final weight function is constant equal to $+\infty$, by modifying the guard on the incoming transitions.

\subsection{\texorpdfstring{Restriction of $\F$ according to a strategy of \MinPl}{Restriction of F according to a strategy of Min}}
\label{subsec:fixpoint_restriction}

Before to start the proof of Theorem~\ref{theo:fixedpoint}, we establish a partial result
when we have fixed the strategy of \MinPl. 
In particular, we give a link between the value $\Value^\minstrategy$ of a 
strategy $\minstrategy$ of \MinPl and the restriction $\F^{\minstrategy}$ of $\F$ according to this strategy, by replacing 
the infimum for locations of \MinPl with the choice given by 
$\minstrategy$. However, since we can not (and we do not want to) suppose that 
$\minstrategy$ is memoryless, we need to extend the functions that 
$\F^{\minstrategy}$ use. Formally, $\F^{\minstrategy}$ is a new operator 
over functions $X \colon \FPlays \to \Rbar$ such that 
$\F^{\minstrategy}(X)(\play)$ is equal to 
\begin{displaymath} 
\begin{cases}
\weightfin(\qloc, \val) & \text{if $\qloc \in \QLocsT$} \\
\weight(\trans) + \delay \,\weight(\qloc) + 
X(\play \moveto{\delay, \trans} (\qloc', \val'))
& \text{if } \qloc \in \QLocsMin \text{ and } 
\minstrategy(\play) = (\delay, \trans) \\  
\sup_{(\qloc, \val) \moveto{\delay, \trans} (\qloc', \val')} 
\left(\weight(\trans) + \delay \,\weight(\qloc) + 
X(\play \moveto{\delay, \trans} (\qloc', \val')) \right) 
& \text{if } \qloc \in \QLocsMax \\ 
\end{cases}
\end{displaymath}
where $\last(\play) = (\qloc, \val)$. 

Since the value function $\Value^\minstrategy$ has only been defined for configurations, 
we need to extend it over all finite plays. To do that, we define the weight of a 
play given by two strategies ($\minstrategy$ and $\maxstrategy$) from a given 
finite play $\play$ by the weight of the unique play $\play'$ conforming to 
$\minstrategy$ and $\maxstrategy$ from the last configuration of $\play$ 
(when $\minstrategy$ and $\maxstrategy$ are initialised by $\play$), i.e. 
\begin{displaymath}
\weightP(\Play(\play, \minstrategy, \maxstrategy)) = \weight(\play') \,.
\end{displaymath}
Even if the weight of $\play$ is not taken into account in the 
weight of $\Play(\play, \minstrategy, \maxstrategy)$,
we observe that $\Play(\last(\play), \minstrategy, \maxstrategy)$ 
does not describe the same accumulated weight as $\Play(\play, \minstrategy, \maxstrategy)$ 
(since $\minstrategy$ and $\maxstrategy$ may use some memory).
We thus let, for all finite plays $\play$, 
\begin{displaymath}
\Value^{\minstrategy}(\play) = 
\sup_{\maxstrategy} \weightP(\Play(\play, \minstrategy, \maxstrategy)) \,.
\end{displaymath} 

\pagebreak
\begin{lem}
	\label{lem:ValueStrat}
	$\Value^{\minstrategy}$ is a fixpoint of
	$\F^{\minstrategy}$.
\end{lem}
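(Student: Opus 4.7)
The plan is to verify the fixpoint equation $\Value^{\minstrategy}(\play) = \F^{\minstrategy}(\Value^{\minstrategy})(\play)$ pointwise, by a case analysis on the nature of the last location of $\play$. Write $\last(\play) = (\qloc, \val)$.

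If $\qloc \in \QLocsT$, the game is stuck and the unique maximal play continuing $\play$ is $\play$ itself, so $\weightP(\Play(\play, \minstrategy, \maxstrategy))$ is simply $\weightT(\qloc, \val)$ (with the convention that the contribution of the empty continuation is $0$), for every $\maxstrategy$. Taking the supremum, $\Value^{\minstrategy}(\play) = \weightT(\qloc, \val) = \F^{\minstrategy}(\Value^{\minstrategy})(\play)$.

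If $\qloc \in \QLocsMin$, the strategy $\minstrategy$ forces the first edge of any continuation of $\play$ to be the one prescribed by $\minstrategy(\play) = (\delay, \trans)$, leading to the successor configuration $(\qloc', \val')$. Let $\play' = \play \moveto{\delay, \trans} (\qloc', \val')$. Since $\minstrategy$ is identically defined on finite plays extending $\play'$ whether we started from $\play$ or from $\play'$, the play $\Play(\play, \minstrategy, \maxstrategy)$ coincides with the edge $(\delay, \trans)$ followed by $\Play(\play', \minstrategy, \maxstrategy)$. Using additivity of the accumulated weight (and the convention that $+\infty$ is absorbing for infinite continuations), we obtain $\weightP(\Play(\play, \minstrategy, \maxstrategy)) = \weight(\trans) + \delay\,\weight(\qloc) + \weightP(\Play(\play', \minstrategy, \maxstrategy))$. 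Taking the supremum over $\maxstrategy$ on both sides yields exactly $\F^{\minstrategy}(\Value^{\minstrategy})(\play)$.

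If $\qloc \in \QLocsMax$, one argues symmetrically but now the first move is chosen by $\maxstrategy$. The key observation is a bijective correspondence: giving a strategy $\maxstrategy$ on finite plays extending $\play$ is the same as giving a first choice $(\delay, \trans)$ at $\play$ together with a strategy $\maxstrategy'$ on plays extending $\play' = \play \moveto{\delay, \trans} (\qloc', \val')$. Under this correspondence the payoff splits as in the previous case, so
\begin{displaymath}
\sup_{\maxstrategy} \weightP(\Play(\play, \minstrategy, \maxstrategy))
= \sup_{(\qloc,\val)\moveto{\delay,\trans}(\qloc',\val')} \bigl(\weight(\trans) + \delay\,\weight(\qloc) + \sup_{\maxstrategy'} \weightP(\Play(\play', \minstrategy, \maxstrategy'))\bigr),
\end{displaymath}
which is precisely $\F^{\minstrategy}(\Value^{\minstrategy})(\play)$. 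The only subtle point is the interchange of the two suprema (over first moves and over continuations), which is justified by the elementary identity $\sup_{(a,b)} f(a,b) = \sup_a \sup_b f(a,b)$; this is the main (and only mild) obstacle, and it goes through because strategies can be freely recombined from their restrictions to extensions of $\play'$.
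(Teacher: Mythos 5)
Your proof is correct and follows the same decomposition as the paper's, with identical handling of the target and $\QLocsMin$ cases. In the $\QLocsMax$ case, the paper establishes the equality by a two-sided $\varepsilon$-inequality, explicitly gluing a near-optimal first move to a near-optimal continuation strategy and bounding the error by $\varepsilon$; you compress this to the exchange-of-suprema identity $\sup_{(a,b)} f = \sup_a \sup_b f$. These are the same idea at different levels of detail: the hidden content in both is that the payoff from $\play$ depends only on the first move together with the restriction of $\maxstrategy$ to extensions of $\play'$, and that any such pair can be recombined into a full strategy realising the same payoff. One small imprecision in your phrasing: the correspondence you invoke is not bijective but many-to-one, since a strategy defined on all extensions of $\play$ also prescribes behaviour after first moves it never takes. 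What you actually need is surjectivity of the map $\maxstrategy \mapsto \bigl((\delay,\trans),\;\maxstrategy\restriction_{\text{extensions of }\play'}\bigr)$ together with the fact that the payoff factors through it, and both hold, so the supremum computation is unaffected.
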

\begin{proof}
	Let $\play$ be a finite play and $(\qloc, \val)$ its last configuration. 
	If $\qloc \in \QLocsT$, then for all 
	strategies $\maxstrategy$ of $\MaxPl$, 
	$\weightP(\Play(\play, \minstrategy, \maxstrategy)) = 
	\weightfin(\qloc, \val)$. 
	Thus, we obtain that 
	\begin{displaymath}
	\Value^{\minstrategy}(\play) = 
	\sup_{\maxstrategy} \weightP(\Play(\play, \minstrategy, \maxstrategy)) = 
	\weightfin(\qloc, \val) = \F^{\minstrategy}(\Value^{\minstrategy})(\play) 
	\end{displaymath}
	where the second equality follows by applying 
	the supremum over strategies of $\MaxPl$.
	
	Now, we suppose that $\qloc \in \QLocsMin$, and let
	$\minstrategy(\play) = (\delay, \trans)$. 
	Thus, for all strategies $\maxstrategy$ of $\MaxPl$, we obtain that 
	\begin{displaymath}
	\weightP(\Play(\play, \minstrategy, \maxstrategy)) = 
	\weight(\trans) + \delay \, \weight(\qloc) + 
	\weightP(\Play(\play \moveto{\delay, \trans} (\qloc', \val'), 
	\minstrategy, \maxstrategy)) \,.
	\end{displaymath}
	In particular, by applying the supremum over 
	strategies of \MaxPl, we obtain that 
	\begin{displaymath}
	\Value^{\minstrategy}(\play)
	= \sup_{\maxstrategy} \big(\weight(\trans) + \delay \, \weight(\qloc) + 
	\weightP(\Play(\play \moveto{\delay, \trans} (\qloc', \val'), 
	\minstrategy, \maxstrategy))\big) \,.
	\end{displaymath}
	We note that the choice $(\delay, \trans)$ of $\minstrategy$
	depends only on $\play$ that is independent of the chosen strategy of $\MaxPl$. 
	Thus, we deduce that
	\begin{align*}
	\Value^{\minstrategy}(\play) 
	&= \weight(\trans) + \delay \, \weight(\qloc) + 
	\sup_{\maxstrategy} \weightP(\Play(\play \moveto{\delay, \trans} (\qloc', \val'), 
	\minstrategy, \maxstrategy))  \\
	&= \weight(\trans) + \delay\, \weight(\qloc) + 
	\Value^{\minstrategy}(\play \moveto{\delay, \trans} (\qloc', \val')) 
	= \F^{\minstrategy}(\Value^{\minstrategy})(\play)\,.
	\end{align*}
	
	Finally, we suppose that $\qloc \in \QLocsMax$ and we reason by double 
	inequalities. We begin by showing that 
	$\F^{\minstrategy}(\Value^{\minstrategy})(\play) \leq \Value^{\minstrategy}(\play)$. 
	Let $\varepsilon > 0$, by the definition of $\F^{\minstrategy}(\Value^{\minstrategy})(\play)$, 
	we obtain the existence of an edge $(\qloc, \val) \moveto{\delay, \trans} (\qloc', \val')$ 
	such~that 
	\begin{displaymath}
	\F^{\minstrategy}(\Value^{\minstrategy})(\play) \leq 
	\weight(\trans) + \delay\,\weight(\qloc) + 
	\Value^{\minstrategy}(\play \moveto{\delay, \trans} (\qloc', \val')) 
	+ \frac{\varepsilon}{2} \,.
	\end{displaymath}
	Similarly, by the definition of $\Value^{\minstrategy}$, there exists 
	a strategy $\maxstrategyopt$ 
	for $\MaxPl$ such that
	\begin{displaymath}
	\Value^{\minstrategy}(\play \moveto{\delay, \trans} (\qloc', \val')) \leq 
	\weightP(\Play(\play \moveto{\delay, \trans} (\qloc', \val'), \minstrategy, 
	\maxstrategyopt) + \frac{\varepsilon}{2} \,.
	\end{displaymath} 
	In particular, by combining these two inequalities, we obtain:    
	\begin{displaymath}
	\F^{\minstrategy}(\Value^{\minstrategy})(\play) \leq 
	\weight(\trans) + \delay\,\weight(\qloc) + 
	\weightP(\Play(\play \moveto{\delay, \trans} (\qloc', \val'), 
	\minstrategy, \maxstrategyopt) + \varepsilon \,.
	\end{displaymath}
	We consider a new strategy $\maxstrategy$ for \MaxPl 
	defined such that $\maxstrategy(\play) = (\delay, \trans)$ and 
	$\maxstrategy(\play') = \maxstrategyopt(\play')$, for all 
	finite plays $\play' \neq \play$. In particular, since $\maxstrategy$ 
	and $\maxstrategyopt$ make the same choice for all plays that extend 
	$\play \moveto{\delay, \trans} (\qloc', \val')$, we obtain that 
	$\weightP(\Play(\play \moveto{\delay, \trans} 
	(\qloc', \val'), \minstrategy, \maxstrategyopt) = 
	\weightP(\Play(\play \moveto{\delay, \trans} (\qloc', \val'), 
	\minstrategy, \maxstrategy)$. 
	Thus, we deduce that  
	\begin{displaymath}
	\F^{\minstrategy}(\Value^{\minstrategy})(\play) 
	\leq \weightP(\Play(\play, \minstrategy, \maxstrategy)) + \varepsilon 
	\leq \sup_{\maxstrategy'} 
	\big(\weightP(\Play(\play, \minstrategy, \maxstrategy'))\big) + \varepsilon  
	= \Value^{\minstrategy}(\play) + \varepsilon  \,.
	\end{displaymath}
	Since this inequality holds for all $\varepsilon > 0$, it follows that 
	$\F^{\minstrategy}(\Value^{\minstrategy})(\play) \leq \Value^{\minstrategy}(\play)$.  
	
	Conversely, we prove that $\Value^{\minstrategy}(\play) \leq 
	\F^{\minstrategy}(\Value^{\minstrategy})(\play)$. Let $\varepsilon > 0$, and 
	as for the previous inequality, there exists a strategy $\maxstrategyopt$ of $\MaxPl$ such that 
	\begin{displaymath}
	\Value^{\minstrategy}(\play) - \varepsilon \leq 
	\weightP(\Play(\play, \minstrategy, \maxstrategyopt)) \,. 
	\end{displaymath}
	In particular, by letting $(\delay, \trans) = \maxstrategyopt(\play)$, we 
	deduce that 
	\begin{align*}
	\Value^{\minstrategy}(\play) - \varepsilon &\leq
	\weight(\trans) + \delay \,\weight(\qloc) + 
	\weightP(\Play(\play \moveto{\delay, \trans}(\qloc', \val'), 
	\minstrategy, \maxstrategyopt)) \\ 
	&\leq \weight(\trans) + \delay \,\weight(\qloc) + 
	\Value^{\minstrategy}(\play \moveto{\delay, \trans}(\qloc', \val')) \\
	&\leq \sup_{(\qloc, \val) \moveto{\delay, \trans} (\qloc', \val')} 
	\big(\weight(\trans) + \delay \,\weight(\qloc) + 
	\Value^{\minstrategy}(\play \moveto{\delay, \trans}(\qloc', \val')) \big) 
	= \F^{\minstrategy}(\Value^{\minstrategy})(\play) \,.
	\end{align*}
	Finally, since this inequality holds for all $\varepsilon > 0$, we obtain that 
	$\Value^{\minstrategy}(\play) \leq \F^{\minstrategy}(\Value^{\minstrategy})(\play)$.
\end{proof}

\subsection{\texorpdfstring{Iterates of $\F$ uniformly converge to the greatest fixpoint of $\F$}{Iterates of F uniformly converge to the greatest fixpoint of F}}
\label{subsec:fixpoint_Kleene}

We now prove the first result needed in the proof of Theorem~\ref{theo:fixedpoint}.
In particular, we consider the sequence $(V_i)_i$ of functions $\QLocs \times \Rpos\to \Rbar$ defined,
for all $i\in \N$ and for all configurations $(\qloc, \val)$ by
\begin{displaymath}
V_i(\qloc, \val) = 
\begin{cases}
+\infty & \text{if $i = 0$ and $\qloc \notin \QLocsT$} \\
\weightfin(\qloc, \val) & \text{if $i = 0$ and $\qloc \in \QLocsT$} \\
\F(V_{i-1})(\qloc, \val) & \text{otherwise.}
\end{cases}
\end{displaymath}

\begin{prop}
	\label{prop:fixpoint}
	$\inf_i V_i$ is the greatest fixpoint of $\F$.
\end{prop}

This section is devoted to the proof of this proposition.
In particular, our proof relies on the following technical 
results\footnote{These results hold for all \WTG{s} and not only one-clock \WTG.} 
providing sufficient condition on the limit of the sequence $(V_i)_i$ to be a fixpoint of $\F$.

\begin{lem}
	\label{lem:Fproperties}
	\begin{enumerate}
		\item\label{item:Fproperties_monotonic} 
		$\F$ is monotonous\footnote{A function $f\colon X\to Y$ over partial orders $X$ and $Y$ is monotonous if for all $x\leq x'$ in $X$, we have $f(x) \leq f(x')$ in $Y$.} over $\QLocs \times \Rpos \to \Rbar$, where the partial order over $\QLocs \times \Rpos$ is the pointwise order over $\QLocs$ and the usual order over $\Rpos$.
		\item\label{item:Fproperties_constant} 
		For all $X\colon \QLocs \times \Rpos\to \Rbar$ and $a \geq 0$, $\F(X + a) \leq \F(X) + a$.
		\item\label{item:Fproperties_triangle} 
		For all non-increasing sequences $(X_i)_i$ of functions $X_i\colon \QLocs \times \Rpos\to \Rbar$ that 
		uniformly converge over each region\footnote{A sequence of functions $(f_i)_i$ from partial orders $X$ to $\Rbar$ uniformly converge over $A\subset X$ towards a function $f$ if for all $\varepsilon>0$, there exists $N\in \N$ such that for all $i\geq N$ and $x\in A$, $|f_i(x)-f(x)|\leq \varepsilon$.},
		$\inf_i \F(X_i) = \F(\inf_i X_i)$.
	\end{enumerate}
\end{lem}
\begin{proof}
	\begin{enumerate}
		\item Let $X, X' \colon \QLocs \times \Rpos\to \Rbar$ be two 
		functions such that $X \geq X'$ (i.e.~$X(\qloc,\val) \geq X'(\qloc, \val)$ for all configurations $(q,\val)$), 
		and let $(\qloc, \val)$ be a configuration.
		If $\qloc \in \QLocsT$, then $\F(X)(\qloc, \val) = \weightfin(\qloc, \val) 
		= \F(X')(\qloc, \val)$. Otherwise, since $X(\qloc', \val') \geq X'(\qloc', \val')$, 
		for all edges $(\qloc, \val) \moveto{\delay, \trans} (\qloc', \val')$, 
		we have:
		\begin{displaymath}
		\weight(\trans) + \delay \, \weight(\qloc) + X(\qloc', \val') \geq
		\weight(\trans) + \delay \, \weight(\qloc) + X'(\qloc', \val')
		\end{displaymath}
		Finally, we apply the infimum (resp. supremum) over all edges 
		in this inequality if $\qloc \in \QLocsMin$ (resp. $\qloc \in \QLocsMax$).

		\item Let $(\qloc, \val)$ be a configuration. 
		\begin{itemize}
			\item If $\qloc \in \QLocsT$, then, since $a \geq 0$, we have:   
			\begin{displaymath}
			\F(X + a)(\qloc, \val) = \weightfin(\qloc, \val) = \F(X)(\qloc, \val) 
			\leq \F(X)(\qloc, \val) + a \,.
			\end{displaymath} 
			\item If $\qloc \in \QLocsMin$, then, since $a$ does not depend on edges, 
			we have:  
			\begin{align*}
			\F(X + a)(\qloc, \val) 
			&= \inf_{(\qloc, \val) \moveto{\delay, \trans} (\qloc', \val')}
			\big(\weight(\trans) + \delay \, \weight(\qloc) + X(\qloc', \val') + a\big) \\
			&= \inf_{(\qloc, \val) \moveto{\delay, \trans} (\qloc', \val')}
			\big(\weight(\trans) + \delay \, \weight(\qloc) + X(\qloc', \val')\big) + a \\
			&= \F(X)(\qloc, \val) + a \,.
			\end{align*} 
			\item If $\qloc \in \QLocsMax$, then, for the same reason, we have  
			$\F(X + a)(\qloc, \val) = \F(X)(\qloc, \val) + a$.
		\end{itemize}

		\item 
		Since $\F$ is monotonous (by item~\eqref{item:Fproperties_monotonic}), 
		we remark that for all $j \in \N$, we have 
		$\F(X_j) \geq \F(\inf_i X_i)$. 
		In particular, as this inequality holds for all $j \in \N$, we obtain that 
		\begin{displaymath}
		\inf_i \F(X_i) \geq \F(\inf_i X_i) \,.
		\end{displaymath}
		Conversely, let $\varepsilon > 0$ and $I$ be a region. Since $(X_i)_i$ uniformly converges 
		over $\reg$ to $\inf_i X_i$ (since the sequence $(X_i)$ is non-increasing), there exists $j_{\reg} \in \N$ such that 
		$X_{j_\reg} \leq \inf_i X_i + \varepsilon$ over $\reg$. Now, since there are
		only a finite number of regions, we fix $j = \max_{\reg} j_\reg$. 
		Thus, since the sequence $(X_i)_i$ is non-increasing, for all regions $\reg$, 
		$X_{j} \leq \inf_i X_i + \varepsilon$. 
		Since $\F$ is monotonous and by item~\eqref{item:Fproperties_constant}, 
		\begin{displaymath}
		\F(X_j) 
		\leq \F\big(\inf_i X_i + \varepsilon\big)
		\leq  \F\big(\inf_i X_i\big) + \varepsilon \,.
		\end{displaymath}
		In particular, we deduce that $\inf_i \F(X_i) \leq  \F\big(\inf_i X_i\big) + \varepsilon$, 
		for all $\varepsilon > 0$.
		\qedhere
	\end{enumerate}
\end{proof}

As a corollary of this result, we prove that $\inf_i V_i$ is a fixpoint of $\F$ by proving that 
it is a non-increasing sequence that uniformly converges. 
In particular, we observe that this sequence of functions is non-increasing 
since $\F$ is monotonous (by Lemma~\ref{lem:Fproperties}.\eqref{item:Fproperties_monotonic}) and 
$V_0 \geq \F(V_0)$ (since $V_0(\qloc, \val) = +\infty$, or 
$V_0(\qloc, \val) = V_1(\qloc, \val) = \weightT(\qloc, \val)$). 
In particular, it (simply) converges to $\inf_i V_i$.
To prove that $(V_i)_i$ uniformly converges to $\inf_i V_i$, 
we will use Dini's theorem: a sequence of continuous 
functions that (simply) converges to a continuous function, 
uniformly converges. The main difficulty is to prove that 
$\inf_i V_i$ is continuous over regions. To do it, we note that 
if a sequence of $k$-Lipschitz-continuous functions (simply) 
converges, then its limit is a continuous function. In particular, 
we want to show that there exists 
$k \in \Rpos$ such that, for all $i \in \N$, $V_i$ is 
$k$-Lipschitz-continuous. 

\begin{defi}
	A function $f \colon \Rpos \to \Rbar$ is \emph{continuous (respectively, 
	$k$-Lipschitz-continuous, for $k \in \Rpos$) on regions} if 
	for all regions $\reg$, the restriction of $f$ over each region is a 
	continuous (respectively, $k$-Lipschitz-continuous) function. 
	A function $f \colon \Rpos \to \Rbar$ is 
	\emph{$k$-Lipschitz-continuous on regions}, for 
	$k \in \Rpos$, if for all regions $\reg$ and all 
	valuations $\val, \val' \in \reg$,
	$|f(\val) - f(\val')| \leq \Lambda|\val - \val'|$. 

	A function $X \colon \QLocs \times \Rpos \to \Rbar$ is 
	continuous (respectively, $k$-Lipschitz-continuous) on regions, 
	if for all locations $\qloc \in \QLocs$, the restriction of $X$ to
	$\qloc$ is continuous (respectively, $k$-Lipschitz-continuous).
\end{defi}

We let $\Lambda$ be the maximum absolute value of all weights of locations and of derivatives that appear 
in the piecewise-affine functions (the slopes of the affine pieces) of $\weightT$. Then, $V_0$ is trivially $\Lambda$-Lipschitz-continuous on regions. 
Indeed, being a $\Lambda$-Lipschitz-continuous function on regions when already being a continuous and piecewise 
affine function with finitely many pieces is equivalent to having its derivatives bounded by $\Lambda$ in absolute value. 
In~\cite[Lemma~10.10]{Bus19}, it is shown in all \WTG{s}, for all $i \in \N$, $V_i$ 
is $\Lambda_i$-Lipschitz-continuous on regions for a constant $\Lambda_i$ that depends on $i$.
 We now refine the proof, in our one-clock setting, to show that the same constant $\Lambda$ can be chosen for all $i$. 

\begin{lem}
	\label{lem:fixpoint_Lipschitz}
	For all $i \in \N$, $V_i$ 
	is $\Lambda$-Lipschitz-continuous on regions.
\end{lem}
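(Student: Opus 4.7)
The proof proceeds by induction on $i$. The base case $i = 0$ is immediate: on non-target locations, $V_0 \equiv +\infty$ which is trivially $\Lambda$-Lipschitz (under the usual convention that $|(+\infty) - (+\infty)| = 0$), and on target locations, $V_0 = \weightfin$, whose piecewise-affine slopes on each region are bounded by $\Lambda$ by the very definition of $\Lambda$.

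For the inductive step, assume $V_i$ is $\Lambda$-Lipschitz-continuous on regions and fix a location $\qloc$ and two valuations $\val, \val'$ in the same region $I$, with $\val \le \val'$ (without loss of generality). If $\qloc \in \QLocsT$, the conclusion is immediate as in the base case. Otherwise, we treat the case $\qloc \in \QLocsMin$ in detail; the case $\qloc \in \QLocsMax$ is symmetric (Min's infimum becomes Max's supremum, and ``for every choice from $\val$ there exists a matching choice from $\val'$'' is used the same way). The key lemma is the following \emph{matching claim}: for every edge $(\qloc,\val) \moveto{\delay,\trans} (\qloc'',\val'')$ in $\sem{\game}$, there exists an edge $(\qloc,\val') \moveto{\delay',\trans} (\qloc'',\val''')$ such that
\begin{displaymath}
\big|\delay \weight(\qloc) + \weight(\trans) + V_i(\qloc'',\val'')\big] - \big[\delay' \weight(\qloc) + \weight(\trans) + V_i(\qloc'',\val''')\big| \leq \Lambda\,|\val - \val'|.
\end{displaymath}
Once the claim is established, Min's choice from $\val$ can be matched by a choice from $\val'$ of weight bounded by the original plus $\Lambda|\val-\val'|$, so taking infima yields $V_{i+1}(\qloc,\val') \leq V_{i+1}(\qloc,\val) + \Lambda|\val-\val'|$; exchanging the roles of $\val$ and $\val'$ gives the reverse inequality.

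To prove the matching claim, I distinguish two cases based on the magnitude of $\delay$. If $\delay \ge \val'-\val$, set $\delay' := \delay - (\val'-\val) \ge 0$. Then $\val' + \delay' = \val + \delay$, so the guard of $\trans$ is satisfied from $\val'$ as well, and the destination valuation matches ($\val''' = \val''$, whether or not $\trans$ resets the clock). The discrepancy reduces to $|\delay - \delay'|\cdot|\weight(\qloc)| \leq \Lambda(\val'-\val)$. If $\delay < \val' - \val$, set $\delay' := 0$. Since $\val \le \val+\delay < \val'$ and all three lie in the same region $I$, and since regions are atomic with respect to guards (so $I$ is either contained in or disjoint from $\guard$ of $\trans$), $\val + \delay \in \guard$ forces $I \subseteq \guard$ and in particular $\val' \in \guard$. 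Urgency is not an issue here: if $\qloc$ is urgent then $\delay=0$ is forced and we fall trivially into this second case. The destination is either $(\qloc'',0)$ (if $\trans$ resets) or $(\qloc'',\val')$ (otherwise). In the reset case, $V_i$ values coincide, and the weight difference is $\delay\,|\weight(\qloc)| \le \Lambda\,\delay \le \Lambda(\val'-\val)$. In the non-reset case, $\val + \delay$ and $\val'$ both lie in $I$, hence in the same region of $\qloc''$, so by the induction hypothesis $|V_i(\qloc'',\val+\delay) - V_i(\qloc'',\val')| \le \Lambda(\val' - \val - \delay)$, which combined with the location-weight contribution yields a total discrepancy of $\Lambda\,\delay + \Lambda(\val' - \val - \delay) = \Lambda(\val'-\val)$, as required.

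The main subtlety, and what I expect to be the most delicate point, is the second subcase: one must verify carefully that choosing $\delay'=0$ produces a valid edge (guard satisfied, urgency respected, destination in a region on which $V_i$ is Lipschitz with the same constant $\Lambda$). This relies crucially on the facts that we have a single clock and that the regions of $\game$ are defined from the endpoints of \emph{all} guards, so every region is either fully inside or fully outside any given guard. These two facts together force the matching $\delay'=0$ to be legal and keep $\val+\delay$ and $\val'$ in a common region of $\qloc''$, which is exactly what allows the inductive hypothesis to be reused with the \emph{same} constant $\Lambda$ (rather than some $\Lambda_i$ growing with $i$, as in the general multi-clock setting of \cite{Bus19}).
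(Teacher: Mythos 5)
Your proof is correct, but you take a genuinely different route from the paper. The paper's proof leans on the previously established fact (cited from~\cite{Bus19}) that each $V_i$ is continuous and piecewise affine on regions with finitely many pieces; it then observes that, within each region, the inf/sup in $\F$ is attained either at delay~$0$ or at delays that reach a cutpoint of $V_i(\qloc',\cdot)$, rewrites $V_{i+1}(\qloc,\cdot)$ as a finite nesting of $\min$/$\max$ over affine terms, and bounds each term's slope by $\Lambda$. Your argument instead proceeds directly, never invoking piecewise affinity: for any edge from $(\qloc,\val)$ you construct a matching edge from $(\qloc,\val')$ whose cost differs by at most $\Lambda|\val-\val'|$, using either a compensated delay $\delay' = \delay - (\val'-\val)$ (same landing valuation, so the $V_i$ terms cancel and only the location-rate term appears) or $\delay'=0$ (landing valuations $\val+\delay$ and $\val'$ stay in the same region, so the induction hypothesis applies with the \emph{same} constant $\Lambda$). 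Your approach is more elementary and self-contained, and you correctly identify the two one-clock facts that make it work: regions are atomic with respect to guards, and the intermediate valuation $\val + \delay$ stays in the region $I$ when $\val,\val'\in I$ — neither survives with several clocks, which is exactly why~\cite{Bus19} only obtains an $i$-dependent constant $\Lambda_i$ in general. One small thing to tighten: the sentence ``exchanging the roles of $\val$ and $\val'$'' glosses over the fact that the case split is not literally symmetric (matching from the larger valuation down to the smaller always uses the delay-compensation case, modulo urgency); it is an easy repair, but worth stating explicitly if this were to be written in full.
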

\begin{proof}
	We rely on the knowledge that for all $i\in \N$, $V_i$ is continuous and piecewise 
	affine on each regions, with finitely many pieces, i.e. each $V_i$ has a finite number
	of cutpoints\footnote{We recall that a cutpoint is the value of the clock in-between two affine pieces of the function.}.

	We reason by induction on $i \in \N$ showing that the derivative of $V_i$ is bounded by $\Lambda$ in absolute values. The base case $i=0$ is trivially satisfied as seen above. Let $i\in \N$ be such that 
	$V_i$ is continuous on regions and piecewise affine with finitely many pieces that have  
	a derivative bounded by $\Lambda$. Let $\qloc \in \QLocs\setminus \QLocsT$ 
	(otherwise, we conclude as for $i=0$). 
	By massaging the definition of $\F$, we have that
	\begin{displaymath}
  V_{i+1}(\qloc,\val) =
  \begin{cases}
    \min_\trans \inf_{(\qloc,\val) \moveto{\delay, \trans} (\qloc',\val')} 
	\big(\weight(\trans) + \delay\, \weight(\qloc) + V_i(\qloc', \val')\big) 
                          & \text{if } \qloc \in \QLocsMin \\
    \max_\trans \sup_{(\qloc,\val) \moveto{\delay, \trans} (\qloc',\val')} 
	\big(\weight(\trans) + \delay\, \weight(\qloc) + V_i(\qloc', \val')\big) 
                          & \text{if } \qloc \in \QLocsMax 
  \end{cases}
  \end{displaymath}
	For a fixed valuation $\val$, and once chosen the transition $\trans$ in the minimum or maximum,
	there are finitely many delays $\delay$ to consider in the
	infimum or supremum: since $V_i$ is piecewise affine, they are either delay $0$ or all delays $\delay$ such that $\val+\delay$ are cutpoints $\val^c$ of 
	$V_i(\qloc', \cdot)$. 
	In particular, since there 
	is only a finite number of such cutpoints, the function $\F(V_i)(\qloc, \cdot)$ can 
	be written as a finite nesting of $\min$ and $\max$ operations over affine terms, 
	each corresponding to a choice of delay and a transition to take. There 
	are several cases to define those terms, depending on the chosen transition $\delta$ and cutpoint $\val^c$. 
	If the transition $\trans$ 
	resets $\clockx$: 
	\begin{itemize}
	\item if a delay $0$ is chosen, then the affine term is $V_i(\qloc', 0)+\weight(\trans)$ that has derivative 0;
	\item otherwise, the affine term that it generates is of the form:
	\begin{displaymath}
	(\val^c-\val) \, \weight(\qloc) + \weight(\trans) + V_i(\qloc', 0)
	\end{displaymath}
	whose derivative is bounded by $\maxWeightLoc$ in absolute value, and thus by $\Lambda$.
	\end{itemize}
	 If the transition 
	$\trans$ does not reset $\clockx$: 
	\begin{itemize}
		\item if a delay $0$ is chosen, then the affine term is
		$\weight(\trans) + V_i(\qloc', \val)$, whose derivative is the same as in $V_i(\qloc',\cdot)$ and thus bounded by $\Lambda$ in absolute value;
		
		\item otherwise, the affine term that it generates is of the form:
		\begin{displaymath}
		(\val^c-\val) \, \weight(\qloc) + \weight(\trans) + V_i(\qloc', \val^c) \,.
		\end{displaymath}
		whose derivative is bounded by $\maxWeightLoc$ in absolute value, and thus by $\Lambda$.\qedhere
	\end{itemize}
\end{proof}

Now, we have tools to prove Proposition~\ref{prop:fixpoint}.
First, we prove that $\inf_i V_i$ is a fixed point of $\F$, 
i.e.~$\inf_i V_i = \F(\inf_i V_i)$.
By Lemma~\ref{lem:fixpoint_Lipschitz}, we know that for all $i \in \N$, 
$V_i$ is $\Lambda$-Lipschitz-continuous over regions. 
Thus, we deduce that $(V_i)_i$ converges to a continuous function over regions, 
i.e.~$\inf_i V_i(\qloc)$ is continuous over regions, for all locations $\qloc$. 
Now, by Dini's theorem, we deduce that $(V_i)_i$ uniformly converges over regions 
to $\inf_i V_i$. Finally, we apply 
Lemma~\ref{lem:Fproperties}.\eqref{item:Fproperties_triangle} to conclude that 
$\inf_i V_i = \inf_i \F(V_i) = \F(\inf_i V_i)$, and thus that 
$\inf_i V_i$ is a fixpoint of $\F$.
	
Finally, we prove that $\inf_i V_i$ is the greatest fixpoint $V$ of $\F$. 
As $V$ is the greatest fixpoint, 
we have $\inf_i V_i \leq V$. Conversely, we prove 
by induction on $i\in \N$ that $V \leq V_i$. If $i= 0$ and $\qloc \notin \QLocsT$, 
then $V_0(\qloc, \val) = +\infty$ and $V(\qloc, \val) \leq V_0(\qloc, \val)$; 
otherwise, $\qloc \in \QLocsT$ and $V_0(\qloc, \val) = \weightfin(\qloc, \val)$, while 
$V(\qloc, \val) = \weightfin(\qloc, \val)$ (since $V$ is a fixpoint of $\F$). 
If $i \in \N$ is such that 
$V \leq V_i$, as $\F$ is monotonous, 
we have $\F(V) \leq \F(V_i)$. 
Thus, since $V$ is a fixpoint of $\F$, we deduce that 
$V = \F(V)\leq \F(V_i) = V_{i+1}$ that concludes the proof 
of Proposition~\ref{prop:fixpoint}.

\subsection{\texorpdfstring{The greatest fixpoint of $\F$ is equal to the value function}{The greatest fixpoint of F is equal to the value function}}
\label{subsec:fixpoint_Val}

To conclude the proof of Theorem~\ref{theo:fixedpoint}, its remains to 
prove that $\inf_i V_i = \Value$. To do it, we adapt the 
proof given in~\cite{Bou16} to our context (turn-based games with 
negative and positive weights).

\begin{prop}
	\label{prop:fixpoint-limit}
	$\inf_i V_i = \Value$
\end{prop}

The main idea of this proof is the link between $V_i$ 
and the value obtained when we consider only plays with at most $i$ steps. 
We thus let $W_i$ be the configurations from where \MinPl 
can guarantee to reach a target location within $i$ steps: this is a very classical sequence of configurations that is traditionally called \emph{attractor}. Intuitively, for a configuration not in $W_i$, $V_i$ is equal to $+\infty$ since \MaxPl can avoid the target in the $i$ first 
steps. Formally, we define the sequence of $(W_i)_i$ by induction on 
$i \in \N$: $(\qloc, \val) \in W_0$ if $\qloc \in \QLocsT$, and for all $i \in \N$, 
$(\qloc, \val) \in W_{i+1}$ if $(\qloc, \val) \in W_i$, or
\begin{enumerate}
	\item\label{item:Wi_Min} $\qloc \in \QLocsMin$, and there exists an edge  
	$(\qloc, \val) \moveto{\delay, \trans} (\qloc', \val')$ such that  
	$(\qloc', \val') \in W_i$;
	
	\item\label{item:Wi_Max} $\qloc \in \QLocsMax$, and for all edges 
	$(\qloc, \val) \moveto{\delay, \trans} (\qloc', \val')$, we have 
	$(\qloc', \val') \in W_i$.
\end{enumerate}
The following lemma recalls the link between $V_i$ and $W_i$: 

\begin{lem}
	\label{lem:WinValueFinite}
	Let $i \in \N$ and $(\qloc, \val)$ be a configuration. Then, 
	$(\qloc, \val) \in W_i$ if and only if $V_i(\qloc, \val) < +\infty$.
\end{lem}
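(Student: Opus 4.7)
The plan is to prove the equivalence by induction on $i \in \N$.

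For the base case $i = 0$, the equivalence is immediate from the definitions: $(\qloc, \val) \in W_0$ iff $\qloc \in \QLocsT$, in which case $V_0(\qloc, \val) = \weightfin(\qloc, \val) < +\infty$ by our standing assumption that final weights are never $+\infty$; if $\qloc \notin \QLocsT$, then $V_0(\qloc, \val) = +\infty$ by definition. For the inductive step, I will assume the property for $i$ and case on the type of the location $\qloc$. If $\qloc \in \QLocsT$, then $V_{i+1}(\qloc, \val) = \F(V_i)(\qloc, \val) = \weightfin(\qloc, \val) < +\infty$, and $(\qloc, \val) \in W_{i+1}$ since target configurations are always included in the attractor (interpreting the recursive definition so that $W_i \subseteq W_{i+1}$, which corresponds to the intended semantics of ``reachable within $i+1$ steps'' subsuming ``reachable within $i$ steps'').

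The \MinPl\ case goes smoothly in both directions. Forward: if $(\qloc, \val) \in W_{i+1}$ via some witnessing edge to $(\qloc', \val') \in W_i$, then by the inductive hypothesis $V_i(\qloc', \val') < +\infty$, and since $\weight(\trans)$ and $\delay\, \weight(\qloc)$ are finite (delays lie in a bounded interval since the clock is bounded by $\clockbound$), that edge contributes a finite term to the infimum defining $V_{i+1}(\qloc, \val)$. Backward: if $V_{i+1}(\qloc, \val) < +\infty$, the infimum is finite, so at least one edge must yield a finite term; since transition and location weights are finite, this forces $V_i(\qloc', \val') < +\infty$ for some reachable $(\qloc', \val')$, which belongs to $W_i$ by IH.

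The \MaxPl\ case has an easy backward direction: if some edge leads to $(\qloc', \val') \notin W_i$, then $V_i(\qloc', \val') = +\infty$ by IH, so the supremum is $+\infty$. The forward direction is where the main obstacle lies: we must show that $\sup_{(\qloc,\val) \moveto{\delay, \trans} (\qloc',\val')}\big(\weight(\trans) + \delay\, \weight(\qloc) + V_i(\qloc', \val')\big)$ is finite even though the set of edges is uncountable. The argument is (i) there are only finitely many transitions $\trans$; (ii) for each $\trans$, the valid delays $\delay$ lie in a bounded sub-interval of $[0,\clockbound]$, and since $\weight(\qloc)$ is finite the contribution $\delay\,\weight(\qloc)$ is bounded; (iii) the successor valuation $\val'$ lies in a bounded region, and by Lemma~\ref{lem:fixpoint_Lipschitz} the map $\val' \mapsto V_i(\qloc', \val')$ is $\Lambda$-Lipschitz on that region, hence bounded above on it (taking value $-\infty$ pointwise is harmless, as it only lowers the supremum). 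Combining these bounds shows the supremum is finite, concluding the proof.

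The only nontrivial step is thus the \MaxPl\ forward direction, where Lipschitz-continuity of the iterates (established earlier) is crucially used to turn pointwise finiteness of $V_i$ into a uniform upper bound on each bounded region.
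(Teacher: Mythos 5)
Your proposal is correct and follows the same inductive structure as the paper's proof: induct on $i$ and split according to whether the location is a target, belongs to $\MinPl$, or belongs to $\MaxPl$.

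The one place where you do something substantively different is the forward direction of the $\MaxPl$ case, and it is a genuine improvement. The paper's proof simply asserts ``by hypothesis of induction, $V_{i+1}(\qloc,\val)<+\infty$ if and only if for all edges $(\qloc,\val)\moveto{\delay,\trans}(\qloc',\val')$ we have $(\qloc',\val')\in W_i$'', without addressing the fact that a supremum over an uncountable family of delays could in principle equal $+\infty$ even when each individual term is finite. You correctly single this out as the only nontrivial step and close the gap using finitely many transitions, bounded delays, and the $\Lambda$-Lipschitz bound on $V_i$ over each region from Lemma~\ref{lem:fixpoint_Lipschitz}. Note that Lipschitz-continuity is slightly more than what is strictly needed: since $V_i$ is piecewise affine with finitely many pieces on each bounded region (the very fact the proof of Lemma~\ref{lem:fixpoint_Lipschitz} relies on), pointwise finiteness on a region already yields a finite upper bound there; but invoking the Lipschitz constant is perfectly valid, and since that lemma is proved earlier in the section there is no circularity. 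You also correctly flag the small friction in the definition of $W_{i+1}$, which as stated only covers $\QLocsMin$ and $\QLocsMax$ locations, and you repair it the same way the paper's proof implicitly does, by keeping target configurations in every $W_i$.
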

\begin{proof}
	We prove the equivalence by induction on $i \in \N$. If $i = 0$, 
	since the game has been modified so that final weight functions are finite, we conclude by 
	definitions of $W_0$ and $V_0$. Now, we fix $i \in \N$ such that 
	for all configurations $(\qloc', \val')$, we have 
	$(\qloc', \val') \in W_i$ if and only if $V_i(\qloc', \val') < + \infty$. 
	Let $(\qloc, \val)$ be a configuration. First, we suppose that 
	$\qloc \in \LocsT$. In this case, $(\qloc, \val) \in W^{i+1}$ and 
	$V_{i+1}(\qloc, \val) = \weightT(\qloc, \val) < +\infty$ (by 
	hypothesis). 
		
	Now, we suppose that $\qloc \in \QLocsMin$ and we have: 
	\begin{align*}
	V_{i+1}(\qloc, \val) &= \F(V_i)(\qloc, \val) = 
	\inf_{(\qloc, \val) \moveto{\delay, \trans} (\qloc', \val')} 
	\big(\weight(\trans) + \delay \, \weight(\qloc) + V_i(\qloc', \val') \big) 
	\end{align*}
	In particular, $V_{i+1}(\qloc, \val) < +\infty$ 
	if and only if there exists an edge $(\qloc, \val) \moveto{\delay, \trans} 
	(\qloc', \val')$ such that $V_i(\qloc', \val') < +\infty$. We deduce that 
	$V_{i+1}(\qloc, \val) < +\infty$ if and only if there exists 
	an edge $(\qloc, \val) \moveto{\delay, \trans} (\qloc', \val')$ such 
	that $(\qloc', \val') \in W_i$ (by applying the 
	inductive hypothesis on $(\qloc', \val')$). We conclude that 
	$V_{i+1}(\qloc, \val) < +\infty$ if and only if 
	$(\qloc, \val) \in W^{i+1}$, by item~\eqref{item:Wi_Min} of the 
	definition of~$W_{i+1}$. 
	
	Finally, we suppose that $\qloc \in \QLocsMax$ and we have: 
	\begin{align*}
	V_{i+1}(\qloc, \val) &= \F(V_i)(\qloc, \val) = 
	\sup_{(\qloc, \val) \moveto{\delay, \trans} (\qloc', \val')} 
	\big(\weight(\trans) + \delay \, \weight(\qloc) + V_i(\qloc', \val') \big) 
	\end{align*}
	In particular, by inductive hypothesis, $V_{i+1}(\qloc, \val) < +\infty$ 
	if and only if for all edges $(\qloc, \val) \moveto{\delay, \trans} 
	(\qloc', \val')$, we have $(\qloc', \val') \in W_i$. Thus, by 
	item~\eqref{item:Wi_Max} of the definition of~$W_{i+1}$, we obtain
	that $V_{i+1}(\qloc, \val) < +\infty$ if and only if $(\qloc, \val) \in W_{i+1}$.
\end{proof}

To prove that the value iteration converges to the value function, we relate configurations in $W_i$ with some particular strategies of \MinPl. Given a configuration 
$(\qloc, \val)$, we fix $\minStratWin_i(\qloc, \val)$ to be the set of 
strategies of \MinPl such that all plays from $(\qloc, \val)$ conforming to it
reach the target \emph{in at most $i$ steps}. More precisely, we require that for all plays starting from $(\qloc,\val)$ and conforming to a strategy of $\minStratWin_i(\qloc, \val)$, the $j$th configuration of the play belongs to $W_{i-j}$: in particular, the first configuration, $(\qloc,\val)$ must be in $W_i$, and the last one in $W_0$ (i.e.~with a location being a target).

For all $\varepsilon > 0$, we inductively define a sequence of 
strategies $(\minstrategy^{\varepsilon}_i)_i$ whose $i$-th strategy 
will be shown to belong to $\minStratWin_i(\qloc, \val)$ if $(\qloc, \val)\in W_i$, and
$\varepsilon$-optimal according to $V_i$. 
In particular, we prove that an almost-optimal strategy can be 
defined by choosing almost-optimal edges along the play. 
Intuitively, the $i$-th strategy chooses the first move as the best edge 
according to $V_{i}$, and then follows the $(i-1)$-th strategy
(applying in the suffix of the play except the first choice). 

Formally, we let $\minstrategy^{\varepsilon}_0$ be any fixed strategy of \MinPl.
For $i \in \N$, relying on $\minstrategy^{\varepsilon/2}_{i}$, 
we inductively define $\minstrategy^{\varepsilon}_{i+1}$ according 
to the length of all finite plays ending in a location of \MinPl. 
If the play contains only one configuration, we fix $\minstrategy^{\varepsilon}_{i+1}(\qloc, \val)$
be any decision $(\delay, \trans)$ such that $(\qloc, \val) \moveto{\delay, \trans} 
(\qloc', \val')$ and $\weight(\trans) + \delay \, \weight(\qloc) + 
V_{i}(\qloc', \val')\leq V_{i+1}(\qloc,\val)+\varepsilon/2$ (that exists by definition of $V_{i+1}(\qloc,\val)$ as an infimum).
Otherwise, the play can be decomposed as $(\qloc, \val) \moveto{\delay, \trans} 
\play$ with $(\qloc', \val')$ the first configuration of $\play$, and we let:
\begin{displaymath}
\minstrategy^{\varepsilon}_{i+1}((\qloc, \val) \moveto{\delay, \trans} \play) 
= \minstrategy^{\varepsilon/2}_i(\play) \,.
\end{displaymath} 

\begin{lem}
	\label{lem:partialMinStrat}
	For all $i \in \N$, $\varepsilon > 0$ and $(\qloc, \val)\in W_i$, 
   	\begin{center}
   	$\minstrategy^{\varepsilon}_i \in \minStratWin_i(\qloc, \val)$ \quad and\quad
	$V_{i}(\qloc, \val) + \varepsilon \geq 
	\Value^{\minstrategy^{\varepsilon}_i}(\qloc, \val)$\,.
	\end{center}
\end{lem}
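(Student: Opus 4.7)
The plan is a straightforward induction on $i$, where the key bookkeeping is that the strategy $\minstrategy^\varepsilon_{i+1}$ delegates to $\minstrategy^{\varepsilon/2}_i$ after one step, so the total accumulated error is bounded by $\varepsilon/2 + \varepsilon/2 = \varepsilon$.

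For the base case $i=0$, a configuration $(\qloc,\val) \in W_0$ satisfies $\qloc \in \QLocsT$, so every maximal play from $(\qloc,\val)$ has length $0$ and payoff $\weightT(\qloc,\val) = V_0(\qloc,\val)$. Both claims follow immediately, since the trivial play certifies $\minstrategy^\varepsilon_0 \in \minStratWin_0(\qloc,\val)$ and $\Value^{\minstrategy^\varepsilon_0}(\qloc,\val) = V_0(\qloc,\val)$.

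For the inductive step, suppose the statement holds at level $i$ for every $\varepsilon' > 0$, and fix $(\qloc,\val) \in W_{i+1}$ and $\varepsilon > 0$. I split by the type of $\qloc$. If $\qloc \in \QLocsT$, the base-case argument applies again. If $\qloc \in \QLocsMin$, the definition of $W_{i+1}$ guarantees at least one edge $(\qloc,\val) \moveto{\delay,\trans} (\qloc',\val')$ with $(\qloc',\val') \in W_i$ (via Lemma~\ref{lem:WinValueFinite}, since the infimum defining $V_{i+1}(\qloc,\val) = \F(V_i)(\qloc,\val)$ is attained within $\varepsilon/2$ by edges leading to configurations of finite $V_i$-value). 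The strategy $\minstrategy^\varepsilon_{i+1}$ is defined to pick such an edge and then follow $\minstrategy^{\varepsilon/2}_i$ from $(\qloc',\val')$: by the induction hypothesis this continuation belongs to $\minStratWin_i(\qloc',\val')$, so the entire play reaches a target within $i+1$ steps, yielding $\minstrategy^\varepsilon_{i+1} \in \minStratWin_{i+1}(\qloc,\val)$; and by the induction hypothesis again, $\Value^{\minstrategy^{\varepsilon/2}_i}(\qloc',\val') \le V_i(\qloc',\val') + \varepsilon/2$. Combining with the near-infimum condition on $(\delay,\trans)$ gives the value bound $\Value^{\minstrategy^\varepsilon_{i+1}}(\qloc,\val) \le V_{i+1}(\qloc,\val) + \varepsilon$.

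If $\qloc \in \QLocsMax$, the definition of $W_{i+1}$ forces every successor $(\qloc',\val')$ of $(\qloc,\val)$ to lie in $W_i$. The strategy $\minstrategy^\varepsilon_{i+1}$ lets \MaxPl play any $(\delay,\trans)$ and then continues as $\minstrategy^{\varepsilon/2}_i$ from the successor; by the induction hypothesis applied at each $(\qloc',\val') \in W_i$, every resulting continuation reaches a target within $i$ steps, so the whole play reaches it within $i+1$ steps. For the value bound, I take the supremum over \MaxPl's choices and use the induction hypothesis uniformly in the successor:
\begin{align*}
\Value^{\minstrategy^\varepsilon_{i+1}}(\qloc,\val)
&= \sup_{(\qloc,\val) \moveto{\delay,\trans} (\qloc',\val')} \bigl(\weight(\trans) + \delay\,\weight(\qloc) + \Value^{\minstrategy^{\varepsilon/2}_i}(\qloc',\val')\bigr) \\
&\le \sup_{(\qloc,\val) \moveto{\delay,\trans} (\qloc',\val')} \bigl(\weight(\trans) + \delay\,\weight(\qloc) + V_i(\qloc',\val')\bigr) + \varepsilon/2 \\
&= V_{i+1}(\qloc,\val) + \varepsilon/2 \le V_{i+1}(\qloc,\val) + \varepsilon,
\end{align*}
using that $V_{i+1} = \F(V_i)$ expresses the \MaxPl-value as this supremum.

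The proof is essentially routine; the only mild obstacle is keeping the quantifier gymnastics clean in the \MaxPl case, where the induction hypothesis must apply to \emph{every} successor simultaneously (justified by the universal quantifier in item~\ref{item:Wi_Max} of the definition of $W_{i+1}$), and verifying that the halved tolerance $\varepsilon/2$ passed to the recursive call exactly telescopes with the $\varepsilon/2$ slack used when picking the near-infimum edge at \MinPl locations.
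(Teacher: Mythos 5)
Your proof is correct and follows essentially the same route as the paper's: induction on $i$, base case handled by targets, the $\QLocsMin$ case relying on the near-infimum choice plus Lemma~\ref{lem:WinValueFinite} to place the chosen successor in $W_i$, the $\QLocsMax$ case applying the inductive hypothesis uniformly over all successors (licensed by item~\ref{item:Wi_Max} of the definition of $W_{i+1}$), and the $\varepsilon/2 + \varepsilon/2$ telescoping. The only stylistic difference is that you argue the membership $\minstrategy^{\varepsilon}_{i+1} \in \minStratWin_{i+1}(\qloc,\val)$ constructively while the paper runs it by contradiction, and you leave the identity $\Value^{\minstrategy^{\varepsilon}_{i+1}}\big((\qloc,\val) \moveto{\delay,\trans}(\qloc',\val')\big) = \Value^{\minstrategy^{\varepsilon/2}_i}(\qloc',\val')$ implicit where the paper records it as a displayed equation; both are immediate from the delegation clause in the definition of $\minstrategy^{\varepsilon}_{i+1}$.
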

\begin{proof}
	We reason by induction on $i \in \N$. If $i = 0$, since $(\qloc, \val)\in W_0$, we have $\qloc \in \QLocsT$ and thus any strategy (and thus the fixed strategy $\minstrategy_0^\varepsilon$) is in 
	$\minStratWin_0(\qloc, \val)$, and 
	$V_0(\qloc, \val) = \weightT(\qloc, \val) = 
	\Value^{\minstrategy^{\varepsilon}_0}(\qloc, \val)$.
	
	Now, consider $i \in \N$ such that for all 
	configurations $(\qloc, \val)\in W_i$,
	$\minstrategy^{\varepsilon/2}_i \in \minStratWin_i(\qloc, \val)$, and 
	$V_{i}(\qloc, \val) + \varepsilon/2 \geq 
	\Value^{\minstrategy^{\varepsilon/2}_i}(\qloc, \val)$. We show that 
	$\minstrategy^{\varepsilon}_{i+1}$ satisfies the properties 
	for a given configuration $(\qloc, \val)\in W_{i+1}$. If $\qloc\in \QLocsT$, 
	we conclude as in the case~$i = 0$ (since $\qloc \in W_0$).
	Otherwise, we show
	$\minstrategy^{\varepsilon}_{i+1} \in \minStratWin_{i+1}(\qloc, \val)$ 
	by contradiction. We thus suppose 
	that there exists a finite play $\play'$ of length $i+1$ 
	conforming to $\minstrategy^{\varepsilon}_{i+1}$ that 
	does not reach $\QLocsT$. It can be 
	decomposed as $(\qloc, \val) \moveto{\delay, \trans} \play$ 
	where $\minstrategy^{\varepsilon}_{i+1}(\qloc, \val) = 
	(\delay, \trans)$ and $\play$ is conforming to 
	$\minstrategy^{\varepsilon/2}_i$. 
	We show that $(\qloc', \val') \in W_i$ where 
	$(\qloc', \val')$ is the first configuration of $\play$. 
	\begin{itemize}
		\item If $\qloc \in \QLocsMax$, then we conclude 
		that $(\qloc', \val') \in W_i$ by item~\eqref{item:Wi_Max} 
		of definition of $W_{i+1}$: all edges from $(\qloc, \val)$ 
		reach a configuration in $W_i$. 
		
		\item If $\qloc \in \QLocsMin$, then, by 
		item~\eqref{item:Wi_Min} of definition of $W_{i+1}$, 
		there exists an edge $(\qloc, \val) \moveto{\delay', \trans'} 
		(\qloc'', \val'')$ such that $(\qloc'', \val'') \in W_i$, i.e.~$V_i(\qloc'',\val'') < +\infty$ (by 
		Lemma~\ref{lem:WinValueFinite}). The choice of   
		$\minstrategy^{\varepsilon}_{i+1}(\qloc, \val)$ is taken along all possible edges 
		from $(\qloc, \val)$, at most $\varepsilon$ away of the infimum. Thus, 
		it chooses an edge $(\qloc, \val) \moveto{\delay, \trans} 
		(\qloc', \val')$ such that $V_i(\qloc',\val') < +\infty$, 
		i.e.~$(\qloc',\val') \in W_i$ (by Lemma~\ref{lem:WinValueFinite}). 
	\end{itemize}
	By induction hypothesis
	applied to $(\qloc', \val') \in W_i$, 
	$\minstrategy^{\varepsilon/2}_i \in 
	\minStratWin_{i}(\qloc', \val')$, and thus $\play$ reaches $\QLocsT$ within $i$ steps which  contradicts	the hypothesis.
	
	
	We then prove that $V_{i+1}(\qloc, \val) + \varepsilon \geq 
	\Value^{\minstrategy^{\varepsilon}_{i+1}}(\qloc, \val)$. By 
	definition of $\minstrategy^{\varepsilon}_{i+1}$ with 
	$\minstrategy^{\varepsilon/2}_i$, we remark that, for all finite 
	plays $(\qloc, \val) \moveto{\delay, \trans} \play$ of length at 
	least one, we have $\minstrategy^{\varepsilon}_{i+1}((\qloc, \val) 
	\moveto{\delay, \trans} \play) = \minstrategy^{\varepsilon/2}_i(\play)$.
	In particular, the weight of all plays from $(\qloc, \val) 
	\moveto{\delay, \trans} \play$ and conforming to 
	$\minstrategy^{\varepsilon}_{i+1}$ is equal to the weight of the play 
	from $\play$ and conforming to $\minstrategy^{\varepsilon/2}_{i}$ under 
	the same strategy of \MaxPl, i.e.~for all strategies of \MaxPl, 
	$\maxstrategy$, we have: 
	\begin{displaymath}
	\weight(\Play((\qloc, \val) \moveto{\delay, \trans} (\qloc', \val'), 
	\minstrategy^{\varepsilon}_{i+1}, \maxstrategy)) = 
	\weight(\Play((\qloc', \val'), \minstrategy^{\varepsilon/2}_i, 
	\maxstrategy)) \,.
	\end{displaymath} 
	Thus, by applying the supremum over strategies 
	of \MaxPl, we deduce~that 
	\begin{equation}
	\label{eq:app-eqVal}
	\Value^{\minstrategy^{\varepsilon}_{i+1}}((\qloc, \val) 
	\moveto{\delay, \trans} (\qloc', \val')) = 
	\Value^{\minstrategy^{\varepsilon/2}_i}(\qloc', \val') 
	\end{equation}
	\begin{itemize}
		\item If $\qloc \in \QLocsMax$, then we~have:  
		\begin{align*}
		V_{i + 1}(\qloc, \val) &= \F(V_i)(\qloc, \val)
		= \sup_{(\qloc, \val) \moveto{\delay, \trans} (\qloc', \val')} 
		\big(\weight(\trans) + \delay \, \weight(\qloc) + V_i(\qloc', \val')\big) 
		\end{align*}
		Since $(\qloc, \val) \in W_{i+1}$, then we have $(\qloc', \val') \in W_i$ 
		(by item~\eqref{item:Wi_Max}). 
		Moreover, by induction hypothesis applying on $V_i(\qloc', \val')$, we deduce that
		$V_{i}(\qloc', \val') + \varepsilon/2 \geq 
		\Value^{\minstrategy^{\varepsilon/2}_i}(\qloc', \val')$. 
		Thus, for all $(\delay, \trans)$, we have:
		\begin{align*}
		V_{i + 1}(\qloc, \val) &\geq \weight(\trans) + \delay \, \weight(\qloc) + 
		V_i(\qloc', \val')\\
		&\geq \weight(\trans) + \delay \, \weight(\qloc) + 
		\Value^{\minstrategy^{\varepsilon/2}_i}(\qloc', \val') - \varepsilon/2 \\
		&\geq \weight(\trans) + \delay \, \weight(\qloc) + 
		\Value^{\minstrategy^{\varepsilon}_{i+1}}((\qloc, \val) 
		\moveto{\delay, \trans }(\qloc', \val')) - \varepsilon/2
		\qquad \text{(by~\eqref{eq:app-eqVal})}
		\end{align*}
		Finally, since this inequality holds for all edges from $(\qloc, \val)$, we 
		deduce that  
		\begin{align*}
		V_{i + 1}(\qloc, \val) &\geq 
		\sup_{(\qloc, \val) \moveto{\delay, \trans} (\qloc', \val')} 
		\big(\weight(\trans) + \delay \, \weight(\qloc) + 
		\Value^{\minstrategy^{\varepsilon}_{i+1}}((\qloc, \val) 
		\moveto{\delay, \trans}(\qloc', \val'))\big) - \varepsilon/2\\
		&\geq \Value^{\minstrategy^{\varepsilon}_{i+1}}(\qloc, \val) - \varepsilon
		\qquad \text{(by Lemma~\ref{lem:ValueStrat}).}
		\end{align*}
		
		\item If $\qloc \in \QLocsMin$, then, by definition of 
		$\minstrategy^{\varepsilon}_{i+1}$, and letting $\minstrategy^{\varepsilon}_{i+1}(\qloc, \val) = (\delay, \trans)$: 
		\begin{align*}
		V_{i + 1}(\qloc, \val) 
		&\geq \weight(\trans) + \delay \, \weight(\qloc) + V_i(\qloc', \val') - 
		\varepsilon/2
		\end{align*} 
		Now, since $(\qloc', \val') \in W_i$ (as explain before to show that 
		$\minstrategy^{\varepsilon}_{i+1} \in \minStratWin_{i+1}(\qloc,\val)$), by induction hypothesis, 
		$V_{i}(\qloc', \val') + \varepsilon/2 \geq 
		\Value^{\minstrategy^{\varepsilon/2}_i}(\qloc', \val')$. Thus, we deduce that
		\begin{align*}
		V_{i + 1}(\qloc, \val) 
		&\geq \weight(\trans) + \delay \, \weight(\qloc) + 
		\Value^{\minstrategy^{\varepsilon/2}_i}(\qloc', \val') - \varepsilon \\
		&\geq \weight(\trans) + \delay \, \weight(\qloc) + 
		\Value^{\minstrategy^{\varepsilon}_{i+1}}((\qloc, \val) 
		\moveto{\delay, \trans}(\qloc', \val'))	- \varepsilon 
		\qquad \text{(by~\eqref{eq:app-eqVal})} \\
		&\geq \Value^{\minstrategy^{\varepsilon}_{i+1}}(\qloc, \val) - \varepsilon 
		\qquad \text{(by Lemma~\ref{lem:ValueStrat}).}
		\qedhere
		\end{align*}
	\end{itemize}
\end{proof}

As a corollary, we obtain: 
\begin{lem}
	\label{lem:WinValueStrat}
	For all $i \in \N$, and $(\qloc, \val) \in W_i$,
	$V_i(\qloc, \val) = 
	\inf_{\minstrategy \in \minStratWin_i(\qloc, \val)} 
	\Value^{\minstrategy}(\qloc, \val)$.
\end{lem}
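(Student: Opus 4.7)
The plan is to prove the equality by two inequalities. The $\leq$ direction is an immediate corollary of Lemma~\ref{lem:partialMinStrat}: for every $\varepsilon>0$, the strategy $\minstrategy^{\varepsilon}_i$ lies in $\minStratWin_i(\qloc,\val)$ and satisfies $\Value^{\minstrategy^{\varepsilon}_i}(\qloc,\val)\leq V_i(\qloc,\val)+\varepsilon$, so
\begin{displaymath}
\inf_{\minstrategy\in\minStratWin_i(\qloc,\val)}\Value^{\minstrategy}(\qloc,\val)\leq V_i(\qloc,\val)+\varepsilon,
\end{displaymath}
and letting $\varepsilon\to 0$ yields the desired bound.

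For the reverse inequality $V_i(\qloc,\val)\leq \Value^{\minstrategy}(\qloc,\val)$ for every $\minstrategy\in\minStratWin_i(\qloc,\val)$, I would proceed by induction on~$i$. The base case $i=0$ is direct: $(\qloc,\val)\in W_0$ forces $\qloc\in\QLocsT$, so any play from $(\qloc,\val)$ has payoff $\weightfin(\qloc,\val)=V_0(\qloc,\val)=\Value^{\minstrategy}(\qloc,\val)$. For the inductive step, fix $\minstrategy\in\minStratWin_{i+1}(\qloc,\val)$. If $\qloc\in\QLocsT$ we conclude as above. Otherwise, the key tool is the fixpoint identity for $\Value^{\minstrategy}$ given by Lemma~\ref{lem:ValueStrat}, together with the residual strategy construction: given any edge $(\qloc,\val)\moveto{\delay,\trans}(\qloc',\val')$ that may appear as the first move of a play conforming to $\minstrategy$, define the \emph{shifted} strategy $\minstrategy'$ by $\minstrategy'(\play)=\minstrategy((\qloc,\val)\moveto{\delay,\trans}\play)$.

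If $\qloc\in\QLocsMin$, let $(\delay,\trans)=\minstrategy(\qloc,\val)$ and $(\qloc',\val')$ be the resulting configuration. Since $\minstrategy$ forces reaching $\QLocsT$ within $i+1$ steps from $(\qloc,\val)$, the residual $\minstrategy'$ forces reaching $\QLocsT$ within $i$ steps from $(\qloc',\val')$, so in particular $(\qloc',\val')\in W_i$ and $\minstrategy'\in\minStratWin_i(\qloc',\val')$. Applying Lemma~\ref{lem:ValueStrat} and the induction hypothesis,
\begin{displaymath}
\Value^{\minstrategy}(\qloc,\val)=\weight(\trans)+\delay\,\weight(\qloc)+\Value^{\minstrategy'}(\qloc',\val')\geq \weight(\trans)+\delay\,\weight(\qloc)+V_i(\qloc',\val')\geq \F(V_i)(\qloc,\val)=V_{i+1}(\qloc,\val).
\end{displaymath}
If $\qloc\in\QLocsMax$, the definition of $W_{i+1}$ ensures $(\qloc',\val')\in W_i$ for \emph{every} outgoing edge, and in each case the corresponding shifted strategy lies in $\minStratWin_i(\qloc',\val')$; applying the fixpoint identity of Lemma~\ref{lem:ValueStrat} and using the induction hypothesis edge by edge before taking the supremum yields
\begin{displaymath}
\Value^{\minstrategy}(\qloc,\val)=\sup_{(\qloc,\val)\moveto{\delay,\trans}(\qloc',\val')}\!\!\!\big(\weight(\trans)+\delay\,\weight(\qloc)+\Value^{\minstrategy}((\qloc,\val)\moveto{\delay,\trans}(\qloc',\val'))\big)\geq V_{i+1}(\qloc,\val),
\end{displaymath}
completing the induction. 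I expect no real obstacle here, since all of the ingredients (the fixpoint property of $\Value^{\minstrategy}$, the equivalence between $W_i$ and $V_i<+\infty$, and the existence of $\varepsilon$-witnesses in $\minStratWin_i$) have been established; the only point deserving care is verifying that the shifted strategy $\minstrategy'$ genuinely belongs to $\minStratWin_i(\qloc',\val')$, which follows directly from the definition of $\minStratWin_{i+1}$ and the dichotomy between $\MinPl$- and $\MaxPl$-locations.
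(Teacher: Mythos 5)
Your proposal is correct and follows essentially the same route as the paper: the direction $\inf \leq V_i$ drops out of Lemma~\ref{lem:partialMinStrat}, and the direction $V_i \leq \Value^{\minstrategy}$ for each $\minstrategy \in \minStratWin_i(\qloc,\val)$ is proved by induction using the fixpoint identity of Lemma~\ref{lem:ValueStrat} and a shifted/residual strategy, exactly as the paper does (the paper just makes the auxiliary identity $\Value^{\minstrategy'}(\qloc',\val') = \Value^{\minstrategy}((\qloc,\val) \moveto{\delay,\trans} (\qloc',\val'))$ explicit before chaining the inequalities, which you use implicitly).
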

\begin{proof}
	We reason by induction on $i\in \N$. If $i = 0$, since $\qloc \in \QLocsT$ 
	for all strategies $\minstrategy \in \minStratWin_0(\qloc, \val)$, 
	 $V_0(\qloc, \val) = \weightfin(\qloc, \val) = \Value^{\minstrategy}(\qloc, \val)$.
	 
	For $i \in \N$ such that the property holds, let
	$(\qloc, \val) \in W_{i+1}$. If
	$\qloc \in \QLocsT$, we have $(\qloc,\val) \in W_0$ and we conclude as in the case 
	$i = 0$. Otherwise, Lemma~\ref{lem:partialMinStrat} directly implies that 
%
	\begin{displaymath}
	V_{i+1}(\qloc, \val)+\varepsilon\geq \Value^{\minstrategy^{\varepsilon}_{i+1}}(\qloc, \val) \geq 
	\inf_{\minstrategy \in \minStratWin_{i+1}(\qloc, \val)} 
	\Value^{\minstrategy}(\qloc, \val) 
	\end{displaymath} 
	and $V_{i+1}(\qloc, \val)\geq 
	\inf_{\minstrategy \in \minStratWin_{i+1}(\qloc, \val)} 
	\Value^{\minstrategy}(\qloc, \val)$ since the inequality holds 
	for all $\varepsilon > 0$.
	
	Conversely, we show that $V_{i+1}(\qloc, \val) \leq 
	\inf_{\minstrategy \in \minStratWin_{i+1}(\qloc, \val)} 
	\Value^{\minstrategy}(\qloc, \val)$ by proving that for all 
	$\minstrategy \in \minStratWin_{i+1}(\qloc, \val)$, we have 
	$V_{i+1}(\qloc, \val) \leq \Value^{\minstrategy}(\qloc, \val)$. 
	Let $\minstrategy \in \minStratWin_{i+1}(\qloc, \val)$. 
	\begin{itemize}
		\item If $\qloc \in \QLocsMin$, then we let $(\delay,\trans)= 
		\minstrategy(\qloc, \val)$ with $(\qloc,\val)\xrightarrow{\delay,\trans} (\qloc', \val')$, so that $(\qloc', \val')\in W_i$. By induction hypothesis, we have $V_i(\qloc', \val') = 
		\inf_{\minstrategy' \in \minStratWin_i(\qloc', \val')} \Value^{\minstrategy}(\qloc', \val')$. Consider the strategy 
		$\minstrategy'$ obtained from $\minstrategy$ by adding as a first move the edge $(\qloc, \val) \moveto{\delay, \trans} (\qloc', \val')$. Formally, it is defined by: 
		\begin{displaymath}
		\minstrategy_{\qloc', \val'}(\play) = 
		\begin{cases}
		\minstrategy\big((\qloc, \val) \moveto{\delay, \trans} \play\big) & 
		\text{if $\play$ starts in $(\qloc', \val')$;} \\
		\minstrategy(\play) & \text{otherwise.}
		\end{cases} 
		\end{displaymath} 
		Given a play $\play'$ conforming to $\minstrategy'$
		starting from $(\qloc', \val')$, we remark that 
		$(\qloc, \val) \moveto{\delay, \trans} \play'$ is conforming to $\minstrategy$. 
		In particular, we obtain that 
		\begin{eqnarray}
		\label{eq:WinValueStrat-1}
		\Play((\qloc, \val), \minstrategy', \maxstrategy) = 
		\Play((\qloc, \val) \moveto{\delay, \trans} (\qloc', \val'), 
		\minstrategy, \maxstrategy)
		\end{eqnarray}   		
		Thus, from~\eqref{eq:WinValueStrat-1}, we deduce that 
		$\minstrategy' \in \minStratWin_{i}(\qloc', \val')$.
		Thus, $V_i(\qloc', \val')
		\leq  \Value^{\minstrategy'}(\qloc', \val')$ and we obtain that
		\begin{align*}
		V_{i + 1}(\qloc, \val) &= \F(V_i)(\qloc, \val) 
		\\
		&\leq \weight(\trans) + \delay \, \weight(\qloc) + V_i(\qloc', \val')\\ 
		&\leq \weight(\trans) + \delay \, \weight(\qloc) + 
		\Value^{\minstrategy'}(\qloc', \val') \,.
		\end{align*}
		Moreover, by~\eqref{eq:WinValueStrat-1}, we also obtain that, 
		for all strategies $\maxstrategy$ of \MaxPl, 
		\begin{displaymath}
		\weightP(\Play((\qloc, \val), \minstrategy', \maxstrategy)) = 
		\weightP(\Play((\qloc, \val) \moveto{\delay, \trans} (\qloc', \val'), 
		\minstrategy, \maxstrategy)) \,.
		\end{displaymath}  
		In particular, we deduce that $\Value^{\minstrategy'}(\qloc', \val') = 
		\Value^{\minstrategy}((\qloc, \val) \moveto{\delay, \trans} (\qloc', \val'))$, 
		and we can rewrite the previous inequality as:  
		\begin{align*}
		V_{i + 1}(\qloc, \val) 
		&\leq \weight(\trans) + \delay \, \weight(\qloc) + 
		\Value^{\minstrategy}((\qloc, \val) \moveto{\delay, \trans} (\qloc', \val')) \\
		&\leq \Value^{\minstrategy}(\qloc, \val) 
		\qquad \text{(by Lemma~\ref{lem:ValueStrat}).}
		\end{align*}
		
		\item If $\qloc \in \QLocsMax$, then, by Lemma~\ref{lem:ValueStrat}, 
		we have:
		\begin{align*}
		\Value^{\minstrategy}(\qloc, \val) &= 
		\sup_{(\qloc, \val) \moveto{\delay, \trans} (\qloc', \val')} 
		\big(\weight(\trans) + \delay \,\weight(\qloc) + 
		\Value^{\minstrategy}((\qloc, \val) \xrightarrow{\delay, \trans}
		(\qloc', \val')) \big) 
		\end{align*}
		Letting $(\qloc, \val) \moveto{\delay, \trans} (\qloc', \val')$ be 
		an edge from $(\qloc, \val)$, since $\minstrategy \in 
		\minStratWin_{i+1}(\qloc, \val)$, we have $(\qloc',\val')\in W_i$, and thus by induction hypothesis, $V_{i}(\qloc', \val') \leq 
	\inf_{\minstrategy' \in \minStratWin_{i}(\qloc', \val')} 
	\Value^{\minstrategy}(\qloc', \val')$. By considering the same strategy $\minstrategy'$ as the one defined in the case of \MinPl, we obtain that
	\[V_i(\qloc', \val') \leq \Value^{\minstrategy'}(\qloc', \val')  = 
	\Value^{\minstrategy}((\qloc, \val) \moveto{\delay, \trans} (\qloc', \val'))\]
		Thus, we deduce that 
		\begin{displaymath}
		\Value^{\minstrategy}(\qloc, \val) \geq 
		\weight(\trans) + \delay \,\weight(\qloc) + V_i(\qloc', \val')
		\end{displaymath}
		Since this holds for all edges 
		$(\qloc, \val) \moveto{\delay, \trans} (\qloc', \val')$, we deduce that 
		\begin{displaymath}
		\Value^{\minstrategy}(\qloc, \val) \geq 
		\sup_{(\qloc, \val) \moveto{\delay, \trans} (\qloc', \val')} 
		\big(\weight(\trans) + \delay \, \weight(\qloc) + V_i(\qloc', \val') \big) 
		= V_{i+1}(\qloc, \val) \,.
		\qedhere
		\end{displaymath}
	\end{itemize}
\end{proof}

Finally, we have tools to prove Proposition~\ref{prop:fixpoint-limit}. 
In particular, we fix $W$ be the set of configurations from where \MinPl can 
ensure to reach $\QLocsT$ (without restriction on the number of steps), that is
the limit of $(W_i)_i$: $W = \bigcup_i W_i$. 
By classical results~\cite[Theorem 103]{fijalkow-23} on the attractor computation in timed games, 
we know that there exists a finite $N\in \N$ such that $W = \bigcup_{i = 0}^N W_i$. 
Now, by letting $V = \inf_i V_i$, we can finally prove that $V = \Value$. 

We reason by double inequalities and we start by 
proving that $V \geq \Value = 
\inf_{\minstrategy} \Value^{\minstrategy}$. If $(\qloc,\val)\notin W$, we have for all $i\in \N$, $V_i(\qloc, \val) = +\infty$ 
(by Lemma~\ref{lem:WinValueFinite}), and thus $V(\qloc, \val) = +\infty$. Otherwise, $(\qloc, \val) \in W_N$. 
Let $\varepsilon > 0$. 
Since $(V_i)_i$ uniformly converges to its limit, there exists 
$k\geq N$ such that $V_k(\qloc, \val) \leq 
V(\qloc, \val) + \varepsilon$. By using Lemma~\ref{lem:WinValueStrat}, 
$\inf_{\minstrategy \in \minStratWin_k(\qloc, \val)} \Value^{\minstrategy}(\qloc, \val)\leq V(\qloc, \val) + \varepsilon$. 
By considering the infimum over all strategies, and since this holds for all $\varepsilon$, we get $\Value = 
\inf_{\minstrategy} \Value^{\minstrategy}\leq V$. 

Conversely, we prove that $V \leq \Value = 
\inf_{\minstrategy} \Value^{\minstrategy}$. 
By contradiction, we suppose that there exists a strategy $\minstrategy$ of \MinPl and an initial configuration $(\qloc,\val)$
such that $V(\qloc, \val) > \Value^{\minstrategy}(\qloc, \val)$. 
Since then $\Value^{\minstrategy}(\qloc, \val) < +\infty$, all plays 
conforming to $\minstrategy$ reach a target location. 
We (inductively) build a play $\play$ from $(\qloc, \val)$ 
conforming to $\minstrategy$ such that at each step we guarantee that 
$\last(\play) = (\qloc', \val')$ satisfies $\qloc' \notin \QLocsT$, and 
$V(\qloc', \val') > \Value^{\minstrategy}(\play)$. 
In particular, this implies that $\play$ is an infinite play that never reaches a target, 
and we get a contradiction. 

Now, to finish the proof, we provide the construction of a such $\play$.
First, we suppose that $\play = (\qloc, \val)$.
To initiate the inductive construction of $\play$, since $V(\qloc, \val) > 
\Value^{\minstrategy}(\qloc, \val)$, we deduce that $\qloc \notin \QLocsT$ (otherwise $V(\qloc, \val) = \weightfin(\qloc, \val) = 
\Value^{\minstrategy}(\qloc, \val)$ by Lemma~\ref{lem:ValueStrat}). 

Then, we suppose that $\play$ is a play from $(\qloc, \val)$ conforming to 
$\minstrategy$ such that $V(\qloc', \val') > \Value^{\minstrategy}(\play)$ where 
$\last(\play) = (\qloc', \val')$ and $\qloc' \notin \QLocsT$. We define a new 
step for $\play$ as follows. 
\begin{itemize}
	\item If $\qloc' \in \QLocsMin$, then we extend $\play$ by 
	$\play' = \play \moveto{\delay, \trans} (\qloc'', \val'')$, by letting 
	$\minstrategy(\play) = (\delay, \trans)$. Since $\play$ is conforming to $\minstrategy$, then 
	$\play'$ is also conforming to 
	$\minstrategy$.
	By induction hypothesis and Lemma~\ref{lem:ValueStrat}, 
	\begin{displaymath}
	V(\qloc', \val') > \Value^{\minstrategy}(\play) = \weight(\trans) + \delay \, 
	\weight(\qloc') + \Value^{\minstrategy}(\play') 
	\end{displaymath}
	Since $V$ is a fixpoint of $\F$ by Proposition~\ref{prop:fixpoint}, with $V(\qloc',\val')$ being thus equal to an infimum over all possible edges, we obtain 
	\[V(\qloc'',\val'') \geq V(\qloc',\val') - \weight(\trans) - \delay \, \weight(\qloc') > \Value^{\minstrategy}(\play') \]

	\item If $\qloc' \in \QLocsMax$, then we prove that there exists an edge  
	$(\qloc', \val') \moveto{\delay, \trans} (\qloc'', \val'')$ such that 
	$V(\qloc'', \val'') > \Value^{\minstrategy}(\play \moveto{\delay, \trans} 
	(\qloc'', \val''))$, and we define the new step of $\play$ with this edge 
	(the resulting play is conforming to $\minstrategy$). 
	To do that, we reason by contradiction, and we suppose that for all edges 
	$(\qloc', \val') \moveto{\delay, \trans} (\qloc'', \val'')$, we have 
	$V(\qloc'', \val'') \leq \Value^{\minstrategy}(\play \moveto{\delay, \trans} 
	(\qloc'', \val''))$. 
	In particular, we obtain a contradiction since: 
	\begin{align*}
	V(\qloc', \val') &> \Value^{\minstrategy}(\play)  \qquad \text{(by induction hypothesis)}\\
	&= 
	\sup_{(\qloc', \val') \moveto{\delay, \trans} (\qloc'', \val'')} 
	\big(\weight(\trans) + \delay \, \weight(\qloc') + 
	\Value^{\minstrategy}(\play \moveto{\delay, \trans} (\qloc'', \val''))\big) 
	\qquad \text{(by Lemma~\ref{lem:ValueStrat})}\\
	&\geq 
	\sup_{(\qloc', \val') \moveto{\delay, \trans} (\qloc'', \val'')} 
	\big(\weight(\trans) + \delay \, \weight(\qloc') + V(\qloc'', \val'')\big) 
	\qquad \text{(by monotonicity of $\F$)} \\
	&> V(\qloc', \val') \qquad \text{(since $V$ is a fixpoint of $\F$ by Proposition~\ref{prop:fixpoint}).}
	\end{align*}
\end{itemize}

This concludes the proof that $\lim_i V_i = \Value$, and thus of Theorem~\ref{theo:fixedpoint}.

\section{Conclusion}

We solve one-clock \WTG{s}
with arbitrary weights, an open problem for several years. 
We strongly rely on the determinacy of the game, taking the point of
view of $\MaxPl$, instead of the one of $\MinPl$ as was done in
previous work with only non-negative weights. We also use technical ingredients such as the closure of a
game, switching strategies for $\MinPl$, and acyclic unfoldings. 

Regarding the complexity, our algorithm runs in exponential time (with
weights encoded in unary), which does not match the known \PSPACE lower
bound with weights in unary~\cite{FearnleyIbsenJensenSavani-20}. 
Observe that this lower
bound only uses non-negative weights. 
This complexity gap deserves further study.

To compute the value function with a \PSPACE algorithm, a promising idea 
from a reviewer of this article consists in using the first-order theory over the reals with a fixed number of quantifier alternations where the satisfiability of a formula can be checked in \PSPACE \cite[remark 13.10]{BPR06}. The idea is to encode the greatest fixpoint of $\F$ 
(that is the value of the game, by Theorem~\ref{theo:fixedpoint}) in this logic. Indeed, since the value function of a one-clock \WTG is piecewise affine with a 
pseudo-polynomial number of cutpoints (according to~\cite{brihaye2021oneclock}), we can write such a formula by using a variable for each cutpoint and slope, and then expressing with inequalities and equalities that, for each cutpoint or line segment, the current valuation is at least as good as what can be obtained by either waiting until a later cutpoint, or jumping through a transition.

Our work also opens three research directions.  
First, as we unfold the game into a finite tree, it
would be interesting to develop a symbolic approach that shares
computation between subtrees in order to obtain a more efficient
algorithm.  Second, playing stochastically in \WTG{s}
with shortest path objectives has been recently studied
in~\cite{MonmegeParreauxReynier-ICALP21}. One could study an extension
of one-clock \WTG{s} with stochastic transitions. In this context, \MinPl
aims at minimizing the expectation of the accumulated weight.  Third,
the analysis of cycles that we have done in the setting of one-clock
\WTG{s} can be an inspiration to identify new decidable classes of
\WTG{s} with arbitrarily many clocks.

\bibliographystyle{alphaurl}
\bibliography{biblio}

\newcommand{\etalchar}[1]{$^{#1}$}
\begin{thebibliography}{FBBD{\etalchar{+}}23}

\bibitem[ABM04]{AlurBernadskyMadhusudan-04}
Rajeev Alur, Mikhail Bernadsky, and P.~Madhusudan.
\newblock Optimal reachability for weighted timed games.
\newblock In Josep D{\'\i}az, Juhani Karhum{\"a}ki, Arto Lepist{\"o}, and Donald Sannella, editors, {\em Automata, Languages and Programming: 31st International Colloquium, {ICALP} 2004, Turku, Finland, July 12-16, 2004. Proceedings}, volume 3142 of {\em LNCS}, pages 122--133. Springer, 2004.
\newblock \href {https://doi.org/10.1007/978-3-540-27836-8\_13} {\path{doi:10.1007/978-3-540-27836-8\_13}}.

\bibitem[AD94]{AlurDill-94}
Rajeev Alur and David~L. Dill.
\newblock A theory of timed automata.
\newblock {\em Theoretical Computer Science}, 126(2):183--235, 1994.
\newblock URL: \url{https://www.sciencedirect.com/science/article/pii/0304397594900108}, \href {https://doi.org/https://doi.org/10.1016/0304-3975(94)90010-8} {\path{doi:https://doi.org/10.1016/0304-3975(94)90010-8}}.

\bibitem[AM99]{AsarinMaler-99}
Eugene Asarin and Oded Maler.
\newblock As soon as possible: Time optimal control for timed automata.
\newblock In Frits~W. Vaandrager and Jan~H. van Schuppen, editors, {\em Hybrid Systems: Computation and Control, Second International Workshop, HSCC'99, Berg en Dal, The Netherlands, March 29-31, 1999, Proceedings}, volume 1569 of {\em Lecture Notes in Computer Science}, pages 19--30. Springer, 1999.
\newblock \href {https://doi.org/10.1007/3-540-48983-5\_6} {\path{doi:10.1007/3-540-48983-5\_6}}.

\bibitem[BBR05]{BrihayeBruyereRaskin-05}
Thomas Brihaye, V{\'e}ronique Bruy{\`e}re, and Jean-Fran{\c c}ois Raskin.
\newblock On optimal timed strategies.
\newblock In Paul Pettersson and Wang Yi, editors, {\em Formal Modeling and Analysis of Timed Systems, Third International Conference, {FORMATS} 2005, Uppsala, Sweden, September 26-28, 2005, Proceedings}, volume 3829 of {\em Lecture Notes in Computer Science}, pages 49--64. Springer, 2005.
\newblock \href {https://doi.org/10.1007/11603009\_5} {\path{doi:10.1007/11603009\_5}}.

\bibitem[BCD{\etalchar{+}}07]{BehrmannCDFLL07}
Gerd Behrmann, Agn{\`{e}}s Cougnard, Alexandre David, Emmanuel Fleury, Kim~Guldstrand Larsen, and Didier Lime.
\newblock Uppaal-tiga: Time for playing games!
\newblock In Werner Damm and Holger Hermanns, editors, {\em Computer Aided Verification, 19th International Conference, {CAV} 2007, Berlin, Germany, July 3-7, 2007, Proceedings}, volume 4590 of {\em Lecture Notes in Computer Science}, pages 121--125. Springer, 2007.
\newblock \href {https://doi.org/10.1007/978-3-540-73368-3\_14} {\path{doi:10.1007/978-3-540-73368-3\_14}}.

\bibitem[BCFL04]{BouyerCassezFleuryLarsen-04}
Patricia Bouyer, Franck Cassez, Emmanuel Fleury, and Kim~G. Larsen.
\newblock Optimal strategies in priced timed game automata.
\newblock In Kamal Lodaya and Meena Mahajan, editors, {\em {FSTTCS} 2004: Foundations of Software Technology and Theoretical Computer Science, 24th International Conference, Chennai, India, December 16-18, 2004, Proceedings}, pages 148--160, Berlin, Heidelberg, 2004. Springer Berlin Heidelberg.
\newblock Long version at \url{https://doi.org/10.7146/brics.v11i4.21829}.
\newblock \href {https://doi.org/10.1007/978-3-540-30538-5\_13} {\path{doi:10.1007/978-3-540-30538-5\_13}}.

\bibitem[BFH{\etalchar{+}}01]{BehrmannFehnkerHuneLarsenPetterssonRomijnVaandrager-01}
Gerd Behrmann, Ansgar Fehnker, Thomas Hune, Kim~Guldstrand Larsen, Paul Pettersson, Judi Romijn, and Frits~W. Vaandrager.
\newblock Minimum-cost reachability for priced timed automata.
\newblock In Maria Domenica~Di Benedetto and Alberto~L. Sangiovanni{-}Vincentelli, editors, {\em Hybrid Systems: Computation and Control, 4th International Workshop, {HSCC} 2001, Rome, Italy, March 28-30, 2001, Proceedings}, volume 2034 of {\em Lecture Notes in Computer Science}, pages 147--161. Springer, 2001.
\newblock \href {https://doi.org/10.1007/3-540-45351-2\_15} {\path{doi:10.1007/3-540-45351-2\_15}}.

\bibitem[BGH{\etalchar{+}}15]{BrihayeGeeraertHaddadLefaucheuxMonmege-15}
Thomas Brihaye, Gilles Geeraerts, Axel Haddad, Engel Lefaucheux, and Benjamin Monmege.
\newblock Simple priced timed games are not that simple.
\newblock In Prahladh Harsha and G.~Ramalingam, editors, {\em 35th {IARCS} Annual Conference on Foundation of Software Technology and Theoretical Computer Science, {FSTTCS} 2015, December 16-18, 2015, Bangalore, India}, volume~45 of {\em LIPIcs}, pages 278--292. Schloss Dagstuhl - Leibniz-Zentrum f{\"{u}}r Informatik, 2015.
\newblock URL: \url{https://doi.org/10.4230/LIPIcs.FSTTCS.2015.278}, \href {https://doi.org/10.4230/LIPICS.FSTTCS.2015.278} {\path{doi:10.4230/LIPICS.FSTTCS.2015.278}}.

\bibitem[BGH{\etalchar{+}}22]{brihaye2021oneclock}
Thomas Brihaye, Gilles Geeraerts, Axel Haddad, Engel Lefaucheux, and Benjamin Monmege.
\newblock One-clock priced timed games with negative weights.
\newblock {\em Log. Methods Comput. Sci.}, 18(3), 2022.
\newblock \href {https://doi.org/10.46298/lmcs-18(3:17)2022} {\path{doi:10.46298/lmcs-18(3:17)2022}}.

\bibitem[BLMR06]{BouyerLarsenMarkeyRasmussen-06}
Patricia Bouyer, Kim~G. Larsen, Nicolas Markey, and Jacob~Illum Rasmussen.
\newblock Almost optimal strategies in one clock priced timed games.
\newblock In S.~Arun-Kumar and Naveen Garg, editors, {\em {FSTTCS} 2006: Foundations of Software Technology and Theoretical Computer Science, 26th International Conference, Kolkata, India, December 13-15, 2006, Proceedings}, volume 4337 of {\em Lecture Notes in Computer Science}, pages 345--356, Berlin, Heidelberg, 2006. Springer Berlin Heidelberg.
\newblock \href {https://doi.org/10.1007/11944836\_32} {\path{doi:10.1007/11944836\_32}}.

\bibitem[BMR17]{BusattoGastonMonmegeReynier-17}
Damien Busatto{-}Gaston, Benjamin Monmege, and Pierre{-}Alain Reynier.
\newblock Optimal reachability in divergent weighted timed games.
\newblock In Javier Esparza and Andrzej~S. Murawski, editors, {\em Foundations of Software Science and Computation Structures - 20th International Conference, {FOSSACS} 2017, Held as Part of the European Joint Conferences on Theory and Practice of Software, {ETAPS} 2017, Uppsala, Sweden, April 22-29, 2017, Proceedings}, volume 10203 of {\em Lecture Notes in Computer Science}, pages 162--178, 2017.
\newblock \href {https://doi.org/10.1007/978-3-662-54458-7\_10} {\path{doi:10.1007/978-3-662-54458-7\_10}}.

\bibitem[BMR18]{BusattoGastonMonmegeReynier-18}
Damien Busatto{-}Gaston, Benjamin Monmege, and Pierre{-}Alain Reynier.
\newblock Symbolic approximation of weighted timed games.
\newblock In Sumit Ganguly and Paritosh Pandya, editors, {\em 38th {IARCS} Annual Conference on Foundations of Software Technology and Theoretical Computer Science, {FSTTCS} 2018, December 11-13, 2018, Ahmedabad, India}, volume 122 of {\em Leibniz International Proceedings in Informatics (LIPIcs)}, pages 28:1--28:16. Schloss Dagstuhl--Leibniz-Zentrum f{\"u}r Informatik, 2018.
\newblock URL: \url{https://doi.org/10.4230/LIPIcs.FSTTCS.2018.28}, \href {https://doi.org/10.4230/LIPICS.FSTTCS.2018.28} {\path{doi:10.4230/LIPICS.FSTTCS.2018.28}}.

\bibitem[Bou16]{Bou16}
Patricia Bouyer.
\newblock Erratum to the {FSTTCS}'04 paper {``Optimal Strategies in Priced Timed Game Automata''}.
\newblock Personal Communication, 2016.

\bibitem[BPDG98]{BerPet98}
B{\'e}atrice B{\'e}rard, Antoine Petit, Volker Diekert, and Paul Gastin.
\newblock Characterization of the expressive power of silent transitions in timed automata.
\newblock {\em Fundamenta Informaticae}, 36(2,3):145--182, 1998.

\bibitem[BPR06]{BPR06}
Saugata Basu, Richard Pollack, and Marie-Fran{\c c}oise Roy.
\newblock {\em Algorithms in Real Algebraic Geometry}.
\newblock Springer, 2006.

\bibitem[Bus19]{Bus19}
Damien Busatto{-}Gaston.
\newblock {\em Symbolic controller synthesis for timed systems: robustness and optimality}.
\newblock PhD thesis, Aix-Marseille Universit{\'e}, 2019.
\newblock URL: \url{https://tel.archives-ouvertes.fr/tel-02436831}.

\bibitem[FBBD{\etalchar{+}}23]{fijalkow-23}
Nathana{\"e}l Fijalkow, Nathalie Bertrand, Patricia Bouyer-Decitre, Romain Brenguier, Arnaud Carayol, John Fearnley, Hugo Gimbert, Florian Horn, Rasmus Ibsen-Jensen, Nicolas Markey, Benjamin Monmege, Petr Novotn{\'y}, Mickael Randour, Ocan Sankur, Sylvain Schmitz, Olivier Serre, and Mateusz Skomra.
\newblock Games on graphs, 2023.
\newblock \href {http://arxiv.org/abs/2305.10546} {\path{arXiv:2305.10546}}.

\bibitem[FIJS20]{FearnleyIbsenJensenSavani-20}
John Fearnley, Rasmus Ibsen-Jensen, and Rahul Savani.
\newblock One-clock priced timed games are pspace-hard.
\newblock In Holger Hermanns, Lijun Zhang, Naoki Kobayashi, and Dale Miller, editors, {\em {LICS} '20: 35th Annual {ACM/IEEE} Symposium on Logic in Computer Science, Saarbr{\"{u}}cken, Germany, July 8-11, 2020}, pages 397--409. {ACM}, 2020.
\newblock \href {https://doi.org/10.1145/3373718.3394772} {\path{doi:10.1145/3373718.3394772}}.

\bibitem[HIM13]{DueholmIbsenMilterson_13}
Thomas~Dueholm Hansen, Rasmus Ibsen{-}Jensen, and Peter~Bro Miltersen.
\newblock A faster algorithm for solving one-clock priced timed games.
\newblock In Pedro~R. D'Argenio and Hern{\'{a}}n~C. Melgratti, editors, {\em {CONCUR} 2013 - Concurrency Theory - 24th International Conference, {CONCUR} 2013, Buenos Aires, Argentina, August 27-30, 2013. Proceedings}, volume 8052 of {\em Lecture Notes in Computer Science}, pages 531--545. Springer, 2013.
\newblock \href {https://doi.org/10.1007/978-3-642-40184-8\_37} {\path{doi:10.1007/978-3-642-40184-8\_37}}.

\bibitem[JT07]{JurdzinskiTrivedi-07}
Marcin Jurdzinski and Ashutosh Trivedi.
\newblock Reachability-time games on timed automata.
\newblock In Lars Arge, Christian Cachin, Tomasz Jurdzinski, and Andrzej Tarlecki, editors, {\em Automata, Languages and Programming, 34th International Colloquium, {ICALP} 2007, Wroclaw, Poland, July 9-13, 2007, Proceedings}, volume 4596 of {\em Lecture Notes in Computer Science}, pages 838--849. Springer, 2007.
\newblock \href {https://doi.org/10.1007/978-3-540-73420-8\_72} {\path{doi:10.1007/978-3-540-73420-8\_72}}.

\bibitem[LMS04]{LarMar04}
Fran{\c{c}}ois Laroussinie, Nicolas Markey, and Philippe Schnoebelen.
\newblock Model checking timed automata with one or two clocks.
\newblock In Philippa Gardner and Nobuko Yoshida, editors, {\em {CONCUR} 2004 - Concurrency Theory, 15th International Conference, London, UK, August 31 - September 3, 2004, Proceedings}, volume 3170 of {\em Lecture Notes in Computer Science}, pages 387--401. Springer, 2004.
\newblock \href {https://doi.org/10.1007/978-3-540-28644-8\_25} {\path{doi:10.1007/978-3-540-28644-8\_25}}.

\bibitem[MPR21]{MonmegeParreauxReynier-ICALP21}
Benjamin Monmege, Julie Parreaux, and Pierre{-}Alain Reynier.
\newblock Playing stochastically in weighted timed games to emulate memory.
\newblock In Nikhil Bansal, Emanuela Merelli, and James Worrell, editors, {\em 48th International Colloquium on Automata, Languages, and Programming, {ICALP} 2021, July 12-16, 2021, Glasgow, Scotland (Virtual Conference)}, volume 198 of {\em LIPIcs}, pages 137:1--137:17. Schloss Dagstuhl - Leibniz-Zentrum f{\"{u}}r Informatik, 2021.
\newblock URL: \url{https://doi.org/10.4230/LIPIcs.ICALP.2021.137}, \href {https://doi.org/10.4230/LIPICS.ICALP.2021.137} {\path{doi:10.4230/LIPICS.ICALP.2021.137}}.

\bibitem[MPR22]{MonPar22}
Benjamin Monmege, Julie Parreaux, and Pierre{-}Alain Reynier.
\newblock Decidability of one-clock weighted timed games with arbitrary weights.
\newblock In Bartek Klin, Slawomir Lasota, and Anca Muscholl, editors, {\em 33rd International Conference on Concurrency Theory, {CONCUR} 2022, September 12-16, 2022, Warsaw, Poland}, volume 243 of {\em LIPIcs}, pages 15:1--15:22. Schloss Dagstuhl - Leibniz-Zentrum f{\"{u}}r Informatik, 2022.
\newblock URL: \url{https://doi.org/10.4230/LIPIcs.CONCUR.2022.15}, \href {https://doi.org/10.4230/LIPICS.CONCUR.2022.15} {\path{doi:10.4230/LIPICS.CONCUR.2022.15}}.

\bibitem[Rut11]{Rutkowski-11}
Michal Rutkowski.
\newblock Two-player reachability-price games on single-clock timed automata.
\newblock In Mieke Massink and Gethin Norman, editors, {\em Ninth Workshop on Quantitative Aspects of Programming Languages, {QAPL} 2011, Saarbr{\"{u}}cken, Germany, April 1-3, 2011}, volume~57 of {\em {EPTCS}}, pages 31--46, 2011.
\newblock \href {https://doi.org/10.4204/EPTCS.57.3} {\path{doi:10.4204/EPTCS.57.3}}.

\bibitem[Tar55]{Tarski-55}
Alfred Tarski.
\newblock {A lattice-theoretical fixpoint theorem and its applications.}
\newblock {\em Pacific Journal of Mathematics}, 5(2):285 -- 309, 1955.

\bibitem[Win93]{Winskel_93}
Glynn Winskel.
\newblock {\em The formal semantics of programming languages - an introduction}.
\newblock Foundation of computing series. {MIT} Press, 1993.

\end{thebibliography}

\newpage

\appendix

\section{Continuity of the value function on closure of regions}
\label{app:proof-val_continue}

\lemValContinue*

The main ingredient of our proof is, given a strategy $\minstrategy$
of \MinPl in $\rgame$, a location $\loc=(\qloc, \reg)$ of $\rgame$, and
valuations $\val,\val'\in \overline \reg$ (and not only $\val,\val' \in \reg$
as in the proof of \cite[Theorem~3.2]{brihaye2021oneclock}), to show how
to build a strategy $\minstrategy'$ in $\rgame$ and a
length-preserving function $g$ that maps plays of $\rgame$ starting in
$(\loc,\val')$ and conforming to $\minstrategy'$ to plays of
$\rgame$ starting in $(\loc,\val)$ conforming to $\minstrategy$ with
similar behaviour and weight. More precisely, we define $\minstrategy'$
and $g$ by induction on the length $k$ of the finite play that is
given as an argument and relies on the following set of induction
hypotheses:

\textbf{Induction hypothesis:} There exist a strategy $\minstrategy'$,
only defined on plays of length at most $k-1$ starting in
$(\loc,\val')$, and a function $g$ mapping plays of length $k$
starting in $(\loc,\val')$ conforming to $\minstrategy'$ to plays of
length $k$ starting in $(\loc,\val)$ conforming to $\minstrategy$
such that for all plays
$\play'=(\loc_0=\loc,\val'_0=\val')\moveto{\delay'_0,\trans'_0}
\cdots\moveto{\delay'_{k-1},\trans'_{k-1}} (\loc_k,\val'_k)$
conforming to $\minstrategy'$, letting
$(\loc_0,\val_0=\val)\moveto{\delay_0,\trans_0}
\cdots\moveto{\delay_{k-1},\trans_{k-1}} (\loc_{k},\val_{k})$
the play $g(\play')$, we have:
\begin{enumerate}
	\item \label{item:3}
	$|\val_k-\val'_k|\leq |\val-\val'|$;
	\item \label{item:4}
	$\weightC{(\play')} \leq \weightC{(g(\play'))} +
	\maxPriceLoc(|\val-\val'| -|\val_k-\val'_k|)$.
\end{enumerate}
We note that no property is required on the strategy $\minstrategy'$
for finite plays that do not start in~$(\loc,\val')$. Moreover, by the
invariants of~$\rgame$, we have that for every $i\in\{0,\ldots,k\}$,
$\val_i$ and $\val'_i$ belong to the interval $\overline{I_i}$ such
that $\loc_i = (\qloc_i, I_i)$.

Let us explain how this result would imply the desired result before 
going through the induction itself, i.e.~why 
$\val\mapsto \Value_\rgame((\qloc,I),\val)$ is continuous over 
$\overline \reg$. We remark first that the result directly implies that 
if the value of the game is finite for some valuation $\val$ in
$\overline \reg$, then it is finite for all other valuation $\val'$ in
$\overline \reg$. Indeed, a finite value of the game in $(\loc,\val)$ 
implies that there exists a strategy $\minstrategy$ such that every 
play starting in $(\loc,\val)$ and conforming to it reaches a target 
location in a final valuation such that the final weight function applying 
in this last configuration is finite. Moreover, denoting $\minstrategy'$ 
the strategy obtained from $\minstrategy$ thanks to the above result, 
any play $\play'$ starting in $(\loc,\val')$ and conforming to 
$\minstrategy'$ reaches a target location (since $g(\play')$ does as a 
play conforming to $\minstrategy$). Moreover, its final weight 
function is finite as the final valuation of $\play'$, and $g(\play')$ 
sit in the same region and, by hypothesis, a final weight function 
is either always finite or always infinite within a region.

Now, assuming the value of the game is finite over $\overline \reg$ and 
we show that the value is continuous over $\overline \reg$. To do it, 
we show that, for all $\val\in \overline \reg$, for all $\varepsilon>0$,
there exists $\delta>0$ such that for all $\val' \in \overline \reg$ with
$|\val-\val'|\leq \delta$, we have
$|\Value_\rgame(\loc,\val)-\Value_\rgame (\loc,\val')|\leq \varepsilon$. 
To this end, we can show that:
\begin{equation}
\label{eq:lemma-continuity}
|\Value_\rgame (\loc,\val)-\Value_\rgame(\loc,\val')| 
\leq (\maxPriceLoc + K)|\val-\val'|
\end{equation}
where $K$ is the greatest absolute value of the slopes appearing
in the piecewise affine functions within $\weightT$.  Indeed, assume
that this inequality holds, and consider $\val\in \overline I$ and a
positive real number $\varepsilon$. Then, we let
$\delta=\frac{\varepsilon}{\maxPriceLoc+ K}$, and we consider a
valuation $\val'$ such that $|\val - \val'|\leq\delta$. In this case,
\eqref{eq:lemma-continuity} becomes:
\begin{displaymath}
|\Value_\rgame (\loc,\val)-\Value_\rgame (\loc,\val')| \leq (\maxPriceLoc+
K)|\val-\val'|\leq (\maxPriceLoc+ K)
\frac{\varepsilon}{\maxPriceLoc+ K}\leq\varepsilon \,.
\end{displaymath}
Thus, proving~\eqref{eq:lemma-continuity} is sufficient to establish
continuity. 

On the other hand, \eqref{eq:lemma-continuity} is equivalent~to:
\begin{displaymath}
\Value_\rgame (\loc,\val)\leq \Value_\rgame (\loc,\val') +
(\maxPriceLoc+ K)|\val-\val'|
\quad \text{and} \quad
\Value_\rgame (\loc,\val')\leq \Value_\rgame (\loc,\val) + 
(\maxPriceLoc + K)|\val-\val'| \,.
\end{displaymath}
As those two last equations are symmetric with respect to $\val$ and
$\val'$, we only have to show either of them.  We thus focus on the
latter, which, by using the upper value, can be reformulated as: for
all strategies $\minstrategy$ of \MinPl, there exists a strategy
$\minstrategy'$ such that
\begin{displaymath}
\Value_\rgame^{\minstrategy'}(\loc,\val')\leq
\Value_\rgame^{\minstrategy} (\loc,\val)+(\maxPriceLoc + 
K)|\val-\val'| \,.
\end{displaymath} 
We note that this last equation is equivalent to
saying that there exists a function $g$ mapping plays $\play'$ from
$(\loc,\val')$ conforming to $\minstrategy'$ to plays
from~$(\loc,\val)$ conforming to $\minstrategy$ such that, for all
such $\play'$ the final valuations of $\play'$ and $g(\play')$ differ
by at most $|\val-\val'|$ and
\begin{displaymath}
\weightC{(\play')} \leq
\weightC{(g(\play'))}+\maxPriceLoc|\val-\val'|
\end{displaymath}
which is exactly what we claimed induction achieves since
$-|\val_k-\val'_k|\leq 0$.  Thus, to conclude this proof, let us now
define $\minstrategy'$ and $g$, by induction on the length $k$ of
$\play'$.

\textbf{Base case $k=0$: } In this case, $\minstrategy'$ does not have
to be defined since there are no plays of length $-1$. Moreover, in
that case, $\play' = (\loc,\val')$ and $g(\play')=(\loc,\val)$, in
which case both properties are trivial.

\textbf{Inductive case: } Let us suppose now that the construction is
done for a given $k\geq 0$ and perform it for $k+1$. We start with
the construction of $\minstrategy'$. To that extent, we consider a play
$\play'=(\loc_0=\loc,\val'_0=\val')\moveto{\delay'_0,\trans'_0}
\cdots\moveto{\delay'_{k-1},\trans'_{k-1}} (\loc_k,\val'_k)$
conforming to $\minstrategy'$ (provided by induction hypothesis)
such that $\loc_{k} \in \LocsMin$ . Let $(\delay, \trans)$ be the 
choice of delay and transition made by $\minstrategy$
on $g(\play')$, i.e.~$\minstrategy(g(\play'))=(\delay, \trans)$. 
Then, we define $\minstrategy'(\play')=(\delay', \transition)$ where
$\delay' = \max(0, \val_{k}+t - \val'_{k})$.  The delay $\delay'$ 
respects the guard of transition $\trans$, as can be seen from
\figurename~\ref{fig:deltaPrime}. Indeed, either
$\val_{k}+\delay = \val'_{k}+\delay'$ (cases (a) and (b) in 
\figurename~\ref{fig:deltaPrime}) or $\val_{k} \leq \val_{k} + 
\delay \leq \val'_{k}$ (case (c) in \figurename~\ref{fig:deltaPrime} 
where $\delay'=0$), in which case $\val'_{k}$ is in the same closure 
of region as $\val_{k}+\delay$ since $\val_k$ and $\val'_k$ are in 
the same closure of region by induction hypothesis: we conclude by 
noticing that the guard of $\trans$ is closed.

\begin{figure}[tbp]
	\centering
	\begin{tikzpicture}
	\begin{scope}
	\draw[->,dashed] (-0.5,0) -- (2.5,0);  
	\node  at (0,-0.3) {$\val'_{k}$};
	\node (vkp) at (0,0) {$\bullet$};
	\node  at (1,0.3) {$\val_{k}$};
	\node (vk) at (1,0) {$\bullet$};
	\node (vkdelta) at (2,0) {$\bullet$};
	
	\node at (1,-1.5) {(a)};
	
	\draw[->] (vk) to[bend left] node[midway, above] {$t$} (vkdelta);
	\draw[->] (vkp) to[bend right]  node[midway, below]  {$t'$} (vkdelta);
	\end{scope}
	
	\begin{scope}[xshift = 4cm]
	\draw[->,dashed] (-0.5,0) -- (2.5,0);  
	\node  at (0,-0.3) {$\val_{k}$};
	\node (vk) at (0,0) {$\bullet$};
	\node  at (1,0.3) {$\val'_{k}$};
	\node (vkp) at (1,0) {$\bullet$};
	\node (vkdelta) at (2,0) {$\bullet$};
	
	\node at (1,-1.5) {(b)};
	
	\draw[->] (vk) to[bend right] node[midway, below] {$t$} (vkdelta);
	\draw[->] (vkp) to[bend left]  node[midway,above]  {$t'$} (vkdelta);
	\end{scope}

	\begin{scope}[xshift = 8cm]
	\draw[->,dashed] (-0.5,0) -- (2.5,0);  
	\node  at (0,-0.3) {$\val_{k}$};
	\node (vk) at (0,0) {$\bullet$};
	\node (vkdelta) at (1,0) {$\bullet$};
	\node  at (2,-0.3) {$\val'_{k}$};
	\node (vkp) at (2,0) {$\bullet$};
	
	\node at (1,-1.5) {(c)};
	
	\draw[->] (vk) to[bend right] node[midway, below] {$t$} (vkdelta);
	\draw[->]  (vkp) edge[loop above] node[midway, above] {$t'$} (vkp);
	\end{scope}
	
	\end{tikzpicture}
	\caption{The definition of $\delay'$ when (a)
		$\val'_k \leq \val_k$; (b)
		$\val_k < \val'_k < \val_k + \delay$; (c)
		$\val_k < \val_k+\delay < \val'_k $.}
	\label{fig:deltaPrime}
\end{figure}

Let us now build the mapping $g$. Let
$\play'=(\loc_0=\loc,\val'_0=\val')\moveto{\delay'_0,\trans'_0}
\cdots\moveto{\delay'_{k},\trans'_{k}} (\loc_{k+1},\val'_{k+1})$
be a play conforming to $\minstrategy'$ and let
$\tilde{\play}' =(\loc_0,\val'_0)\moveto{\delay'_0,\trans'_0}
\cdots\moveto{\delay'_{k-1},\trans'_{k-1}} (\loc_{k},\val'_{k})$
its prefix of length~$k$. Using the construction of~$g$ over plays of
length $k$ by induction, the play $g(\tilde{\play}') =
(\loc_0,\val_0=\val)\moveto{\delay_0,\trans_0}
\cdots\moveto{\delay_{k-1},\trans_{k-1}} (\loc_{k},\val_{k})$
satisfies properties \eqref{item:3} and \eqref{item:4}. Then:
\begin{itemize}
	\item if $\loc_k \in \LocsMin$ and
	$\minstrategy(g(\tilde{\play}')) = (\delay,\transition)$, then
	$g(\play')=g(\tilde{\play}')\moveto{\delay,\transition}
	(\loc_{k+1},\val_{k+1})$ is obtained by applying those choices on
	$g(\tilde{\play}')$. By the construction of $\minstrategy'$, we
	moreover have~$\trans'_k=\trans$;
	
	\item if $\loc_k \in \LocsMax$, the last valuation $\val_{k+1}$ of 
	$g(\play')$ is rather obtained by choosing action
	$(\delay, \trans'_k)$ verifying $\delay = \max (0,
	\val'_{k} + \delay'_k  -\val_k)$. We note that transition 
	$\trans'_k$ is allowed since both $\val_k+ \delay$ and~$\val'_k + 
	\delay'_k$ are in the same closure of region (for similar 
	reasons as above).
\end{itemize}
Moreover, by induction hypothesis $\tilde{\play}'$ and $g(\tilde{\play}')$ 
have the same length. 

Now, to prove~\eqref{item:3}, we notice that we always have either 
\begin{displaymath}
\val_k + \delay = \val'_k + \delay'_k 
\quad \text{or} \quad
\val_k \leq \val_k + \delay \leq \val'_k = \val'_k + \delay'_k 
\quad \text{or} \quad
\val'_k \leq \val'_k + \delay \leq \val_k = \val_k + \delay \,.
\end{displaymath}  
In all of these possibilities, we have
$|(\val_k + \delay) - (\val'_k + \delay'_k)| \leq |\val_k - \val'_k|$. 

We finally check property~\eqref{item:4}. Either $\loc_k$ belongs to \MinPl or 
to \MaxPl, using the induction hypothesis, we have:
\begin{align*}
\weightC{(\play')} &=
\weightC{(\tilde{\play}')} + \weight(\trans'_k) + \delay'_k \, \weight(\loc_k) \\ 
&\leq \weightC{(g(\tilde{\play}'))} + \maxPriceLoc(|\val-\val'|
- |\val_k - \val'_k|) + \weight(\trans'_k) + \delay'_k \, \weight(\loc_k) \\ 
&=\weightC{(g(\play'))} + (\delay'_k - \delay) \, \weight(\loc_k) + 
\maxPriceLoc(|\val - \val'| - |\val_k - \val'_k|) \,. 
\end{align*}
To conclude, let us claim that
\begin{equation}
\label{eq:t'-t}
|\delay'_k - \delay|\leq |\val_k-\val'_k| -
|\val'_{k+1}-\val_{k+1}|
\end{equation}
Thus, since $|\weight(\loc_k)|\leq \maxPriceLoc$, we conclude that
\begin{displaymath}
\weightC{(\play')}\leq \weightC{(g(\play'))}
+ \maxPriceLoc(|\val-\val'|
- |\val_{k+1}-\val'_{k+1}|)
\end{displaymath}
which concludes the induction.

To conclude the proof, we prove~\eqref{eq:t'-t}.
First, we suppose that $\trans'_k$ does not contain a reset. In 
particular, we have  $\delay'_k = \val'_{k+1}-\val'_{k}$ and
$\delay = \val_{k+1}-\val_{k}$, thus $|\delay'_k-\delay| = 
|\val'_{k+1} - \val'_{k} - (\val_{k+1} - \val_{k})|$.  Then, two
cases are possible: either $\delay'_k = \max(0, \val_{k}+\delay - \val'_{k})$ or
$\delay = \max (0, \val'_{k}+\delay'_k-\val_k)$. So we have three different
possibilities:
\begin{itemize}
	\item if $\delay'_k + \val'_k = \delay +\val_k$, then
	$\val'_{k+1} = \val_{k+1}$, thus
	\begin{displaymath}
	|\delay'_k - \delay|= |\val_k-\val'_k| =
	|\val_k - \val'_k| - |\val'_{k+1} - \val_{k+1}| \,;
	\end{displaymath}
	
	\item if $\delay=0$, then   
	$\val_k = \val_{k+1} \geq \val'_{k+1} \geq \val'_k$, thus
	\begin{displaymath}
	|\delay'_k - \delay| =
	\val'_{k+1} - \val'_k = (\val_{k} - \val'_k) -
	(\val_{k+1} - \val'_{k+1}) = |\val_k - \val'_k| -
	|\val'_{k+1} - \val_{k+1}| \,;
	\end{displaymath}
	
	\item if $\delay'_k=0$, then
	$\val'_k = \val'_{k+1} \geq \val_{k+1 }\geq \val_k$, thus
	\begin{displaymath}
	|\delay'_k - \delay| =
	\val_{k+1} - \val_k = (\val'_{k} - \val_k) -
	(\val'_{k+1} - \val_{k+1}) = |\val_k - \val'_k| -
	|\val'_{k+1} - \val_{k+1}| \,.
	\end{displaymath}
\end{itemize}
Otherwise, $\trans'_k$ contains a reset, then
$\val'_{k+1} = \val_{k+1} = 0$. If
$\delay'_k = \val_{k} + \delay - \val'_{k}$, we have that
$|\delay'_k - \delay| = |\val_k - \val'_k| $. Otherwise, 
$\delay'_k=0$ and $\delay \leq \val'_k-\val_k$.
In all cases, we have proved \eqref{eq:t'-t}.

\end{document}